\documentclass[11pt,letterpaper]{article}
\usepackage[T1]{fontenc}
\usepackage{mathpazo}
\usepackage{fullpage}
\usepackage{amsmath,amssymb,amsthm,amsfonts}
\usepackage{array}
\usepackage{booktabs}
\usepackage{float}
\usepackage[dvipsnames]{xcolor}
\usepackage{microtype}
\usepackage{tabularx}
\usepackage[style=alphabetic,natbib=true,maxalphanames=9,minalphanames=3,maxcitenames=9,mincitenames=3,maxbibnames=99]{biblatex}
\usepackage{enumitem}
\usepackage{listings}
\usepackage[colorlinks=true,allcolors=magenta]{hyperref}
\usepackage[nameinlink,capitalize,noabbrev]{cleveref}
\usepackage{mathtools}
\usepackage{pifont}
\usepackage{thm-restate}
\usepackage{nicefrac}
\usepackage{tikz}
\usepackage{xspace}

\usepackage[ruled]{algorithm2e} 

\SetAlFnt{\small}
\SetAlCapFnt{\small}
\SetAlCapNameFnt{\small}
\SetAlCapHSkip{0pt}
\IncMargin{-\parindent}
\crefname{algocf}{algorithm}{algorithms}
\Crefname{algocf}{Algorithm}{Algorithms}

\newtheorem{theorem}{Theorem}

\newtheorem{lemma}{Lemma}
\newtheorem{open}{Open Question}
\theoremstyle{definition}
\newtheorem{definition}{Definition}

\theoremstyle{remark}
\newtheorem{remark}{Remark}

\usepackage{tcolorbox}

\renewcommand{\ge}{\geqslant}

\renewcommand{\le}{\leqslant}

\DeclarePairedDelimiter{\set}{\{}{\}}

\DeclarePairedDelimiter{\floor}{\lfloor}{\rfloor}

\newcommand{\N}{\mathbb{N}}

\usepackage{crossreftools}
\AddToHook{cmd/appendix/before}{%
	\crefalias{section}{appendix}%
	\crefalias{subsection}{appendix}
}

\usepackage{changepage}
\usepackage{caption}
\newlength{\arxivpairedheight}
\usepackage{todonotes}

\renewcommand{\epsilon}{\varepsilon}

\DeclareMathOperator{\argmax}{\arg\max}
\newcommand{\myparagraph}[1]{\medskip\noindent\textbf{#1}~}

\newcommand{\calI}{\mathcal{I}}
\newcommand{\calA}{\mathcal{A}}
\newcommand{\calC}{\mathcal{C}}
\newcommand{\pref}{\succ}
\newcommand{\prefp}{\vec{\succ}}
\newcommand{\opt}{\operatorname{OPT}}
\newcommand{\comp}{\operatorname{CR}}
\newcommand{\vs}{\vec{s}}
\newcommand{\kapp}{k\text{-approval}}

\newcommand{\lb}{\operatorname{LB}}
\newcommand{\ub}{\operatorname{UB}}
\newcommand{\level}{\textsc{Level}}
\newcommand{\levelpr}{\textsc{LevelPruning}}
\newcommand{\multiscale}{\textsc{MultiScaleScore}}
\newcommand{\scoretest}{\textsc{ScoreTest}}
\newcommand{\polis}{\href{https://pol.is/home2}{Polis}\xspace}
\newcommand{\remesh}{\href{https://www.remesh.ai/}{Remesh}\xspace}
\newcommand{\preflib}{\href{https://preflib.org/}{PrefLib}}
\newcommand{\score}{\operatorname{score}}
\newcommand{\cost}{\operatorname{cost}}

\lstdefinestyle{pseudocode}{
    basicstyle=\ttfamily\small,
    columns=fullflexible,
    frame=single,
    numbers=left,
    numberstyle=\scriptsize,
    numbersep=8pt,
    xleftmargin=1.5em,
    framexleftmargin=1.2em,
    mathescape=true,
    showstringspaces=false,
    keepspaces=true,
    morekeywords={procedure,while,do,end,choose,query,update,for,if,return}
}

\newenvironment{openunnumbered}{\begin{tcolorbox}[colback=gray!10, colframe=black, rounded corners]\textbf{Open Question:}}{\end{tcolorbox}}

\title{The Art of Calling the Winner by Asking Just Enough Questions:\\Competitive Preference Elicitation with Next-Best Queries}
\author{Nisarg Shah\\University of Toronto\\\texttt{nisarg@cs.toronto.edu} \and Ziqi Yu\\University of Toronto\\\texttt{ziqiyu@cs.toronto.edu}}
\date{}

\begin{document}
\maketitle

\begin{abstract}
  We study active elicitation of agent preferences for collectively choosing among $m$ alternatives using prominent voting rules. We focus on the next-best query model, in which an agent responds to a query by revealing their next favorite alternative, and measure the competitive ratio, which is the worst-case ratio between the number of queries made by the active elicitation algorithm and the minimum number of queries needed to reveal the winning alternative(s) in hindsight. We show that sublinear competitive ratios are achievable for many positional scoring rules, whereas every Condorcet-consistent rule has competitive ratio linear in $m$. For Borda count, we develop two complementary techniques: level-wise pruning, whose analysis extends to general concave scoring rules, and multi-scale score thresholding, which gives an $O(\sqrt m)$ worst-case guarantee for Borda. We also demonstrate strong empirical performance of level-wise pruning on real data.
\end{abstract}

\section{Introduction}\label{sec:intro}

Voting is a fundamental primitive for multiagent decision-making. Beyond political elections, voting mechanisms are routinely invoked to aggregate preferences in human computation systems~\cite{MPC13}, to combine predictions in ensemble systems~\cite{pennock2000normative}, and increasingly to arbitrate among AI agents in multi-LLM architectures~\cite{zhao2024electoral}. In all these settings, a group must collectively select a single alternative from a set of $m$ alternatives based on the ranked preferences of $n$ agents.

Social choice theory provides a rich menu of voting rules, each justified by compelling normative principles~\cite{BCEL+16}. A central modeling assumption underlying this literature is that the complete preference ranking of every agent over the alternatives is available upfront. This assumption is reasonable when $m$ is small, but unrealistic in modern decision-making environments that select from a large pool of alternatives: for example, deliberation platforms such as \polis and \remesh induce voting over thousands of proposed statements~\cite{SBES+21,KSIO23,KTAA+25} and generative systems such as the Habermas machine~\cite{tessler2024ai} find a common ground by searching over an astronomically large space of statements.

This tension has motivated a growing body of work on voting with limited preference information. Given limited information, prior work studies possible and necessary winners---alternatives that win under some and all completions of the available preference information, respectively---under common voting rules~\cite{konczak2005voting,xia2011determining}. When such information can be actively elicited, prior work aims to make just enough queries to reveal the winners and measures the \emph{communication complexity}, i.e., the worst-case number of queries needed~\cite{conitzer2005communication,service2012communication}.


However, most interesting voting rules demand eliciting the entire preference profile in the worst case, unless one makes structural assumptions on the preferences~\cite{conitzer2007eliciting}. Arguably, making many queries is less problematic on \emph{hard instances} in which fewer queries could not have revealed the winners, but more problematic on \emph{easy instances} in which a small number of queries could have revealed the winners in hindsight. This motivates the study of \emph{competitive ratio}~\cite{sleator1985amortized,borodin2005online}, which is the worst-case ratio between the number of queries made by the algorithm and the minimum possible number of queries needed to reveal the winners in hindsight.

\citet{HMSK21} are the first to adopt this approach to social choice. They seek a desirable one-to-one matching of $n$ agents to $n$ objects in the next-best query model, in which an algorithm is allowed to query an agent's $k^{\text{th}}$ most preferred object only after querying her first $k-1$ favorite objects. This model captures sequential top-$k$ interfaces, where the system gradually asks each voter to extend her revealed prefix by one additional alternative, and stops once the winner can be certified. Adapting this to voting, we initiate the study of the \emph{competitive ratio for winner determination under prominent voting rules in the next-best query model}.

With $n$ voters and $m$ alternatives, the competitive ratio is trivially upper bounded by $O(m)$ for most reasonable voting rules: the winners can certainly be determined by eliciting complete preference rankings in $n \cdot m$ queries, whereas any rule that reduces to a simple majority in case of only two alternatives---this is the case for almost all prominent rules---requires making at least one query to at least $\nicefrac{n}{2}$ voters. This leads to our main research question:

\begin{quote}
    \emph{Which voting rules admit an elicitation algorithm with competitive ratio sublinear in the number of alternatives?}
\end{quote}

\subsection{Our Results}\label{sec:our-results}
In \Cref{sec:plurality-approval}, we warm up with a simple analysis proving that plurality, which selects the alternatives that are the top choice of the maximum number of voters, admits an optimal competitive ratio of $2$. This is achieved by the simple rule that elicits all voters' top choices and returns the plurality winners. Then, we extend this analysis to the broader class of $k$-approval rules, which select the alternatives that are among the top $k$ choices of the maximum number of voters for a given value of $k$; note that $k=1$ yields plurality, whereas $k=m-1$ is known as veto. Once again, simply eliciting the top $k$ choices of all the voters and returning the $k$-approval winners turns out to be the best one can do with a competitive ratio of $k$ for $k \in \set{2,\ldots,m-1}$. This is sublinear when $k \in o(m)$.

In \Cref{sec:borda}, we consider Borda count and develop two complementary elicitation algorithms. The first algorithm, $\levelpr$, elicits rankings level-by-level while pruning redundant voters. Its analysis extends to general concave scoring rules; for Borda, it gives an $O(m^{2/3})$ competitive ratio, while a separate lower-bound construction shows that this particular algorithm can incur $\Omega(\sqrt m)$. The second algorithm, $\multiscale$, applies score-upper-bound tests at a sequence of dyadic thresholds and achieves an $O(\sqrt m)$ competitive ratio for Borda. Thus, the two results are complementary in scope: $\levelpr$ applies beyond Borda, whereas $\multiscale$ gives the stronger worst-case asymptotic guarantee for Borda. We also establish a universal lower bound of $3$ against all Borda elicitation algorithms.

In \Cref{sec:condorcet}, we show that no Condorcet-consistent voting rule admits a sublinear competitive ratio in $m$, indicating that this family of voting rules may demand more elicitation than the aforementioned positional scoring rules.

In \Cref{sec:experiments}, we conduct experiments with real datasets from \preflib~\cite{MW13}, which show that our elicitation algorithm achieves even better competitive ratios in practice than the theoretical bounds indicate.

All proofs omitted from the main text appear in the appendix.


\subsection{Related Work}\label{sec:related-work}

\paragraph{Competitive analysis.} While competitive-ratio analysis is standard in the online algorithms literature~\cite{sleator1985amortized,borodin2005online}, its application to active preference elicitation---where the algorithm actively chooses which information to query rather than passively receiving it---is much more recent (see, e.g., \citet{PriceZhao2023}). \citet{HMSK21} introduce the next-best query model and study the competitive ratio for identifying optimal matchings of $n$ agents to $n$ objects. Specifically, they establish a tight bound of $\Theta(\sqrt{n})$ for the criterion of Pareto optimality, while establishing only a lower bound of approximately $\nicefrac{4}{3}$ for a different criterion called rank maximality; \citet{peters2022online} revises the latter to an optimal bound of $\nicefrac{3}{2}$. To the best of our knowledge, our work is the first to adopt this framework to winner determination in voting, where the need to pick a global winner for all $n$ agents rather than matching each of them to a distinct object requires novel algorithmic strategies.

Competitive analysis has also been applied in similar settings. \citet{oren2014online} apply it to an online social choice problem, where the preferences arrive in an online fashion. \citet{chen2018optimal} apply it to analyze algorithms which can extract top $K$ alternatives using noisy pairwise comparisons, with applications to recommender systems.

\paragraph{Partial information.} In social choice theory, decision-making under the uncertainty induced by partial preference information (either given or actively elicited) is well-studied. Beyond the winner determination problem discussed above, it has also been applied to design novel voting rules that optimize welfare~\cite{borodin2022distortion,kempe2020communication} or fairness~\cite{halpern2026representation,ebadian2024metric}.

\section{Model}\label{sec:model}

For $k \in \N$, define $[k] \triangleq \set{1,\ldots,k}$.

\paragraph{Problem instances.} Let $V = [n]$ be a set of $n$ \emph{voters} and $A$ be a set of $m$ \emph{alternatives}. Each voter $v \in V$ has a \emph{preference ranking} (strict total order) $\pref_v$ over the alternatives in $A$. We use $\pi_{\pref_v}(a)$ to denote the rank of alternative $a$ in $\pref_v$, and $\pi_{\pref_v}^{-1}(k)$ to denote the alternative ranked $k$-th in $\pref_v$. For brevity, after this definition we write $\pi_v$ instead of $\pi_{\pref_v}$. We call $\prefp = (\pref_v)_{v \in V}$ the \emph{preference profile}. An instance is given by the tuple $I = (V,A,\prefp)$; we drop $I$ from the notation whenever it is clear from the context. Let $\calI$ be the set of all instances over a fixed set of alternatives $A$ (and any finite set of voters $V$). Effectively, we are fixing the number of alternatives $m$ but allowing any number of voters $n \in \N$.

\paragraph{Voting rules.} A \emph{voting rule} $f : \calI \to 2^A$ maps each instance to a set of (tied) alternatives, termed \emph{winners}. We study the following prominent voting rules.

\begin{itemize}
    \item \emph{Positional scoring rules.} A \emph{scoring vector} $\vs = (s_1, \ldots, s_m)$, with $s_1 \ge s_2 \ge \ldots \ge s_m \ge 0$ and $s_1 > s_m$, awards a score of $s_r$ to each alternative each time it appears in position $r \in [m]$. The corresponding positional scoring rule selects the set of alternatives with the highest total scores, i.e., $\argmax_{a \in A} \sum_{v \in V} s_{\pi_v(a)}$.
    \begin{itemize}
        \item For $k \in [m-1]$, the \emph{$k$-approval rule} ($f_{\kapp}$) awards one point to each of a voter's top $k$ alternatives and zero points to the rest. \emph{Plurality} and \emph{veto} are the special cases $k=1$ and $k=m-1$, respectively.
        \item \emph{Borda count} is given by the scoring vector $(m-1,m-2,\ldots,0)$.
        \item The \emph{harmonic rule} is given by the scoring vector $(1,\nicefrac{1}{2},\ldots,\nicefrac{1}{m})$.
    \end{itemize}
    \item \emph{Condorcet-consistent rules.} Let $n_{a \succ b}(I)$ denote the number of voters who prefer alternative $a$ to $b$ in an instance $I$. Alternative $a$ is a \emph{Condorcet winner} in instance $I$ if $n_{a \succ b}(I) > \nicefrac{n}{2}$ for all $b \in A \setminus \set{a}$. A voting rule $f$ is \emph{Condorcet-consistent} if, on every instance $I$ in which some alternative $a$ is a Condorcet winner, it returns $f(I) = \set{a}$. Two examples are Copeland's rule and the minimax rule.
    \begin{itemize}
        \item The Copeland score of an alternative $a$ in $I$ is obtained by adding $1$ for every alternative $b$ it wins against by pairwise majority ($n_{a \succ b}(I) > n_{b \succ a}(I)$) and $0.5$ for every alternative $b$ it is tied against ($n_{a \succ b}(I) = n_{b \succ a}(I)$). Copeland's rule selects the set of alternatives with the highest Copeland score.
        \item \emph{Minimax rule.} The minimax score of an alternative $a$ in instance $I$ is
        \[
        \min_{b \in A\setminus\set{a}} \bigl(n_{a \succ b}(I)-n_{b \succ a}(I)\bigr).
        \]
        The minimax rule selects the set of alternatives with the highest minimax score.
    \end{itemize}
\end{itemize}

\paragraph{Sequential queries.} A \emph{query sequence} $Q$ is a finite ordered list $((v_t,k_t))_{t \ge 1}$, where $v_t \in V$ and $k_t \in [m]$. Let $|Q|$ denote the length of the list, i.e., the total number of queries. For $t \in [|Q|]$, let $Q[t] \triangleq (v_t,k_t)$ be the $t$-th query, and $Q[1{:}t]$ be the sequence of first $t$ queries; by convention, $Q[1{:}0]$ and its response sequence are empty. Query $(v_t,k_t)$ elicits the $k_t$-th alternative in the preference ranking of voter $v_t$, denoted $R((v_t,k_t),I) \triangleq \pi_{v_t}^{-1}(k_t)$. With slight abuse of notation, let $R(Q[1{:}t],I) \triangleq (R(Q[1],I),\ldots,R(Q[t],I))$ be the responses to the first $t$ queries.

\paragraph{Consistent instances and necessary winners.} Given a query sequence $Q$ and an instance $I=(V,A,\prefp)$, let $\calC(Q,I)$ denote the set of all \emph{consistent instances} $I'=(V,A,\prefp')$ on the same voter and alternative sets that would have yielded the same responses as $I$, i.e., for which $R(Q,I') = R(Q,I)$. We say that alternative $a$ is a \emph{necessary winner} if $a \in f(I')$ for all $I' \in \calC(Q,I)$.

\paragraph{Next-best query model.} A \emph{next-best query sequence} $Q$ elicits any voter's preference ranking from top to bottom, i.e., it elicits any voter $v$'s any $k$-th favorite alternative only after eliciting her top $k-1$ favorite alternatives: formally, for all $t \in [|Q|]$, $v \in V$, and $k \in [m]$, if $Q[t] = (v,k)$, then for all $k' \in [k-1]$, there exists $t' \in [t-1]$ with $Q[t'] = (v,k')$. Let $d(v,Q)$ denote the number of queries posed to $v$ under $Q$, which is the length of the elicited prefix of her preference ranking.

At any intermediate stage, the elicitation algorithm observes only a partial preference profile: for each voter, a prefix of her ranking has been revealed, while the remaining alternatives are still unelicited. Thus, the algorithm must reason under uncertainty about the unobserved parts of the profile and decide which queries are necessary to certify the winner.

\paragraph{Elicitation in the next-best query model.} An \emph{elicitation algorithm} $\calA$ for voting rule $f$ maps every instance $I \in \calI$ to a query sequence $Q \triangleq \calA(I) = ((v_1,k_1),(v_2,k_2),\ldots)$ of some length $\ell$ satisfying three conditions:
\begin{enumerate}
    \item[{[C1]}] (Winner Determination) The query responses must determine the set of winners, i.e., for every $I' \in \calC(Q,I)$, we must have $f(I') = f(I)$.
    \item[{[C2]}] (Next-Best) $Q$ must be a next-best query sequence.
    \item[{[C3]}] (Online Elicitation) The next query must depend only on the elicited responses, i.e., for any $t \in \set{0,\ldots,\ell-1}$ and any consistent instance $I' \in \calC(Q[1{:}t],I)$, we must have $|\calA(I')| \ge t+1$ and $\calA(I')[t+1] = \calA(I)[t+1]$.
\end{enumerate}


\paragraph{Optimal queries and competitive ratio.} Let $Q^\star(I)$ be any shortest sequence satisfying conditions [C1] and [C2] above, but not necessarily [C3], and let $\opt(I)\triangleq |Q^\star(I)|$ be its cost. Note that in the absence of [C3], $Q^\star(I)$ is characterized by the prefix length $d(v,Q^\star(I))$ elicited from each voter $v$. While $\opt(I)$ is the smallest number of queries that reveal the winners under $I$, an online elicitation algorithm $\calA$ oblivious of the instance may end up making more queries before deducing the winners; its \emph{competitive ratio} is given by:
\[
\comp(\calA) = \sup_{I \in \calI} \frac{|\calA(I)|}{\opt(I)}.
\]

All our upper bounds determine the complete set of tied winners. Our lower bounds use the same objective unless explicitly stated otherwise.

\section{Warm Up: Plurality and $k$-Approval}\label{sec:plurality-approval}

Plurality is arguably the most widely used voting rule. Recall that plurality winners are simply the alternatives that are the top choices of the highest number of voters. Consequently, eliciting the top choices of all the voters (in any order) suffices to reveal them with $n$ queries.  \emph{What is the competitive ratio of this algorithm? Is it the best possible?} The following simple analysis shows that it is essentially $2$ and indeed the best possible.

\begin{restatable}{theorem}{thmPlurality}\label{thm:plurality}
    The algorithm that elicits the top choices of all the voters and returns the plurality winners has competitive ratio $2$. No online algorithm for the plurality rule in the next-best query model has competitive ratio less than $2$.
\end{restatable}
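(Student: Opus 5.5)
Upper bound: the algorithm makes exactly $n$ queries, so it suffices to show $\opt(I) \ge n/2$ for every instance, giving ratio at most $2$. (If the claim is meant to be exactly $2$ rather than ``essentially $2$'', a slightly sharper bound such as $\opt(I)\ge \lceil n/2\rceil$ is needed, and the lower-bound family will show the supremum approaches $2$.)

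To prove $\opt(I)\ge n/2$, let $Q$ be any next-best sequence satisfying [C1], and let $U$ be the set of voters with $d(v,Q)=0$, so $|Q| \ge n-|U|$. Suppose $|U| > n/2$. Take any alternative $w\in f(I)$ and any alternative $b\neq w$. Build a consistent instance $I'$ that changes only the voters in $U$, putting $b$ at the top of each of their rankings. Since these voters were never queried, $I'\in\calC(Q,I)$. In $I'$, $b$ has plurality score at least $|U| > n/2$. No other alternative can then reach $n/2$, so $f(I')=\{b\}\neq f(I)$ unless $f(I)=\{b\}$. If $f(I)=\{b\}$, repeat the argument with some other alternative $c$ in place of $b$ (using $m\ge 2$), and again $f(I')\ne f(I)$. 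Either way [C1] fails, so $|U|\le n/2$ and $\opt(I)\ge n/2$. The boundary case $|U|=n/2$ when $n$ is even needs the same care: if $w$ is the unique winner, its score is less than $n$, and giving $b$ the $n/2$ top votes of $U$ makes $b$ at least tied with $w$, which changes the winner set.

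Lower bound: use adversarial instances with two alternatives $a,b$ and $n$ even, or $n=2k+1$. The adversary answers with $a$ as the top choice to the first queried voters. Once $a$ has $\lfloor n/2\rfloor+1$ top votes, $a$ is the sole winner, and $\opt$ only needs to query those $\lfloor n/2\rfloor+1$ voters, assuming the remaining voters are completed with $b$ on top. The online algorithm, however, cannot distinguish voters in advance. The adversary therefore answers $b$ to queries on voters until late: against any deterministic algorithm, it makes the first $\lfloor n/2\rfloor$ queried voters report $b$ and then all later voters report $a$. Queries to second positions are uninformative when $m=2$, since they reveal nothing beyond the top choice. Now take $n=2k+1$. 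Among the first $2k$ queried voters, $k$ report $b$ and $k$ report $a$. The algorithm must still query the last voter, because it is pivotal, so it makes $n$ queries. Adapting the adversary so that the $a$-voters are exactly the final $k+1$ queried gives $\opt=k+1$, since certifying $a$ only requires its $k+1$ supporters. The ratio is $(2k+1)/(k+1)\to 2$, so no algorithm achieves a constant below $2$.

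The main obstacle is the adversary argument: I must verify that the online algorithm truly cannot stop before querying all $n$ voters, and that $\opt$ equals the $k+1$ $a$-voters rather than something smaller. The latter follows from the upper-bound argument, since the unqueried set must be smaller than $k+1$, and only the $a$-voters certify $a$.
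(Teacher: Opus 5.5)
Your proposal is correct and is essentially the paper's proof: you get $\opt\ge n/2$ by placing one challenger atop the more than $n/2$ unqueried voters, and you use an adversary that postpones the decisive top choices, so the online algorithm must query essentially all $n$ tops while querying the $\lfloor n/2\rfloor+1$ supporters of the winner certifies it in hindsight. The check you flagged is immediate in your mirrored version ($b$ first, $n=2k+1$): with any $t<n$ tops revealed, completing every unqueried voter with $a$ versus with $b$ yields different winners (and for $m>2$ deeper queries can be answered arbitrarily, as in the paper); also, $\opt\ge n/2$ already gives an exact ratio of $2$ together with the limit, so no $\lceil n/2\rceil$ sharpening is needed.
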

\begin{proof}
Let us establish an upper bound on the competitive ratio of the algorithm in question and the same lower bound on the competitive ratio of all online algorithms.

\emph{Upper bound.} Because the algorithm makes $n$ queries on every instance, it is sufficient to establish that $\opt(I) \ge \nicefrac{n}{2}$ for all instances $I$. Suppose for contradiction that there is an instance $I$ in which one can conclude that alternative $a$ is a necessary plurality winner after making fewer than $\nicefrac{n}{2}$ queries. Then, more than $\nicefrac{n}{2}$ voters would have received no queries. If all these voters rank another alternative $b$ as their top choice, it would make $b$ the only plurality winner, contradicting the assumption that $a$ is a necessary plurality winner. This implies that $\opt(I) \ge \nicefrac{n}{2}$, so the competitive ratio of the algorithm is at most $2$.

\emph{Lower bound.} Consider any online elicitation algorithm $\calA$. Fix any two alternatives $a,b \in A$ and construct an instance adversarially as follows. The first $\floor{\nicefrac{n}{2}}$ voters $v$ receiving a query of $(v,1)$ (for their top choice) respond with alternative $a$. The next $\floor{\nicefrac{n}{2}}-1$ voters $v$ receiving a query of $(v,1)$ (for their top choice) respond with alternative $b$. Any voter $v$ receiving a query of $(v,k)$ for $k > 1$ responds with an arbitrary consistent response (i.e., with any alternative that they have not revealed in response to a query $(v,k')$ for any $k' < k$). It is easy to see that this information is not sufficient to deduce the plurality winner as the gap between the plurality scores of $a$ and $b$ is $1$ and there are at least two voters who have not revealed their top choices. Thus, the algorithm must make at least $2\floor{\nicefrac{n}{2}}-1$ queries on all instances consistent with the revealed responses thus far. In contrast, consider the instance $I$ in this family, in which $a$ is the top choice of $\floor{\nicefrac{n}{2}}+1$ voters; revealing the top choices of these voters would suffice to deduce $a$ as a necessary plurality winner in $I$. Hence, the competitive ratio of $\calA$ is
\[
\comp(\calA) \ge \lim_{n \to \infty} \frac{2\floor{\nicefrac{n}{2}}-1}{\floor{\nicefrac{n}{2}}+1} = 2,
\]
as desired.
\end{proof}

Recall that plurality is the $1$-approval rule. Let us move on to the $k$-approval rule, for a given $k \in \set{2,3,\ldots,m-1}$, which selects the alternatives that appear in the top $k$ positions of the maximum number of voters. With careful analysis, the argument for plurality generalizes: simply eliciting the top $k$ positions of all $n$ voters and returning the set of $k$-approval winners yields an optimal competitive ratio of $k$. Note that this optimal ratio does not change when moving from $k=1$ to $k=2$, but increases linearly afterwards.

\begin{restatable}{theorem}{thmkApproval}\label{thm:k-approval}
    For $k \in \set{2,3,\ldots,m-1}$, the algorithm that elicits the top $k$ choices of all the voters and returns the $k$-approval winners has competitive ratio $k$, and no online elicitation algorithm for the $k$-approval rule in the next-best query model has competitive ratio less than $k$.
\end{restatable}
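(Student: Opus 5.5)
The upper bound rests on a necessary-winner inequality and a per-voter charging argument; the lower bound is an adversary argument, and I expect it to be the harder half. The algorithm makes exactly $nk$ queries on every instance, so the upper bound reduces to showing $\opt(I) \ge n$ for every instance $I$ when $k \ge 2$.

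\emph{Upper bound.} Fix an instance $I$, a shortest certifying next-best sequence $Q^\star$, a winner $a \in f_{\kapp}(I)$, and any other alternative $b$. Because $Q^\star$ determines the winner set, in every consistent completion the approval count of $b$ is at most that of $a$. Taking the completion that is worst for $a$ against $b$ gives $\lb(a) \ge \ub(b)$. Here $\lb(a)$ is the number of voters whose revealed prefix places $a$ in the top $k$. $\ub(b)$ is the number of voters for whom $b$ could still be in the top $k$. I would write $\lb(a)-\ub(b)=\sum_v c_v$ with $c_v \in \set{-1,0,1}$ and classify voters by $d(v,Q^\star)$:
\begin{itemize}
\item an unqueried voter has $c_v=-1$;
\item a voter with $1 \le d(v,Q^\star) < k$ has $c_v \le 0$, since $b$ can still land in her top $k$ unless $b$ was already revealed, in which case $b$ is approved;
\item $c_v=+1$ requires $b$ to be excluded from her top $k$, which forces $d(v,Q^\star)\ge k$.
\end{itemize}
With $z$ unqueried voters and $p$ voters of depth at least $k$, the inequality gives $p \ge z$. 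Then $|Q^\star| \ge kp + (n-z-p) \ge n + (k-2)p \ge n$ for $k\ge 2$. This is exactly where $k\ge2$ enters, and it gives competitive ratio at most $k$.

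\emph{Lower bound.} Given an arbitrary online algorithm $\calA$, I would build an adversary in the style of the plurality proof. It answers each query $(v,j)$ with $j \le k$ so that the approval counts of all alternatives seen so far stay as balanced as possible. It rotates through alternatives so that no voter's revealed prefix excludes any alternative from her top $k$ until that prefix has length $k$. The goal is that, while many voters remain below depth $k$, there are two completions with different winner sets. That would force $\calA$ to make roughly $k$ queries to almost every voter, for a total of $(1-o(1))nk$.

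Simultaneously I need a consistent hindsight instance with $\opt \le (1+o(1))n$. Here the earlier analysis is a guide: equality requires $p \approx 0$ and $z \approx 0$, so almost every voter is queried exactly once. Certification must therefore come from the revealed top choices concentrating on a designated winner $a$. I would exploit the fact that, on the instance actually realized, those top choices could have been arranged after the fact to make the margin of $a$ over every other alternative exceed the number of unresolved approvals. Every voter is treated symmetrically until queried, so the adversary can decide the hindsight identities of top choices lazily.

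\emph{Main obstacle.} Reconciling these two requirements is the step I expect to be hardest. A single revealed top choice per voter gives $\ub(b)=n$ for every unrevealed $b$, so to get $\opt \approx n$ I will probably need $m$ small relative to how the adversary spreads lower positions. Alternatively, I may need a more delicate scheme in which the approvals at positions $2,\ldots,k$ are disjoint from a few heavy alternatives, making $a$ certified after depth-one queries. I would first try the case $k=2$ with $n\to\infty$, mimicking the $\floor{n/2}$ split of \Cref{thm:plurality}, and then generalize by padding.
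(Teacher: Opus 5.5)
Your upper bound is correct, and it is cleaner than the paper's. You compare the winner $a$ with a single challenger $b$ and classify voters by depth, which gives $n-z\ge\lb(a)\ge\ub(b)\ge n-p$. Hence $p\ge z$ and $|Q^\star|\ge n+(k-2)p\ge n$. You invoke a completion that realizes $\lb(a)$ and $\ub(b)$ at once; it exists because $k\le m-1$, so an unrevealed $a$ can be placed last without approval. The paper instead truncates the certificate at depth $k$ and bounds the revealed approvals of $w$ by $q-(k-1)F$. It then averages the exclusion counts over all $m-1$ challengers to get $(m-1)(n-q)\le\bigl(m-k-(m-1)(k-1)\bigr)F\le 0$. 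Both routes give $\opt\ge n$; yours avoids the averaging.

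The lower bound, however, is a plan rather than a proof, and the obstacle you flag is left unresolved. The balancing adversary you propose also works against you: balanced depth-one answers rule out any certificate of cost close to $n$. The missing idea is that $p$ and $z$ need only be $o(n)$, not near zero. With $k$ and $m$ fixed and $n\to\infty$, a certificate can afford $O(km)$ extra queries.

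The paper uses a lopsided profile built as follows:
\begin{itemize}
\item $a$ is ranked first by $n-1$ voters;
\item one challenger $b$ occupies position $k$ in all but two ballots;
\item every other alternative is excluded from at least two top-$k$ sets.
\end{itemize}
In hindsight, query the $n-1$ top choices equal to $a$, plus two full top-$k$ prefixes excluding each challenger. This gives $\lb(a)=n-1>n-2\ge\ub(x)$ for every $x\ne a$, at cost at most $n-1+2k(m-1)$. This resolves your concern that one top choice per voter leaves $\ub(x)=n$.

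Online, the adversary answers $b$ at position $k$ to the first $n-2$ voters queried there. It reveals the two $b$-excluding ballots only to the last two. So until every voter has been queried through depth $k$, some completion lets $b$ tie or beat $a$, and the winner set is not determined. This forces $nk$ queries, and $nk/(n-1+2k(m-1))\to k$.

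The uncertainty must be concentrated at depth exactly $k$ on a single near-tied challenger, not spread out by balancing. No lazy choice of top choices is needed.
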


\section{Borda Count (and Concave Scoring Rules)}\label{sec:borda}

Let us now move on to Borda count, a widely used and one of the most classical voting rules. Suddenly, we discover that the analysis gets a lot more complex than \Cref{thm:plurality}.

First, we can show a slightly improved lower bound for Borda count compared to the bound for plurality in \Cref{thm:plurality}, although this requires a significantly more involved construction.
\begin{restatable}{theorem}{thmBordaLower}\label{thm:borda-lower}
    The competitive ratio of any online elicitation algorithm for Borda count is at least $3$.
\end{restatable}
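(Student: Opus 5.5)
The plan is an adversary argument in the spirit of the lower bound in \Cref{thm:plurality}, but with an explicit family of instances. The goal is to show that on some instance OPT is about $n/3$ (plus lower-order terms), while any online algorithm can be forced to make about $n$ queries. I would work with three alternatives $a,b,c$ and, if $m>3$, pad the other alternatives to the bottom so they are never relevant; I will first do the case $m=3$ with Borda vector $(2,1,0)$.

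\emph{Hindsight instance.} Take an instance where roughly a $(1/3+\epsilon)$ fraction of voters rank $a\succ b\succ c$, a $(1/3+\epsilon)$ fraction rank $a\succ c\succ b$, and the rest are split between $b$-first and $c$-first. The intended feature is that revealing only the top choices of about $n/3+O(1)$ voters with top choice $a$ already certifies $a$. To check this, bound the worst completion: each such revealed voter gives $a$ a margin of $2$ over each of $b$ and $c$. An unrevealed voter can give $b$ at most $2$ more than $a$, but only by ranking $c$ second, which helps $c$ relative to $a$ by at most $1$. The aim is to choose proportions and revealed second positions so that a certificate of size about $n/3$ exists, while the adversary in the online phase can keep the outcome ambiguous until about $n$ queries. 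If top-choice-only certificates give only a $1/2$ ratio, I would instead use certificates that reveal depth $2$ for a few voters, optimizing via a small LP.

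\emph{Adversary strategy.} This is the hard part. Whenever the algorithm queries a new voter's top choice, the adversary answers by rotating among $a,b,c$ so that the three top counts stay balanced. Whenever it queries a second position, the adversary answers to keep pairwise Borda differences balanced. I would maintain an invariant: as long as fewer than about $n-O(1)$ queries have been made, there remain two consistent completions with different winner sets. The slack from unqueried voters, each contributing up to $2$ points to any alternative, must exceed the current score gaps, which the balancing keeps $O(1)$. Because each alternative's score can be swung by unrevealed positions, the algorithm cannot stop early. I must also check that a depth-$1$ voter still leaves a swing of $1$ between the two lower alternatives, so the algorithm is effectively forced to query nearly every voter at depth $1$ and many at depth $2$, giving a total of at least about $n$.

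\emph{Matching the two.} I must exhibit an instance consistent with the adversary's answers on which OPT is about $n/3$. Since the adversary fixes only the queried prefixes, I would choose the completion so that the unqueried parts (the deeper positions and any voters not yet touched) are as favorable to $a$ as possible. The consistent instance is then an $a$-landslide over the voters it controls, and a small subset of about $n/3$ revealed $a$-tops certifies it. The ratio tends to $3$ as $n\to\infty$. The main obstacle is the tension here: the adversary's balanced revealed answers must still allow a completion whose certificate is only one third of the queries. I expect to resolve it by having the adversary answer $a$ on a third of the top queries and split $b$ and $c$ evenly on the rest, with second positions placing $a$. In hindsight, the $a$-top voters plus the $a$-second information form a cheap certificate, and I would tune the counts (for example a $1/3$ fraction of $a$-tops plus $O(1)$) carefully against the worst-case completion bound above.
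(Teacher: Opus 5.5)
\paragraph{Gap: the target $\opt\approx n/3$ is impossible.} No Borda certificate costs anywhere near $n/3$. Take $m=3$ with scores $(2,1,0)$, and any certificate with depths $e_i$ and $q=\sum_i e_i$ queries.
\begin{itemize}
\item Each challenger $x$ gets upper-bound contribution at least $2-e_i$ from voter $i$, so $\ub(x)\ge 2n-q$.
\item Only the at most $q$ voters with $e_i\ge 1$ can contribute to $\lb(w)$, at most $2$ each, so $\lb(w)\le 2q$.
\end{itemize}
The necessary condition $\lb(w)\ge\ub(x)$ then forces $q\ge 2n/3$. This is the bound $\opt\ge n(m-1)/m$ proved in the analysis of \multiscale.

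Your certificate count errs in two places. First, a revealed $a$-top adds $2$ to $\lb(a)$ but still leaves $1$ in both $\ub(b)$ and $\ub(c)$. Its net gain is therefore $1$, not $2$, against a loss of $2$ per untouched voter. Second, the condition is pairwise, so you cannot offset $b$'s possible gain against $c$'s within a single completion; a different completion may be used against each challenger. Hence top-only certificates need $2n/3$ revealed $a$-tops, and revealing depth $2$ for a few voters cannot beat the general bound. An adversary that forces only ``about $n$'' queries therefore yields a ratio of at most about $3/2$, however well it balances.

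\paragraph{What is needed, and how the paper does it.} To reach $3$, the adversary must force essentially complete elicitation: the first \emph{and} second positions of every voter, i.e., $2n$ queries. The final profile must also admit a certificate that meets the $2n/3$ floor. The paper achieves this with an exactly tied base profile: $n=3t$, with $2t$ ballots $a\succ b\succ c$ and $t$ ballots $b\succ c\succ a$. The scores are $(4t,4t,t)$ and the winners are $\{a,b\}$.

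The adversary answers first positions with $b$ for the first $t$ new voters and with $a$ afterwards. It answers every second position as in the base profile except the very last one. That one it flips, to $a\succ c\succ b$ or $b\succ a\succ c$, so that $a$ becomes the unique winner. As long as some second position is unqueried, both the tied base profile and the profile with the flip placed at that position are consistent with the transcript. So the algorithm cannot stop, and it pays $6t=2n$. The final profile, however, is certified by the first positions of the $2t$ voters ranking $a$ first plus two further queries, i.e., $2t+2$ queries.

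The missing idea in your plan is this single hidden tie-breaking flip, which can be moved to whichever second position remains unqueried. Balancing top counts and keeping $O(1)$ score gaps does not by itself force the whole second level to be elicited.
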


How do we achieve a low competitive ratio for Borda count? For plurality, recall that we simply elicited the top choices of all voters. This information does not suffice to identify necessary Borda count winners.

What if we elicit the entire preference profile with $n \cdot m$ queries? Then, we face a terrible competitive ratio of $m$ due to the simple instance in which all voters happen to rank the same alternative first and $O(n)$ queries would suffice to reveal it as the unique Borda count winner.

This suggests the following refined approach. Elicit voters' preference rankings \emph{level by level}, i.e., first everyone's top choices,\footnote{Voters in each level are elicited in an arbitrary order.} then everyone's second-best choices, and so on, and, crucially, \emph{stop} as soon as the set of Borda winners is identified. This algorithm, which we term $\level$, works well on the simple instance above, but still incurs the same $\Omega(m)$ competitive ratio.

\begin{restatable}{proposition}{propLevelBad}\label{prop:level-bad}
    The competitive ratio of $\level$ for Borda count is $\Omega(m)$.
\end{restatable}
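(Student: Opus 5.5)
The plan is to build, for each $m$, a family of instances on which $\level$ is forced to descend $\Omega(m)$ levels for a constant fraction of the voters, while some offline next-best sequence certifies the Borda winners with only $O(n)$ queries. Since $\opt(I) \ge \nicefrac{n}{2}$ always (by the majority argument in \Cref{thm:plurality}), the target is a family with $|\level(I)| = \Omega(nm)$ and $\opt(I) = O(n)$.

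The exploitable weakness of $\level$ is that it elicits \emph{uniformly by depth}. It cannot concentrate queries on the few voters whose deep positions actually resolve the remaining uncertainty. The construction should therefore make the uncertainty resolvable cheaply by a non-uniform sequence, but not by any shallow uniform prefix.

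\textbf{A first attempt that fails.} I would start from two camps: $x$ voters ranking $a$ first and $b$ last, and $y$ voters ranking $b$ first and $a$ second, with other alternatives placed so that they are never threats. After $\ell$ levels, the best case for $b$ is $y(m-1)+x(m-\ell-1)$, while $a$ already has at least $x(m-1)+y(m-2)$ locked in. The gap that $\level$ must close shrinks at rate $x$ per level, i.e., at rate one per query. Fully eliciting a single $a$-voter closes it at the same rate of roughly one per query. So this simple two-camp tradeoff is balanced and yields only a constant ratio. The construction must instead create threats whose elimination is cheap for the offline sequence but expensive under level-wise elicitation.

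\textbf{Refined construction.} I would use many potential threats $c_1,\ldots,c_{m'}$ with $m' = \Theta(m)$.
\begin{enumerate}
    \item Each threat $c_i$ sits unrevealed deep in the rankings of \emph{most} voters, so it remains a possible winner until its position is revealed.
    \item Its true position in a small dedicated group of voters is high, so revealing that group's short prefixes kills $c_i$.
    \item The dedicated groups jointly cover $O(n)$ queries, while a uniform prefix only kills $c_i$ once depth $\Omega(m)$ is reached across all voters.
\end{enumerate}
Concretely, I would arrange that $a$ wins by a margin that is certified once each $c_i$ is seen in a constant fraction of its own block of about $n/m'$ voters. Each block's top few positions should fix $c_i$'s score up to slack smaller than the margin. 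Meanwhile the remaining voters rank the threats in an order whose first $\Omega(m)$ entries are filler alternatives. As a result, a shallow uniform prefix leaves every $c_i$ with an optimistic completion exceeding $a$'s guaranteed score.

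I would then verify two bounds:
\[
\opt(I) = O(n), \qquad |\level(I)| \ge \Omega(m)\cdot n .
\]
For the upper bound on $\opt(I)$, I would exhibit an explicit next-best sequence: top-$O(1)$ elicitation for everyone, plus dedicated-block elicitation, and then compute that no consistent completion beats $a$. For the lower bound on $\level$, I would exhibit, for every depth $\ell \le cm$, a consistent completion in which some $c_i$ ties or beats $a$. That completion would push $a$ to the bottom of all unrevealed suffixes and $c_i$ to the top.

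\textbf{Main obstacle.} The hard step is the counting that makes the offline certificate genuinely cheaper. The slack left for $c_i$ by unrevealed suffixes must exceed $a$'s margin at every shallow uniform depth. Yet that same slack must be eliminable by $O(n/m)$ queries per threat. As the failed attempt shows, naive designs equalize these rates. I would tune the block sizes and $a$'s margin, probably taking the margin to be $\Theta(n)$ and $m' = \Theta(m)$, and check the inequalities directly.
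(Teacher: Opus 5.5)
Your high-level target ($\Omega(nm)$ queries for \level{} against $\opt=O(n)$) is right, but the proposal never commits to a construction. The refined design you sketch also cannot work, because its mechanism runs backwards. Revealing that $c_i$ sits at a high position $r$ in a dedicated ballot lowers $\ub(c_i)$ by only $r-1$. Upper bounds drop only when \emph{other} alternatives are revealed above $c_i$, by at most one point per query, so high placements in small blocks cannot ``kill'' $c_i$.

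Worse, your item~1 contradicts item~3. Every challenger's upper bound starts at $n(m-1)$, and a certificate must push it down to at most $\lb(a)\le\score(a)$. Hence $\opt=O(n)$ forces $D:=n(m-1)-\score(a)=O(n)$. At uniform depth $\ell$, at most $D/\ell$ ballots still hide $a$, so $\lb(a)\ge n(m-1)-D(m-1)/\ell$. A challenger hidden in $U$ ballots has $\ub\le n(m-1)-\ell U$. So a challenger that still keeps \level{} running at depth $\ell=\Theta(m)$ must already be revealed in all but $O(n/m)$ ballots. That is the opposite of being ``unrevealed deep in the rankings of most voters''; buried threats are exactly the ones uniform elicitation eliminates fastest.

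The missing idea is already in your first attempt, which you discarded because of a miscount. \level{} makes $n=x+y$ queries per level, not $x$. It keeps querying the $y$ voters who revealed both $a$ and $b$ by depth~$2$ and can never matter again. So for \level{} the gap closes at rate $x/(x+y)$ per query, versus one per query for an offline sequence that digs only into the $x$ voters. This wasted work is exactly what the proposition exploits.

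Take $y=(m-2)x$, so $a$ wins by exactly $x$. After $\ell\ge2$ levels, $\ub(b)-\lb(a)=x(m-2-\ell)\ge0$ for $\ell\le m-2$. So \level{} makes at least $n(m-2)$ queries, where $n=(m-1)x$. On the other hand, query the $b$-camp to depth~$2$ and the $a$-camp to depth $m-1$. This certifies $a$ with $2y+x(m-1)\le 3n$ queries: $b$ becomes exact with $\score(b)=\lb(a)-x$, and every other alternative ends with upper bound at most $x(m-2)+y(m-3)<\lb(a)$. The ratio is at least $(m-2)/3$ with a single challenger.

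The paper's proof is a symmetric variant of this. It uses $p$ ballots $a\succ b\succ\cdots$, $p$ ballots $b\succ a\succ\cdots$, and one ballot ranking $a$ second-to-last and $b$ last. The bulk is then an exact tie, broken only deep in one ballot. \level{} pays roughly $n(m-1)$ while $\opt\le 2n+m-2$, and letting $n\to\infty$ gives $(m-1)/2$. No family of threats or dedicated blocks is needed.
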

\begin{proof}
    To see this, consider an instance with $n=2p+1$ voters, and a set $A$ of $m$ alternatives. Fix two special alternatives $a,b \in A$. Voters $1,\ldots,p$ rank $a$ first, $b$ second, and the remaining alternatives arbitrarily. Voters $p+1,\ldots,2p$ rank $b$ first, $a$ second, and the remaining alternatives arbitrarily. And voter $2p+1$ ranks $b$ \emph{last}, $a$ second-to-last, and the remaining alternatives arbitrarily in the first $m-2$ positions. Note that $a$ is the unique Borda count winner, but $\level$ will notice a tie between $a$ and $b$ after eliciting two levels in everyone's rankings, and will need to continue eliciting till level $m-1$ before it sees $a$ in the ranking of voter $2p+1$.

    Hence, $\level$ will terminate after $n(m-1)$ queries, whereas $2n+(m-2)$ queries in which all voters reveal their top two choices and voter $2p+1$ further reveals ranks $3,\ldots,m-1$ would suffice to establish $a$ as the unique Borda count winner, implying that the competitive ratio of $\level$ is at least
    \[
    \lim_{n \to \infty} \frac{n(m-1)}{2n+(m-2)} = \frac{m-1}{2} \in \Omega(m).\qedhere
    \]
\end{proof}

In the bad instance in the proof of \Cref{prop:level-bad}, $\level$ performs terribly because after eliciting the first two levels, we can already narrow down to $a$ and $b$ as the only \emph{possible winners} under Borda count. Since voters $1,\ldots,2p$ have already revealed both $a$ and $b$, it would make sense to \emph{prune} them and not query them further. Yet, $\level$ queries all voters at each level, which leads to the bad competitive ratio.

Based on this insight, we next consider an algorithm, termed $\levelpr$ and presented as \Cref{alg:level-pruning}, which mimics $\level$, but after each level, prunes (and makes no further queries to) voters who have revealed all alternatives that can plausibly be winners in the end. Although motivated here by Borda count, the algorithm is defined for an arbitrary positional scoring rule through its score lower and upper bounds. The following result characterizes a condition that allows us to rule out some alternatives that cannot possibly be a winner.

\begin{restatable}{lemma}{possibleWinners}\label{lem:possible-winners}
    Given a next-best query sequence $Q$, its responses $R$, and an alternative $a$, define $\lb(a;Q,R)$ and $\ub(a;Q,R)$ to be the lowest and highest Borda scores of $a$, respectively, across all instances which yield responses $R$ to query sequence $Q$. Then, alternative $a$ is a possible Borda winner under some completion consistent with $R$ only if $\ub(a;Q,R) \ge \max_{b \in A}\lb(b;Q,R)$.
\end{restatable}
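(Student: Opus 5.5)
This is a necessary condition, so I will prove the contrapositive. Suppose $\ub(a;Q,R) < \max_{b \in A}\lb(b;Q,R)$, and let $b^\star$ attain the maximum on the right. I will show that $a$ loses to $b^\star$ in every instance consistent with $R$, so $a$ is never a Borda winner.

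The argument is pointwise in the consistent instance. Fix any $I' \in \calC(Q,I)$, i.e., any instance that yields responses $R$ to $Q$. By the definitions of $\ub$ and $\lb$ as the maximum and minimum over exactly this set of instances, the Borda score of $a$ in $I'$ is at most $\ub(a;Q,R)$. Likewise, the Borda score of $b^\star$ in $I'$ is at least $\lb(b^\star;Q,R)$. Chaining these gives
\[
\score_{I'}(a) \le \ub(a;Q,R) < \lb(b^\star;Q,R) \le \score_{I'}(b^\star).
\]
So $a$ has strictly lower Borda score than $b^\star$ in $I'$, and therefore $a \notin f(I')$.

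One side case needs checking. If $b^\star = a$, the hypothesis would read $\ub(a) < \lb(a)$. This cannot happen, since the consistent set is nonempty (it contains $I$ itself) and so $\lb(a) \le \ub(a)$. Hence $b^\star \neq a$. Because $I'$ was arbitrary, $a$ is not a winner in any completion consistent with $R$, which is the contrapositive.

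There is no real obstacle; the only care needed is that $\lb$ and $\ub$ are extrema over the same set of consistent instances, which the definitions guarantee. The argument does not use that the rule is Borda, so it holds verbatim for any positional scoring rule. That generality matters because \Cref{alg:level-pruning} applies the lemma in that setting. Explicit formulas for $\lb$ and $\ub$ in the next-best model are not needed for this lemma. (For reference: unrevealed alternatives are placed at the best or worst unrevealed positions.)
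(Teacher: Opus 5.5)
Your proof is correct and is essentially the paper's argument: if $\ub(a;Q,R) < \lb(b;Q,R)$ for some $b$, then $a$ scores strictly below $b$ in every consistent completion, so $a$ cannot be a winner. Your check that $b^\star \neq a$ and your remark that nothing Borda-specific is used are harmless additions; the paper does rely on that generality when it applies the same test in \levelpr{} for arbitrary scoring vectors.
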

\begin{proof}
    If there exists $b \in A$ such that $\ub(a;Q,R) < \lb(b;Q,R)$, then $a$ would have a lower Borda score than $b$ in every completion consistent with $R$. Hence, $a$ cannot be a possible winner under any completion in this case.
\end{proof}

\Cref{lem:possible-winners} motivates $\levelpr$ as follows. After each level, the algorithm computes $P=\{a\in A:\textsf{UB}(a)\ge \max_{b\in A}\textsf{LB}(b)\}$, a superset of the set of alternatives that can still be Borda count winners, and prunes active voters who have already revealed all alternatives in $P$. Since lower bounds can only increase and upper bounds can only decrease as more prefixes are revealed, $P$ can only shrink, so a pruned voter has already revealed every alternative that may remain relevant. When no active voters remain, every alternative in $P$ has an exact score (i.e., $\ub(a)=\lb(a)$ for all $a \in P$) and alternatives outside $P$ cannot be possible winners. Further, all alternatives in $P$ must have the same (exact) score and must all be winners because if $a \in P$ has a lower score than $b \in P$, then $\ub(a) < \lb(b)$ would hold, which would have eliminated $a$ from $P$ in the last update, a contradiction. Hence, in the end, the algorithm simply returns $P$.

\begin{algorithm}[t]
\caption{$\levelpr$ for Concave Scoring Rules}
\label{alg:level-pruning}
\DontPrintSemicolon
\KwIn{Voters $V$, alternatives $A$ with $|A|=m$, and a scoring vector $\vs$}
\KwOut{Winner(s) under the scoring rule induced by $\vs$}

Initialize each voter’s revealed prefix to empty\;
$U \gets V$ \tcp*{active voters}

\For{$d=1,\ldots,m$}{
    Query each active voter $i \in U$ for her $d$-th most favorite alternative \tcp*{this is a next-best query; voters in $U$ have revealed their top $d-1$ alternatives in prior iterations}

    \BlankLine
    \ForEach{$a \in A$}{
        $\textsf{LB}(a) \gets$ minimum possible score of $a$ under $\vs$
        \tcp*{place $a$ last in every unrevealed position}

        $\textsf{UB}(a) \gets$ maximum possible score of $a$ under $\vs$
        \tcp*{place $a$ as high as possible among unrevealed positions}
    }

    $\textsf{bestLB} \gets \max_{b \in A} \textsf{LB}(b)$\;
    $P \gets \{a \in A : \textsf{UB}(a) \ge \textsf{bestLB}\}$ \tcp*{superset of possible winners}
    \BlankLine
    Remove every voter from $U$ who has revealed all alternatives in $P$ \tcp*{prune inactive voters}

    \If(\tcp*[f]{no active voters remain}){$U = \emptyset$}{
        \Return $P$\;
    }
}
\end{algorithm}

The analysis of \levelpr{} extends beyond Borda to scoring vectors whose score losses accelerate with rank. This extension both identifies the structural property used by the algorithm and provides a corrected proof of the Borda upper bound.

\begin{definition}[Concave scoring rule]\label{def:concave-scoring}
A scoring vector $(s_1,\ldots,s_m)$ with $s_1>s_m$ is \emph{concave} if
\[
s_j-s_{j+1}\le s_{j+1}-s_{j+2}
\qquad\text{for every }j\in\{1,\ldots,m-2\}.
\]
Its normalized \emph{top gap} is
$\gamma=(s_1-s_2)/(s_1-s_m)$. For an alternative $a$, its normalized deficit is
$\delta(a)=(ns_1-\score(a))/(n(s_1-s_m))$.
\end{definition}

Positive affine transformations of the scoring vector preserve winners, certificates, the execution of \levelpr, concavity, $\gamma$, and all deficits. We may therefore normalize $s_1=1$ and $s_m=0$. Concavity then implies
\[
s_j-s_{j+1}\ge\gamma,
\qquad
s_j\le1-(j-1)\gamma,
\qquad
1-s_{j+1}\le\frac{j}{m-1}.
\]
In particular, $\gamma\le1/(m-1)$, with equality exactly for Borda. Veto is also concave but has $\gamma=0$; it falls into the full-elicitation branch of \Cref{thm:concave-levelpr}. More directly, veto is $(m-1)$-approval, so \Cref{thm:k-approval} already gives the tight competitive ratio $m-1$, connecting this boundary case to the preceding $k$-approval analysis. The harmonic rule is not concave under \Cref{def:concave-scoring}, because its adjacent score gaps decrease with rank.

\begin{theorem}[Concave scoring rules]\label{thm:concave-levelpr}
Let $(s_1,\ldots,s_m)$ be a concave scoring rule with $m\ge3$ and top gap $\gamma\ge0$, and let $w$ be a winner of deficit $\delta$. Then the cost of \levelpr{} is at most
\[
\begin{cases}
n(m-1),&\gamma=0\text{ or }\delta^2>\gamma,\\[2pt]
\dfrac{7}{4}n\dfrac{\delta^{2/3}}{\gamma^{4/3}}+3n+m,
&\gamma>0,\ \gamma^2\le\delta\text{ and }\delta^2\le\gamma,\\[6pt]
5n+m,&\gamma>0\text{ and }\delta<\gamma^2.
\end{cases}
\]
Consequently, when $\gamma>0$, on every profile with $n\ge m$,
\[
\frac{\cost(\levelpr)}{\opt}
\le
\frac{7}{2(2(m-1))^{2/3}\gamma^{4/3}}
+\frac{1}{\sqrt\gamma}+12
=O\!\left(\frac{1}{m^{2/3}\gamma^{4/3}}\right).
\]
\end{theorem}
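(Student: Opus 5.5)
We bound the cost of \levelpr{} by charging queries level by level, using the normalization $s_1=1$, $s_m=0$ and the consequences of concavity listed above. The trivial branch is immediate: \levelpr{} never queries level $m$ usefully, since after $m-1$ levels every active voter's ranking is determined. So its cost is at most $n(m-1)$ always. This settles the case $\gamma=0$ or $\delta^2>\gamma$.

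For $\gamma>0$, the core step is a \emph{pruning lemma}. Fix the winner $w$ with total deficit $D=n\delta$. After level $d$, consider any alternative $a\neq w$ that some still-active voter has not revealed in the top $d$. Then
\[
\ub(a)\le \score(a)+(\text{number of voters not revealing }a)\cdot(1-s_{d+1}),
\]
and $\lb(w)$ is at least $\score(w)$ minus the slack of voters not revealing $w$. Meanwhile $\score(a)\le n$ minus $\gamma$ per voter who does not place $a$ first. We use $1-s_{j+1}\le j/(m-1)$ to bound upper-bound slack and $s_j\le 1-(j-1)\gamma$ to bound actual losses. From these we derive two facts:
\begin{enumerate}[label=(\roman*)]
\item alternatives in $P$ at level $d$ must have small deficit, at most roughly $\delta+d/(m-1)$ per voter;
\item a voter who remains active at level $d$ has not yet revealed some $a\in P$, so that $a$ lost at least $d\gamma$ on her.
\end{enumerate}
Summing over voters, the number of voters active beyond level $d$ is at most about $n(\delta+d/(m-1))/(d\gamma)$ per surviving alternative. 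We also need a count of how many alternatives survive in $P$, which comes from the same averaging: total deficit mass is at least $\gamma\cdot(\#\text{non-top placements})$. The total cost is then $\sum_d|U_d|$. We split it into levels $d\le d^\*$, each costing at most $n$, and levels $d>d^\*$, costing the tail sum, optimizing $d^\*\approx \delta^{2/3}/\gamma^{4/3}$-ish to get $\tfrac74 n\delta^{2/3}\gamma^{-4/3}$. The additive $3n+m$ absorbs the first few levels and the final cleanup, since after $P=\{w\}$ each active voter needs at most one revealing run, and we bound total such runs by $m$ plus $O(n)$. When $\delta<\gamma^2$, $w$ is first for nearly all voters, and we show $P$ collapses within about five levels, giving $5n+m$.

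For the ratio bound we need lower bounds on $\opt$. First, $\opt\ge n/2$ by the majority argument of \Cref{thm:plurality}. Second, a certificate must reveal enough of each voter to witness $w$'s score. Any voter not revealing $w$ contributes at most $\score$ uncertainty, so $\opt\gtrsim n\delta\cdot(m-1)$-type bounds apply. More precisely, each voter with $w$ at rank $j$ must be queried to depth about $j$ unless the slack is covered, giving $\opt\ge c\,n\,\delta/\gamma$-like or $n\sqrt{\delta/\gamma}$-like bounds. We then divide each case's cost by the better of these bounds. In the large-deficit branch, $n(m-1)$ divided by $\opt\ge n\sqrt{\ldots}$ yields the $1/\sqrt\gamma$ term. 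The middle branch divided by $n/2$ together with $\delta\le$ its range gives $\tfrac{7}{2}(2(m-1))^{-2/3}\gamma^{-4/3}$. Using $n\ge m$ absorbs $m$ into the constant $12$.

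The main obstacle is the pruning lemma: making rigorous that every voter still active at level $d$ is charged an actual score loss of order $d\gamma$ against some alternative whose upper bound still competes with $\lb(w)$. This is delicate because $P$ is defined via bounds, not true scores. I would first try charging to the alternative in $P$ with the largest true score among those the voter hasn't revealed, and use concavity so that upper-bound slack $(1-s_{d+1})$ is dominated by accumulated loss.
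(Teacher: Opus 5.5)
There is a genuine gap, and it sits exactly in the pruning step you flagged. The inequality $\ub(a)\le\score(a)+(\text{number of voters not revealing }a)\cdot(1-s_{d+1})$ is false. At a voter of depth $d$ who has not revealed $a$, the slack is $s_{d+1}-s_{\pi_i(a)}$, which can be as large as $s_{d+1}$ (take $a$ ranked last). Even if fact (i) held, the count you derive would be vacuous: the second term of $n(\delta+d/(m-1))/(d\gamma)$ equals $n/((m-1)\gamma)$, which is at least $n$ because $\gamma\le 1/(m-1)$.

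The root problem is the currency you account in. Membership in $P$ constrains $\ub(a)$, not $\score(a)$, so you must work with upper-bound deficits. Since $w\in P$ throughout, every voter still hiding $w$ after $d$ levels is at depth exactly $d$ and has cost $w$ at least $1-s_{d+1}$. This gives $n-\lb(w)\le n\delta/(1-s_{d+1})$. Hence every $a\in P$ has $n-\ub(a)\le n\delta/(1-s_{d+1})$, and each active voter hiding $a$ contributes exactly $1-s_{d+1}\ge d\gamma$ to that deficit. (In this currency a level-by-level count can plausibly be made to give the right order, but the stated constants come from a different mechanism.)

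The proof then uses two ideas your plan lacks.
\begin{itemize}
\item \textbf{Choice of Phase~1 depth.} Phase~1 runs to a depth $d^*$ with $1-s_{d^*+1}\ge\beta=\max\{\lambda\delta,1/\lambda\}$, where $\lambda=\min\{(\gamma\delta)^{-1/3},\gamma^{-1}\}$. The part $\lambda\delta$ yields $\lb(w)\ge nL$ with $L=1-1/\lambda$. The part $1/\lambda$ yields $s_{d^*+1}\le L$, so non-exact voter--alternative pairs carry no surplus above $L$. All surplus of upper bounds over $\Lambda=\max_b\lb(b)$ therefore sits on exact pairs, which occupy distinct ranks at each voter. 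This bounds $\Phi=\sum_a(\ub(a)-\Lambda+\gamma)^+$ by $n\sum_j(s_j-L)^++m\gamma$.
\item \textbf{Per-query charging.} Each later query either leaves some current possible winner hidden at that voter, or makes every current possible winner exact there. In the first case that winner's upper bound falls by $s_{d+1}-s_{d+2}\ge\gamma$, so $\Phi$ falls by $\gamma$; this is what the cushion $\gamma$ is for. In the second case the voter is pruned.
\end{itemize}
This replaces both your tail sum and your ``cleanup'' step. Queries to voters where only $w$ is still hidden are paid from $\Phi$; for example, one voter ranking $w$ last is queried to depth $m-1$ even if $P=\{w\}$ after one level. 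The additive $m$ is the cushion $m\gamma$ divided by $\gamma$, not a count of revealing runs.

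Your $\opt$ bound also points the wrong way. To lower-bound a certificate's cost you need an \emph{upper} bound on how much $e_i$ queries can lower a challenger's upper-bound contribution at voter $i$. That reduction is $1-s_{e_i+1}\le e_i/(m-1)$ by the averaging inequality. Combined with $\score(w)\ge\lb(w)\ge\ub(a)\ge\sum_i s_{e_i+1}$, it gives $\opt\ge n(m-1)\delta$. A bound like $\opt\ge c\,n\delta/\gamma$ cannot hold in general, since $\opt\le n(m-1)$ while $n\delta/\gamma$ is unbounded as $\gamma\to0$ with $\delta$ fixed.

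The ratio then needs both lower bounds at once.
\begin{itemize}
\item In the large-deficit branch, the ratio is $1/\delta<1/\sqrt\gamma$.
\item In the middle branch, dividing $C_1=\frac{7}{4}n\delta^{2/3}\gamma^{-4/3}$ by $n/2$ alone only yields $\frac{7}{2}\delta^{2/3}\gamma^{-4/3}\le\frac{7}{2}\gamma^{-1}$, which is $\Theta(m)$ for Borda.
\item The factor $(2(m-1))^{-2/3}$ comes from $\min\{C_1/(n/2),\,C_1/(n(m-1)\delta)\}$, which is maximized at $\delta=1/(2(m-1))$. The additive $4n$ is charged against $n/2$.
\end{itemize}
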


\begin{proof}[Proof sketch]
Normalize the scores to $s_1=1$ and $s_m=0$, and fix a winner $w$ with $\score(w)=n(1-\delta)$. Any certificate must satisfy
\[
\opt\ge\max\{n(m-1)\delta,n/2\}.
\]
The first bound follows because $e_i$ queries to voter $i$ can reduce a challenger's upper-bound contribution by at most $e_i/(m-1)$; the second follows because fewer than $n/2$ queries leave too many completely unqueried voters to separate a winner's lower bound from a challenger's upper bound.

When $\gamma=0$, the first branch is the immediate bound obtained by eliciting every ranking through depth $m-1$. For the nontrivial regime $\gamma>0$ and $\delta^2\le\gamma$, choose
\[
\lambda=\min\{(\gamma\delta)^{-1/3},\gamma^{-1}\},
\qquad
\beta=\max\{\lambda\delta,\lambda^{-1}\},
\qquad
d^*=\min\{\lceil\beta/\gamma\rceil,m-1\}.
\]
Phase~1 consists of the iterations through depth $d^*$. Concavity guarantees that the score loss by this depth is at least $\beta$. It follows that at most $n/\lambda$ voters have not made $w$ exact and that $\lb(w)\ge n(1-1/\lambda)$.

For Phase~2, let $\Lambda=\max_b\lb(b)$ and use the potential
\[
\Phi=\sum_{a\in A}(\ub(a)-\Lambda+\gamma)^+.
\]
This potential is bounded by a simple double-counting argument over voters and alternatives. An alternative already outside the possible-winner set contributes at most the $\gamma$ cushion. For an alternative still in that set, every non-exact voter contribution is at most the Phase~1 baseline $L=1-1/\lambda$; hence positive surplus above $L$ can come only from exact voter--alternative pairs. At a fixed voter, these exact alternatives occupy distinct ranks, so their total positive surplus is at most
\[
\sum_{j=1}^{m}(s_j-L)^+.
\]
Concavity gives $s_j-L\le\lambda^{-1}-(j-1)\gamma$, so only the initial ranks with $(j-1)\gamma<1/\lambda$ can contribute positively. Summing this truncated arithmetic progression for each voter, and adding the $m\gamma$ cushions, gives
\[
\Phi\le n\left(\frac{1}{2\lambda^2\gamma}
+\frac{1}{2\lambda}+\frac{\gamma}{8}\right)+m\gamma.
\]
Immediately before each Phase~2 query, consider the current possible-winner set. Either some currently possible winner remains non-exact at the queried voter, in which case its upper bound and hence $\Phi$ fall by at least $\gamma$, or all current possible winners become exact at that voter, in which case the voter is pruned at the end of the iteration. Thus Phase~2 costs at most $\Phi/\gamma+n$. Combining the two phases and substituting the chosen $\lambda$ yields the three cost regimes above. Dividing the main and additive terms separately by the two lower bounds on $\opt$, and then maximizing over the regimes, gives the stated ratio. The complete proof appears in \Cref{app:concave-levelpr-proof}.
\end{proof}

\begin{restatable}{corollary}{thmLevelprGood}\label{thm:levelpr-good}
For Borda count, \levelpr{} has competitive ratio $O(m^{2/3})$.
\end{restatable}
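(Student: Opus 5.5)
The plan is to obtain the corollary as a direct specialization of \Cref{thm:concave-levelpr}. Borda count is given by $(m-1,m-2,\ldots,0)$. After the positive affine normalization to $s_1=1$, $s_m=0$ permitted before that theorem, it becomes $s_j=(m-j)/(m-1)$. All adjacent gaps equal $1/(m-1)$, so the concavity inequality holds with equality and the vector is concave in the sense of \Cref{def:concave-scoring}. The top gap is $\gamma=1/(m-1)>0$; this is the equality case $\gamma\le 1/(m-1)$ already noted in the text. Since the affine normalization preserves winners, certificates, and the execution of \levelpr, it suffices to bound the ratio for the normalized vector.

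First I would handle the regime $n\ge m$ by substituting $\gamma=1/(m-1)$ into the ``consequently'' bound of \Cref{thm:concave-levelpr}. The main term becomes
\[
\frac{7}{2(2(m-1))^{2/3}}\,(m-1)^{4/3}=\frac{7}{2\cdot 2^{2/3}}\,(m-1)^{2/3}.
\]
The second term is $1/\sqrt{\gamma}=\sqrt{m-1}$, and the last is the constant $12$. Altogether this gives $O(m^{2/3})$ for every profile with $n\ge m$, and for $m\ge 3$ the theorem applies directly.

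The only real obstacle is that the ratio bound in \Cref{thm:concave-levelpr} is stated only for $n\ge m$, while the competitive ratio takes a supremum over all $n\in\N$. For $n<m$, I would use the per-regime cost bounds directly.
\begin{itemize}
\item In the first regime ($\delta^2>\gamma$, i.e.\ $\delta>(m-1)^{-1/2}$), the cost is at most $n(m-1)$, and $\opt\ge n(m-1)\delta>n\sqrt{m-1}$. The ratio is therefore at most $\sqrt{m-1}$, regardless of $n$.
\item In the middle regime, the term $\tfrac74 n\delta^{2/3}\gamma^{-4/3}+3n$, divided by $\opt\ge\max\{n(m-1)\delta,\,n/2\}$, is bounded exactly as in the theorem. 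The argument is to split on whether $n(m-1)\delta$ or $n/2$ is larger and optimize over $\delta$; the worst case $\delta\approx 1/(2(m-1))$ gives $O(\delta^{2/3}(m-1)^{4/3})=O(m^{2/3})$.
\item The additive $+m$ (in both the middle and last regimes) contributes $m/\opt$, which could be as large as $2m/n$ when $n$ is small. This is the term I must control.
\end{itemize}

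For that additive term, my first attempt is a crude bound: \levelpr{} never costs more than $nm$, and $\opt\ge n/2$. For $n<m$, however, this only gives ratio $2m$, which is not good enough. So I would instead sharpen the lower bound on $\opt$ when $n$ is small. The idea is that with few voters, certifying a winner $w$ against a challenger requires revealing roughly $w$'s position in enough rankings, which is on the order of $m$ queries unless $\delta$ is tiny. Alternatively, I would revisit the Phase~2 potential: the $m\gamma$ cushion contributes $m\gamma/\gamma=m$ queries, but only alternatives actually queried in Phase~2 matter, and these number at most $n\cdot d$. This would replace $m$ by $\min\{m,\text{cost}\}$ and keep the ratio $O(m^{2/3})$.

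If neither refinement goes through cleanly, I would fall back on the observation that the regime $n<m$ can be reduced to $n\ge m$ by a padding argument. The aim is to show that the supremum defining $\comp$ is asymptotically attained for large $n$, mirroring the $n\to\infty$ limits used in \Cref{thm:plurality} and \Cref{prop:level-bad}. Combining the two ranges then yields $\comp(\levelpr)=O(m^{2/3})$ for Borda.
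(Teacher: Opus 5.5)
\textbf{Gap: the case $n<m$ is not proved.} Your treatment of $n\ge m$ matches the paper exactly. You also correctly isolate the additive $+m$ term as the only obstacle when $n<m$. But for $n<m$ you list three possible fixes and carry out none of them.

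\emph{Sharper lower bound on $\opt$.} This cannot work in principle. On a profile where every voter ranks $w$ first, querying all top choices is a certificate, so $\opt\le n$ (indeed $\opt=1$ when $n=1$). Meanwhile the theorem's cost bound is $5n+m$. The slack lies in the cost bound, not in $\opt$.

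\emph{Shrinking the cushion in $\Phi$.} This would need an actual bound on the set $P_{d^*}$ surviving into Phase~2. That set can include never-revealed alternatives in the tie case $\ub=\Lambda$. You assert such a bound but do not prove it.

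\emph{Padding.} This points in the right direction, but as stated it is not an argument. The $n\to\infty$ limits in \Cref{thm:plurality} and \Cref{prop:level-bad} construct bad instances. Here you need a monotonicity statement valid for every instance. Adding arbitrary extra voters would also change winners and certificates.

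\textbf{The missing idea is exact replication.} Let $q=\lceil m/n\rceil$, and let $I^{(q)}$ replace every ballot by $q$ identical copies. Two facts must be checked.
\begin{enumerate}
\item[(i)] At the end of each level, every $\lb$ and $\ub$ on $I^{(q)}$ is exactly $q$ times its value on $I$. Hence the set $P$ and all pruning decisions coincide, and $\cost(\levelpr,I^{(q)})=q\,\cost(\levelpr,I)$.
\item[(ii)] Copying an optimal certificate of $I$ onto every clone gives a certificate for $I^{(q)}$. Any completion splits into $q$ completions of the original transcript, each with the same Borda winner set $W$. Summing their scores keeps exactly $W$ as the winner set, so $\opt(I^{(q)})\le q\,\opt(I)$.
\end{enumerate}
Together these give $\cost(\levelpr,I)/\opt(I)\le \cost(\levelpr,I^{(q)})/\opt(I^{(q)})$ with $qn\ge m$. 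The $n\ge m$ bound then covers every profile, which is precisely how the paper completes the proof.
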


\begin{proof}
First suppose \(n\ge m\). Normalized Borda scores have $\gamma=1/(m-1)$. Substitution in \Cref{thm:concave-levelpr} gives
\[
\frac{\cost(\levelpr)}{\opt}
\le\frac{7}{2^{5/3}}(m-1)^{2/3}+\sqrt{m-1}+12
=O(m^{2/3}).
\]
Now consider a profile \(I\) with \(n<m\), let \(q=\lceil m/n\rceil\), and form \(I^{(q)}\) by replacing every ballot with \(q\) identical copies. The run of \levelpr{} is replicated across the copies, so
\[
\cost(\levelpr,I^{(q)})=q\cost(\levelpr,I).
\]
Repeating an optimal certificate for \(I\) on every copy gives a certificate for \(I^{(q)}\): every completion can be partitioned into \(q\) profiles consistent with the original certificate, all with the same Borda winner set, and summing their scores preserves that set. Hence \(\opt(I^{(q)})\le q\opt(I)\), and therefore
\[
\frac{\cost(\levelpr,I)}{\opt(I)}
\le
\frac{\cost(\levelpr,I^{(q)})}{\opt(I^{(q)})}.
\]
The replicated profile has \(qn\ge m\), so the first part of the proof bounds the right-hand side by \(O(m^{2/3})\). Because \(\calI\) contains every finite electorate and the competitive ratio takes the supremum over \(\calI\), this proves the unrestricted claim.
\end{proof}

It is unclear if our analysis in \Cref{thm:levelpr-good} is tight. Although, we are able to establish the following lower bound, showing that $\levelpr$ cannot achieve a constant competitive ratio matching the universal lower bound of \Cref{thm:borda-lower}.

\begin{restatable}{theorem}{thmLevelprBad}\label{thm:levelpr-bad}
    The competitive ratio of $\levelpr$ for Borda count is $\Omega(\sqrt{m})$.
\end{restatable}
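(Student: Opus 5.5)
The plan is to exhibit, for each large $m$, a family of profiles (with $n\to\infty$) on which $\levelpr$ keeps $\Theta(n)$ voters active for $\Theta(\sqrt m)$ levels, while some certificate costs only $O(n)+o(n)$. This is the same template as \Cref{prop:level-bad}, but it has to defeat pruning. In \Cref{prop:level-bad} pruning works because the set $P$ collapses to $\{a,b\}$ and most voters have already revealed both. So the construction must keep many alternatives plausible, say a set $B=\{b_1,\ldots,b_k\}$ with $k=\Theta(\sqrt m)$. In addition, most voters must not reveal all of $B$ within their first few positions, so that they cannot be pruned.

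The construction I would try is the following. Fix the winner $a$ and challengers $b_1,\ldots,b_k$.
\begin{itemize}
\item A bulk of about $n$ voters rank $a$ first and then one challenger $b_i$ second, with the $b_i$ spread evenly over these voters. The remaining challengers appear deeper in their rankings, at positions around $k$, interleaved with filler alternatives.
\item A small balancing group of $O(1)$ voters (or $o(n)$ voters, if needed) ranks $a$ low and the $b_i$ high. Their role is to make the Borda scores of the $b_i$ close to that of $a$, so that after the first few levels the upper bounds $\ub(b_i)$ still exceed $\lb(a)$.
\end{itemize}
The cheap certificate mirrors \Cref{prop:level-bad}: each bulk voter reveals a short prefix, which pins down the contributions of $a$ and of its own $b_i$, and the few balancing voters reveal deep prefixes. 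Deep revelation by these few voters is enough to push every $\ub(b_i)$ below $\lb(a)$. The certificate therefore costs $O(n)+O(m\cdot\#\text{balancing voters})$, which is $O(n)$ as $n\to\infty$.

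Next I would trace $\levelpr$ level by level, using the $\lb/\ub$ formulas from \Cref{lem:possible-winners}. The claim to establish is that for every level $d\le c\sqrt m$ (for a small constant $c$), every $b_i$ remains in $P$. The reason is that each unrevealed slot lets $\ub(b_i)$ assume that $b_i$ sits at position $d+1$ for every voter who has not yet revealed it. The resulting slack is about $(n - n/k)\cdot(\text{gap})$, and this should dominate the exact margin between $a$ and $b_i$. A bulk voter is pruned only once all of $B$ has been revealed, which happens only at depth about $k$. So the algorithm pays $\Omega(nk)=\Omega(n\sqrt m)$, and the ratio is $\Omega(\sqrt m)$.

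The hard step, which I expect to be the main obstacle, is choosing the positions of the $b_i$ and the sizes of the groups. Two requirements pull against each other:
\begin{enumerate}
\item The upper bounds must stay high for $\sqrt m$ levels, which needs a small Borda margin between $a$ and the $b_i$.
\item A cheap certificate must still exist, which needs the margin to be closable by only a few deep queries.
\end{enumerate}
With these tensions in view, the balance point will likely be to put the challengers at depths about $\sqrt m$ and to make the margin of order $n/\sqrt m\cdot\sqrt m$.

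My first attempt would be to compute $\ub(b_i)-\lb(a)$ as an explicit function of $d$ and $k$, and then choose $k$ and the balancing weight so that this difference stays positive up to $d\approx k$. At the same time I would check that the lower bound $\opt\ge n(m-1)\delta$ from the proof of \Cref{thm:concave-levelpr} does not force $\opt$ to be large. This check requires keeping the deficit $\delta$ of $a$ at $O(1/m)$, which is why $a$ is ranked first by the bulk.

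If the plain counting cannot keep all of $B$ alive, my fallback is to reverse the scales. I would target a certificate of cost $\Theta(n\sqrt m)$ together with an algorithm cost of $\Theta(nm)$, by placing $a$ at depth $\sqrt m$ for all voters and letting the challengers linger until depth $m$.
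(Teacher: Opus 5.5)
\textbf{Verdict: genuine gap.} Your overall template matches the paper's construction. The winner is ranked first by a bulk of voters, about $\sqrt m$ challengers sit in the next positions so bulk voters cannot be pruned early, and a second group ranks the challengers high and the winner lower. However, the concrete choices you commit to make the construction collapse, and the ``hard step'' you leave open is exactly where the key idea is needed.

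\textbf{First problem: the balancing group is far too small, and the survival mechanism is misidentified.} The balancing group cannot have $O(1)$ or $o(n)$ voters. At level $d$, a bulk voter who ranks $b_i$ at position $p$ contributes $m-1$ to $\lb(a)$ and $m-1-\min\{p-1,d\}$ to $\ub(b_i)$. That is a contribution of at least $1$ in favour of $a$, and about $d$ in your layout, where most challengers sit near depth $k$. A balancing voter contributes at most $m-1$ in the other direction.

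So bulk voters only ever push $\ub(b_i)-\lb(a)$ down. Keeping every $b_i\in P$ through level $\approx k$ therefore forces $n_{\mathrm{bal}}(m-1)\gtrsim n_{\mathrm{bulk}}\,k/2$, i.e., $n_{\mathrm{bal}}=\Theta(n/\sqrt m)$. This is a constant fraction of the electorate for fixed $m$. With $o(n)$ balancing voters and $n\to\infty$, every challenger and filler already has $\ub<\lb(a)$ after level~$1$. Then $P=\{a\}$, all bulk voters are pruned, and \levelpr{} costs only $O(n)$ plus the balancing queries.

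For the same reason, the slack in $\ub(b_i)$ from unrevealed bulk slots is not what keeps challengers alive. What does is the deficit $\score(a)-\lb(a)$ coming from balancing voters who have not yet revealed $a$. In the paper, at level $r+1$ every challenger is already exact and still satisfies $\score(c_i)>\lb(w)=n_1(m-1)$.

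\textbf{Second problem: with a correctly sized group, your certificate is too expensive.} Once $n_{\mathrm{bal}}=\Theta(n/\sqrt m)$, your estimate $O(n)+O(m\cdot n_{\mathrm{bal}})$ becomes $\Theta(n\sqrt m)$ if $a$ is ranked ``low'' (depth $\Theta(m)$) in those ballots. The ratio then drops to $O(1)$.

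\textbf{The missing idea.} Put $a$ at depth $p=\Theta(\sqrt m)$ in the balancing ballots. You need $p>k$, so that \levelpr{} has not seen $a$ while working through the challengers. You also need $p\cdot n_{\mathrm{bal}}=O(n)$, so that revealing $a$ is cheap for the certificate. In your layout, combining these with $n_{\mathrm{bal}}\gtrsim nk/m$ forces $k=O(\sqrt m)$, which is where the $\sqrt m$ comes from.

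\textbf{How the paper instantiates this.} The paper takes $r=\lfloor\sqrt m/2\rfloor$ challengers, cycled over positions $2,\ldots,r+1$ for $n_1=\frac{4r-2}{4r-1}n$ voters who rank $w$ first. The remaining $n_2=\frac{n}{4r-1}$ voters cycle the challengers over positions $1,\ldots,r$ and put $w$ at position $4r$. The certificate queries the $n_1$ voters to depth~$1$ and the $n_2$ voters to depth $4r$, at cost $n_1+4rn_2=2n$. Meanwhile, $\lb(w)$ is missing $n_2(m-4r)\approx n\sqrt m/2$, which exceeds the true margin of about $nr/2$. So all challengers stay in $P$ and \levelpr{} pays $n(r+1)$.

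Note that this margin is $\Theta(n\sqrt m)$, not of order $n$ as you guessed. Your fallback plan is too sketchy to assess and is not needed.
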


The lower bound in \Cref{thm:levelpr-bad} is specific to \levelpr: it shows a limitation of uniform level-wise pruning, rather than a lower bound against arbitrary Borda elicitation algorithms. We next apply the same score-bound viewpoint at multiple thresholds to obtain a sharper upper bound for Borda count specifically; we are not able to extend a version of this bound to other concave scoring rules.

\subsection{An Improved $O(\sqrt{m})$ Competitive Ratio for Borda Count}\label{sec:multiscale-borda}

We continue to use the score language of \Cref{sec:borda}. Normalize Borda as
\[
s_j=\frac{m-j}{m-1},
\qquad
\score(a)=\sum_{i\in V}s_{\pi_i(a)}.
\]
This is the same normalization used for the Borda specialization of \Cref{thm:concave-levelpr}. At any point, let $d_i\in\{0,\ldots,m-1\}$ be the revealed depth of voter $i$. We say that the score contribution of $a$ is \emph{exact} at voter $i$ if $a$ has been revealed, or if $d_i=m-1$ and $a$ is the unique omitted alternative. Define the per-voter score bounds
\[
\textsf{LB}_i(a):=
\begin{cases}
 s_{\pi_i(a)},&\text{if $a$ is exact at voter $i$},\\
 0,&\text{otherwise},
\end{cases}
\qquad
\textsf{UB}_i(a):=
\begin{cases}
 s_{\pi_i(a)},&\text{if $a$ is exact at voter $i$},\\
 s_{d_i+1},&\text{otherwise},
\end{cases}
\]
and let $\textsf{LB}(a)=\sum_i\textsf{LB}_i(a)$ and $\textsf{UB}(a)=\sum_i\textsf{UB}_i(a)$. Then
\[
\textsf{LB}(a)\le\score(a)\le\textsf{UB}(a).
\]
Moreover, $\textsf{UB}(a)$ is nonincreasing as queries are made, and $a$ has an exact score precisely when its contributions are exact at every voter.

The subroutine $\scoretest(a,T)$ repeatedly queries the least-index voter at which $a$ is not exact until either $\score(a)$ becomes exact or $\textsf{UB}(a)<T$. If $U_0(a)$ is the upper bound when the test starts, then it makes at most
\[
(m-1)(U_0(a)-T)+1+n
\]
queries: an answer different from $a$ decreases $\textsf{UB}(a)$ by $1/(m-1)$, while an answer equal to $a$ can occur at most once per voter. In particular, if $\score(a)\ge T$, the test must make $\score(a)$ exact.

The algorithm considers dyadic scales $g$ up to the largest power of two not exceeding $\sqrt m$. At scale $g$, it first reveals every ballot to depth $2g$, sets the score threshold
\[
T_g:=\frac{n(m-g)}{m-1},
\]
then freezes the set of alternatives whose score upper bound reaches $T_g$, and score-tests every alternative in that set. Freezing and processing the entire set is necessary to recover all co-winners, rather than returning after the first successful test.

\begin{algorithm}[H]
\caption{$\multiscale$ for Borda Count}
\label{alg:multiscale-score}
\DontPrintSemicolon
\KwIn{Voters $V$, alternatives $A$ with $|A|=m$}
\KwOut{Borda winner(s)}

\If{$m\le4$}{
    Extend every ballot to depth $m-1$ and return $\argmax_{a\in A}\score(a)$\;
}
$G\gets2^{\lfloor\log_2\sqrt m\rfloor}$\;
\ForEach{$g\in\{1,2,4,\ldots,G\}$}{
    Extend every ballot whose current depth is below $2g$ to depth $2g$\;
    $T_g\gets n(m-g)/(m-1)$ and $C_g\gets\{a\in A:\textsf{UB}(a)\ge T_g\}$ \tcp*{freeze $C_g$}
    \ForEach{$a\in C_g$ in a fixed order}{
        \If{$\textsf{UB}(a)\ge T_g$}{Run $\scoretest(a,T_g)$\;}
    }
    $E_g\gets\{a\in C_g:\score(a)\text{ is exact and }\score(a)\ge T_g\}$\;
    \If{$E_g\ne\emptyset$}{
        \Return $\argmax_{a\in E_g}\score(a)$\;
    }
}
Extend every ballot to depth $m-1$ and return $\argmax_{a\in A}\score(a)$\;
\end{algorithm}

\begin{restatable}{theorem}{thmMultiScaleBorda}\label{thm:multiscale-borda}
\multiscale{} is a deterministic online next-best elicitation algorithm that determines the complete Borda winner set. Moreover,
\[
\comp(\multiscale)
\le
\begin{cases}
4,&2\le m\le4,\\[2pt]
20\sqrt m,&m\ge5.
\end{cases}
\]
In particular, $\comp(\multiscale)=O(\sqrt m)$.
\end{restatable}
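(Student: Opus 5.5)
The proof has three parts, taken in order. First, correctness and the online/next-best property. Second, a lower bound on $\opt$ in terms of the winner's deficit. Third, an upper bound on the cost of \multiscale{} charged scale by scale.

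\emph{Correctness.} Every query either extends a ballot by one position or, inside \scoretest, queries the next position of some voter, so [C2] holds. Each decision depends only on revealed prefixes through $\textsf{LB}$ and $\textsf{UB}$, so [C3] holds. For [C1], suppose the algorithm returns at scale $g$. Every alternative $b\notin C_g$ has $\textsf{UB}(b)<T_g$ at freezing time, and every $b\in C_g\setminus E_g$ ended its test with $\textsf{UB}(b)<T_g$. This is the "in particular" property of \scoretest: if $\score(b)\ge T_g$, the test makes the score exact. Since $E_g\neq\emptyset$, some exact score is at least $T_g$, so every alternative outside $E_g$ has score strictly below the maximum exact score in $E_g$. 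Hence $\argmax_{E_g}$ is the full Borda winner set in every consistent completion. The fallback branch and the $m\le 4$ branch are full elicitation, which is trivially correct. For $m\le4$, the bound $n(m-1)\le 3n$ against $\opt\ge n/2$ gives ratio at most $6$. To get $4$, I would instead use a sharper bound for small $m$, for instance $\opt\ge 3n/4$, or argue via the Borda lower-bound instance analysis; this is a minor point.

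\emph{Lower bound on $\opt$.} Let $w$ be a winner with normalized deficit $\delta$, so $\score(w)=n(1-\delta)$. As in the sketch of \Cref{thm:concave-levelpr}, $\opt\ge\max\{n(m-1)\delta,\,n/2\}$. I would refine this by writing $\delta=h/(m-1)$, where $h$ is roughly the average rank loss of $w$, giving $\opt\gtrsim \max\{nh, n/2\}$. Now define $g^*$ as the smallest dyadic scale with $g^*\ge h$, roughly; then $\score(w)\ge T_{g}$ for every $g\ge h$. So if $h\le G\approx\sqrt m$, the algorithm stops at or before scale $g^*\le 2h+1$. If $h>\sqrt m/2$, then $\opt\ge nh=\Omega(n\sqrt m)$ while the total cost is at most $nm$, giving ratio $O(\sqrt m)$. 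The same bound covers the fallback branch.

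\emph{Cost per scale.} The extension to depth $2g$ costs at most $2gn$ in total, and summing over dyadic $g\le g^*$ gives $O(g^* n)=O(\max(h,1)\,n)=O(\opt)$. The main obstacle is bounding the total \scoretest{} cost $\sum_{a\in C_g}\bigl[(m-1)(U_0(a)-T_g)+1+n\bigr]$. I would bound $|C_g|$ by double counting. After depth $2g$, an alternative $a$ with $\textsf{UB}(a)\ge T_g=n(1-\frac{g-1}{m-1})$ must be revealed within the top $2g$ of most voters: each voter where $a$ is unrevealed contributes at most $s_{2g+1}=1-\frac{2g}{m-1}$, a loss of $\frac{2g}{m-1}$, while the total allowed loss is $n\frac{g-1}{m-1}$. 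So $a$ is unrevealed at fewer than $n/2$ voters, and is thus revealed at more than $n/2$ voters within their top $2g$ positions. Since each voter reveals only $2g$ alternatives, $|C_g|\le 4g$.

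Similarly, the slack $(m-1)(U_0(a)-T_g)$ is at most $(m-1)(n-T_g)\cdot$(fraction) $\le n(g-1)$. This bound is crude, but it suffices: the scale-$g$ test cost is $O(|C_g|\,(ng+n))=O(g^2 n)$, which for $g\le 2h$ with $h$ small is $O(h\cdot g\, n)$, and $g\le\sqrt m$ gives $O(\sqrt m\cdot \opt)$. I would sum over dyadic $g\le g^*$ to get $O(g^{*2}n)\le O(\sqrt m\cdot g^* n)=O(\sqrt m\,\opt)$, using $g^*\le\sqrt m$ and $\opt\gtrsim n\max(h,1/2)$. Finally, I would track constants (the $+1+n$ term, the $1/2$ factor, and geometric sums) to reach $20\sqrt m$. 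The tightest step is making $|C_g|\cdot(\text{slack}+n)$ comparable to $g\sqrt m\, n$ rather than $g^2 n$ when $h$ is tiny: if $h<1$, I would use $\opt\ge n/2$ together with $g=1$ costs of $O(n)$ to cover that case.
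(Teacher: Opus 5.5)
Your plan has the same architecture as the paper's proof. The frozen set $C_g$ together with the threshold tests gives correctness. Each \scoretest{} costs at most $ng+1$ because $U_0(a)\le n$. A double-counting argument bounds $|C_g|$, and $\opt$ is bounded below by $n(m-1)\delta$ and by a constant multiple of $n$. The case split is on whether $\rho=1+(m-1)\delta$ is at most $G$. This does give correctness and $\comp(\multiscale)=O(\sqrt m)$, after two small repairs:
\begin{itemize}
\item $\score(w)\ge T_g$ holds iff $g\ge h+1$, not $g\ge h$. So the band $G-1<h\le G$ belongs to the fallback case, where $\opt\ge n(G-1)$ still gives a ratio below $4\sqrt m$.
\item Your packing count should be phrased through true ranks $\pi_i(a)\le 2g$ rather than ``revealed'', since earlier score tests may have pushed some prefixes past depth $2g$.
\end{itemize}

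The genuine gap is that the theorem asserts explicit constants, and your proposal does not prove them.

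\emph{The case $2\le m\le4$.} You only get $n(m-1)/(n/2)\le 6$. The bound $\opt\ge 3n/4$ you suggest is not derived, and the Borda lower-bound instance cannot supply an upper bound. The missing ingredient is $\opt\ge n(m-1)/m$:
\begin{itemize}
\item Take a certificate with depths $e_i$ and $q=\sum_i e_i$. Every challenger $a$ has per-voter upper-bound contribution at least $(m-1-e_i)/(m-1)$, so $\textsf{UB}(a)\ge n-q/(m-1)$.
\item Only voters whose revealed prefix contains $w$ contribute to $\textsf{LB}(w)$, so $\textsf{LB}(w)\le q$.
\item A certificate forces $\textsf{LB}(w)\ge\textsf{UB}(a)$ by pairwise completion. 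Hence $q\ge n(m-1)/m$, and full elicitation has ratio at most $m\le 4$.
\end{itemize}

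\emph{The case $m\ge5$.} ``Tracking constants'' does not reach $20\sqrt m$ from your intermediate bounds. Your estimate $|C_g|\le 4g$ only records whether $a$ lies in a voter's top $2g$. Already at $m=5$, $g=2$, it gives cost $4(n+1)+8(2n+1)+4n=24n+12$ against $\opt\ge n/2$, a ratio bound of $48>20\sqrt5$.

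The paper gets the factor-two sharper bound $|C_g|\le 2g-1$ by weighting each top-$2g$ appearance by its rank. It uses the surplus identity $\sum_{a}\bigl(\textsf{UB}^{(2g)}(a)-ns_{2g+1}\bigr)=ng(2g+1)/(m-1)$, in which every $a\in C_g$ contributes at least $T_g-ns_{2g+1}=n(g+1)/(m-1)$. This yields total cost at most $5ng^2$ through scale $g$. Then:
\begin{itemize}
\item for $g\le 2$, $\opt\ge n(m-1)/m\ge 4n/5$ gives a ratio of at most $25$;
\item for $g\ge 4$, $\opt\ge n(\rho-1)>ng/4$ gives a ratio of at most $20g\le 20\sqrt m$.
\end{itemize}
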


\begin{proof}[Proof sketch]
The first ingredient is a packing bound stated entirely in terms of score upper bounds. Let $\textsf{UB}^{(h)}(a)$ be the upper bound obtained by truncating every ballot at depth $h$. Double counting its surplus above the common unrevealed baseline $ns_{h+1}$ gives
\[
\sum_{a\in A}\left(\textsf{UB}^{(h)}(a)-ns_{h+1}\right)
=\frac{nh(h+1)}{2(m-1)}.
\]
Taking $h=2g$, every $a\in C_g$ contributes at least $T_g-ns_{2g+1}=n(g+1)/(m-1)$ to the left-hand side, and hence $|C_g|\le2g-1$. Moreover, $U_0(a)\le n$ makes each score test cost at most $ng+1$. Thus the total cost through scale $g$ is at most $5ng^2$.

For correctness, after every member of the frozen set $C_g$ has been processed, each alternative outside
\[
E_g=\{a\in C_g:\score(a)\text{ is exact and }\score(a)\ge T_g\}
\]
has $\textsf{UB}(a)<T_g$. Thus, if $E_g$ is nonempty, no alternative outside it can match the maximum exact score attained within $E_g$ in any consistent completion, and the returned set is exactly the complete Borda winner set, including ties.

Finally, let $w$ be a Borda winner and use the same normalized score deficit as in \Cref{def:concave-scoring}:
\[
\delta=1-\frac{\score(w)}n.
\]
Every certificate satisfies
\[
\opt\ge\max\left\{\frac{n(m-1)}m,\;n(m-1)\delta\right\}.
\]
Write $\rho=1+(m-1)\delta$, so that $\score(w)\ge T_g$ exactly when $g\ge\rho$. If $\rho\le G$, let $g$ be the first scale with $g\ge\rho$. Then $w\in C_g$, the algorithm terminates by that scale, and the cost bound $5ng^2$ divided by the lower bound above is at most $20g\le20\sqrt m$ (with the two smallest scales handled by the first lower bound). If $\rho>G$, even full elicitation costs $n(m-1)$, while $\opt\ge n(m-1)\delta=n(\rho-1)>n(G-1)$; since $G>\sqrt m/2$, the ratio is less than $4\sqrt m$. The complete proof appears in \Cref{app:multiscale-proof}.
\end{proof}

\begin{openunnumbered}
Does Borda admit a constant-competitive elicitation algorithm? The universal lower bound is $3$, whereas \Cref{thm:multiscale-borda} gives an $O(\sqrt m)$ upper bound.
\end{openunnumbered}

\section{Condorcet Methods}\label{sec:condorcet}

A popular family of voting rules, disjoint from the family of positional scoring rules~\cite{Fish74}, is that of Condorcet-consistent rules, which includes prominent rules such as Copeland's rule, the minimax rule, Kemeny's rule, ranked pairs, Schulze's method, and Dodgson's method. While we showed that many positional scoring rules admit $o(m)$ competitive ratio, the following result shows that all Condorcet-consistent rules admit $\Omega(m)$ competitive ratio, indicating that determining a Condorcet winner can incur significant cost due to the online nature of the elicitation.


\begin{restatable}{theorem}{thmCondorcet}\label{thm:condorcet}
    The competitive ratio of any elicitation algorithm for any Condorcet-consistent voting rule is $\Omega(m)$.
\end{restatable}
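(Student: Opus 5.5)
We plan to use an adversary argument. The idea is to force any online algorithm to elicit nearly complete rankings from many voters, while showing that in hindsight a Condorcet winner could have been certified with only $O(n)$ queries. Fix a special alternative $a$, and let $n$ be odd and large. The adversary answers every next-best query so that $a$ is withheld for as long as possible: each queried voter reveals some alternative other than $a$ at every depth. The algorithm therefore cannot tell whether a partially elicited voter ranks $a$ just below her revealed prefix or at the very bottom.

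\emph{Construction.} Partition the voters into two groups, with $\frac{n+1}{2}$ voters of one type and $\frac{n-1}{2}$ of the other. Think of the outcome as a contest between $a$ and the remaining alternatives $B=A\setminus\set{a}$. The answers are chosen so that, among voters whose rankings are incomplete, no pairwise comparison between $a$ and the alternatives of $B$ is settled until nearly the whole ranking is revealed. Suppose $a$ has not yet appeared in the revealed prefix of a voter $v$. Every $b\in B$ that $v$ has not revealed might still be ranked above or below $a$. Each pairwise majority $a$ versus $b$ therefore stays undetermined whenever enough such voters remain, and certifying $a$ as the Condorcet winner (or certifying some other winner) requires settling $a$ against every $b$. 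Concretely, we keep the profile balanced: the revealed prefixes are arranged so that each $b$ beats $a$ in one half of the voters and loses in the other, with the deciding voter's position of $a$ hidden. The revealed information is then consistent both with a completion in which $a$ is the Condorcet winner (place $a$ immediately after each voter's prefix) and with one in which some $b$ beats or ties $a$ (place $a$ last). Condorcet-consistency gives $f=\set{a}$ in the first completion and a different winner set in the second, for instance by making some $b$ a Condorcet winner there. So [C1] fails until, for $\Omega(n)$ voters, either $a$ has been revealed or depth $m-1$ has been reached. Because the adversary never reveals $a$ before depth $m-1$ to a voter the algorithm is still probing, the algorithm makes $\Omega(nm)$ queries.

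\emph{Hindsight bound.} Next we fix the consistent completion in which each voter ranks $a$ at position $m-1$ but the alternatives of $B$ are arranged cyclically, say by rotating a fixed order of $B$ across voters. Then each $b$ is ranked above $a$ by roughly half the voters, while the one voter forced to rank $a$ last tips every majority toward $a$. The main obstacle is exactly here: $\opt$ must be $O(n)$, which requires $a$ to appear near the top for a majority. So I would instead let the adversary's fixed instance have $a$ near the top for a bare majority of $\frac{n+1}{2}$ voters (certifying $a$ needs only about $2$ queries each, plus small prefixes elsewhere), while for the $\frac{n-1}{2}$ others $a$ is near the bottom. Online, the algorithm cannot know which voters belong to the majority. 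Because the adversary answers adaptively, it places $a$ late for whichever voters get queried first deeply and early only for voters probed last. This is the plurality-style lower bound of \Cref{thm:plurality} lifted by a factor of $m$: each uninformative voter costs about $m$ queries instead of $1$.

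\emph{Key steps in order:} (i) specify the adaptive answering rule, under which a voter reveals $a$ at depth $2$ only once $\frac{n-1}{2}$ voters have already been driven to depth $m-1$ without revealing $a$; (ii) show that before this point the responses are consistent with both a completion where $a$ is the Condorcet winner and one where another alternative is, so by Condorcet-consistency [C1] is not met; (iii) conclude that the algorithm spends at least $\frac{n-1}{2}(m-2)$ queries; (iv) exhibit the final instance and a certificate of size $O(n)$, consisting of depth $2$ for the majority voters plus depth $1$ for the others, arguing that $a$ then beats every $b$ via the majority voters who rank $a$ above all but at most one alternative; the leftover alternative is handled by cycling which $b$ sits above $a$. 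I expect step (iv) to be the delicate one: each pairwise contest must be won by a strict majority using only short prefixes, and the cycling must leave no single $b$ above $a$ too often. With these steps, the ratio is $\Omega(nm)/O(n)=\Omega(m)$.
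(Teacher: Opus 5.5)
\textbf{There is a genuine gap.} It is structural rather than a missing detail.

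Your final plan, steps (i)--(iv), places the informative content near the top of the ballots: the hindsight certificate in (iv) queries every voter only to depth at most $2$. No such instance can witness a superconstant lower bound. Consider the algorithm that queries everyone's first choice, then everyone's second choice, and stops as soon as [C1] holds. After two levels its revealed prefixes extend those of your depth-$\le 2$ certificate. Every completion consistent with them is therefore consistent with the certificate, so the winner set is already determined and this algorithm stops after at most $2n$ queries. Thus (iii) and (iv) contradict each other for this algorithm.

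The adaptive rule in (i) also breaks against this algorithm. It asks depth $2$ of every voter before driving anyone to depth $m-1$, so your rule forces the adversary to withhold $a$ from the top two positions of every voter. Then $a$ cannot sit near the top for a bare majority, as (iv) requires. Your earlier variant, withholding $a$ until depth $m-1$ from everyone, fails for the reason you noticed yourself: $a$ then loses to almost everything and cannot be cheaply certified.

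\textbf{The missing idea.} The hindsight certificate must go deep, to depth $\Theta(m)$, but only at $O(1)$ voters. Meanwhile the online algorithm must be unable to tell which voters these are without probing $\Theta(n)$ of them to depth $\Theta(m)$. Your ``plurality lifted by a factor of $m$'' intuition is right, but the hidden bit must sit at the bottom of the ballots, not the top.

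The paper realizes this as follows.
\begin{itemize}
\item The decisive contest is between the eventual winner $w$ and a challenger $c$. These occupy the last two positions for about $\nicefrac{n}{2}$ voters, who all look identical ($X \succ Y$) through depth $m-2$.
\item Only two of these voters rank $w \succ c$, and they swing the $w$-versus-$c$ majority. The adversary designates as swing voters the last two of the group to be asked for position $m-1$. Hence every algorithm must take about $\nicefrac{n}{2}$ of them to depth $m-1$.
\item $w$ beats $X$ (resp.\ $Y$) via the $\nicefrac{n}{2}-1$ voters ranking $w$ first plus two voters ranking $w$ around position $\nicefrac{m}{2}$, which is cheap to certify.
\item $c$ beats $X$ and $Y$ with exactly the same majorities. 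So completing all undetermined bottom voters with $c \succ w$ makes $c$ the Condorcet winner, and Condorcet-consistency forbids termination.
\end{itemize}
This yields $\Omega(nm)$ queries online against $\opt = O(n+m)$, and letting $n \to \infty$ gives $\Omega(m)$.
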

\begin{proof}
    Suppose $n$ and $m$ are even. Consider a set of alternatives $A = \set{w,c} \cup X \cup Y$, where $|X|=|Y|=\nicefrac{m}{2}-1$. We construct an adversarial class of problem instances using \Cref{tab:condorcet} as our guide; whenever $X$ or $Y$ appear in the preference ranking, we substitute an arbitrary order among the alternatives in that set instead.

    \begin{table}[htb!]
    \centering
    \begin{tabular}{crl}
    \toprule
    Voter type & No. of voters & Preference ranking\\
    \midrule
    (T1) &$\nicefrac{n}{2}-1$ voters: &$w \succ c \succ X \succ Y$\\
    (T2) &$2$ voters: & $Y \succ c \succ w \succ X$\\
    (T3) &$2$ voters: & $X \succ c \succ w \succ Y$\\
    (T4) &$\nicefrac{n}{2}-5$ voters: & $X \succ Y \succ c \succ w$\\
    (T5) &$2$ voters: & $X \succ Y \succ w \succ c$\\
    \bottomrule
    \end{tabular}
    \caption{Adversarial problem instances for Condorcet-consistent rules.}
    \label{tab:condorcet}
    \end{table}

    The adversary acts as follows. For the first $\nicefrac{n}{2}-1$ voters receiving their first query, the adversary assigns type (T1) as soon as the first query is received and responds to all their queries as per the table. The same holds for the next $2$ voters receiving their first query, who are assigned type (T2), and the subsequent $2$ voters receiving their first query, who are assigned type (T3). For all the remaining voters, who may be of type (T4) or (T5), the adversary continues revealing $X \succ Y$ for the first $m-2$ positional queries without determining the type. For the first $\nicefrac{n}{2}-5$ of these voters receiving a query at position $m-1$, the adversary assigns type (T4) and reveals alternative $c$ (and responds to a query at position $m$, if ever posed, with alternative $w$). The last $2$ voters in this group receiving a query at position $m-1$ are assigned type (T5) and their queries at positions $m-1$ and $m$ are responded to accordingly.

    First, notice that $w$ is the Condorcet winner: the majority (T1)+(T2) prefer $w \succ X$, the majority (T1)+(T3) prefer $w \succ Y$, and the majority (T1)+(T5) prefer $w \succ c$.

    Next, we argue that any elicitation algorithm for any Condorcet-consistent rule must make at least $(\nicefrac{n}{2}-4) \cdot (m-1)$ queries before terminating (with $w$ as the necessary winner). Suppose for contradiction that it terminates with fewer queries. Consider the group of (T4)+(T5) voters. Because the adversary only assigns type (T5) to the $(\nicefrac{n}{2}-4)^{\text{th}}$ and $(\nicefrac{n}{2}-3)^{\text{rd}}$ voter in this group receiving a query at position $m-1$, the adversary must not have assigned type (T5) to any voters yet. Then, the algorithm cannot rule out the possibility that all voters in the group (T4)+(T5) have preference ranking $X \succ Y \succ c \succ w$. For voters of types (T4) and (T5), the relative order between $c$ and $w$ remains unrevealed at this point, so the adversary can complete their rankings with $c \succ w$ without contradicting any answer given so far. However, in this possibility, $c$ becomes the Condorcet winner: the majorities (T1)+(T2) and (T1)+(T3) prefer $c \succ X$ and $c \succ Y$, respectively, just like $w$, but now the majority $(T2)+(T3)+(T4)+(T5)$ would also prefer $c \succ w$. But then, the algorithm cannot possibly terminate with $w$ as the necessary winner under any Condorcet-consistent voting rule, a contradiction.

    Let us contrast this with the optimal number of queries. Note that $\opt$ can reveal the Condorcet winner $w$ by querying (T1) voters up to rank $1$, (T2) and (T3), and (T5) voters up to rank $m-1$. This amounts to $O(n+m)$ queries.

    Taking the ratio of the $\Omega(nm)$ lower bound on the algorithm and $O(n+m)$ upper bound on $\opt$, followed by the limit of $n \to \infty$, yields the desired $\Omega(m)$ lower bound for even \(m\). For odd \(m\), apply the construction to \(m-1\) alternatives and add one dummy alternative at the bottom of every ranking; this preserves the argument and gives \(\Omega(m-1)=\Omega(m)\).
\end{proof}

\section{Experiments}
\label{sec:experiments}

In this section, we evaluate the empirical performance of our proposed query strategies across various positional scoring rules and Condorcet-consistent rules. The implementation and experiment scripts are available in the
\href{https://github.com/yzqlll0721/NextBest}{NextBest repository}.

\subsection{Experimental Setup}

\paragraph{Datasets.}
We conduct our experiments using real-world preference datasets from \textbf{PrefLib}~\cite{MW13} under the Strict Orders Complete (SOC) category. \Cref{fig:preflib-distribution} depicts the distribution of the number of voters $n$ and the number of alternatives $m$ in these datasets collectively.

\begin{figure}[t]
    \centering
    \begin{minipage}[t]{0.48\linewidth}
        \centering
        \includegraphics[height=\arxivpairedheight,keepaspectratio]{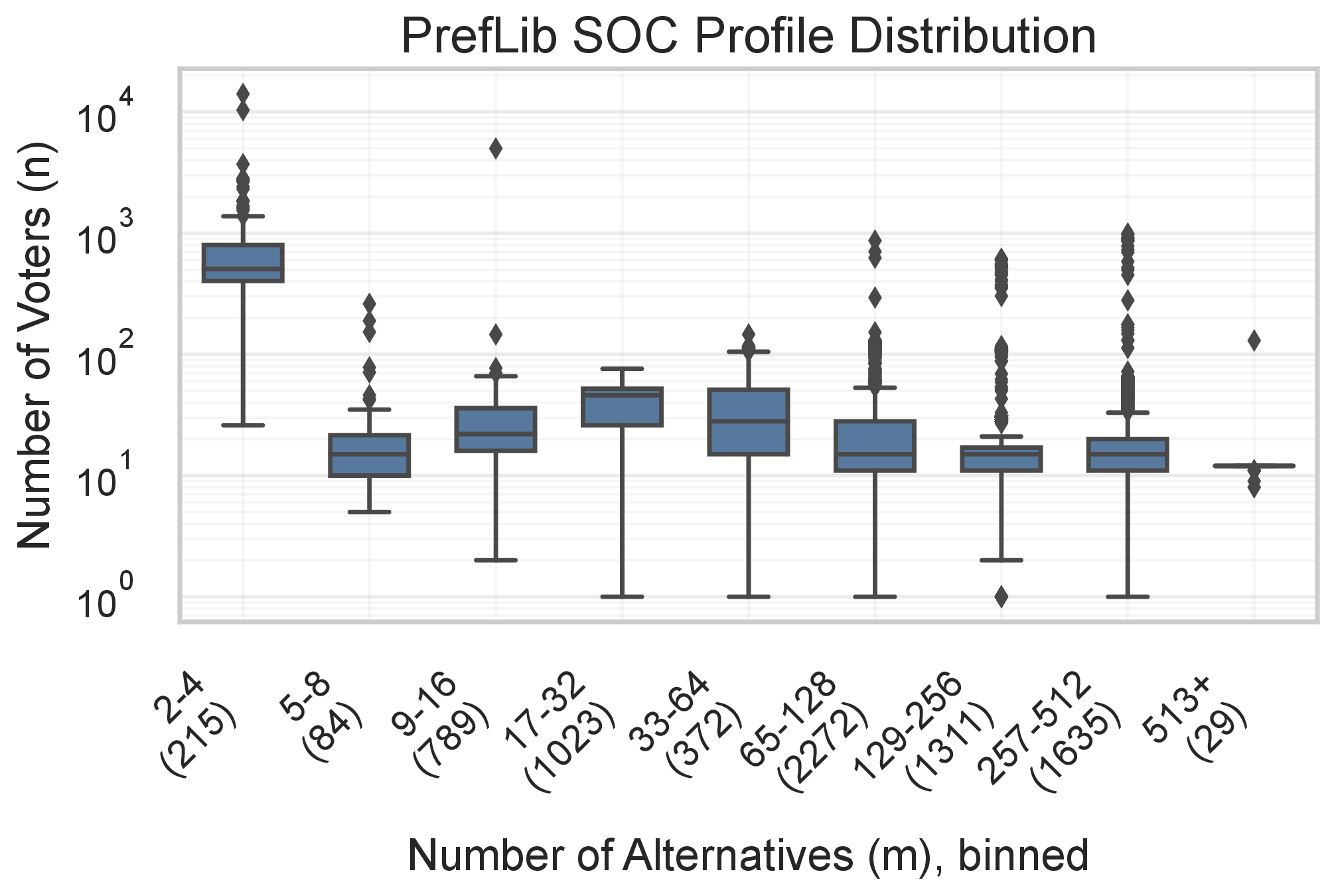}
        \caption{Distribution of PrefLib SOC profiles used in our experiments. Profiles are grouped by the number of alternatives ($m$), and each boxplot shows the distribution of the number of voters ($n$) within that group. Parenthesized values on the $x$-axis give the number of profiles in each bin.}
        \label{fig:preflib-distribution}
    \end{minipage}\hfill%
    \begin{minipage}[t]{0.48\linewidth}
        \centering
        \includegraphics[height=\arxivpairedheight,keepaspectratio]{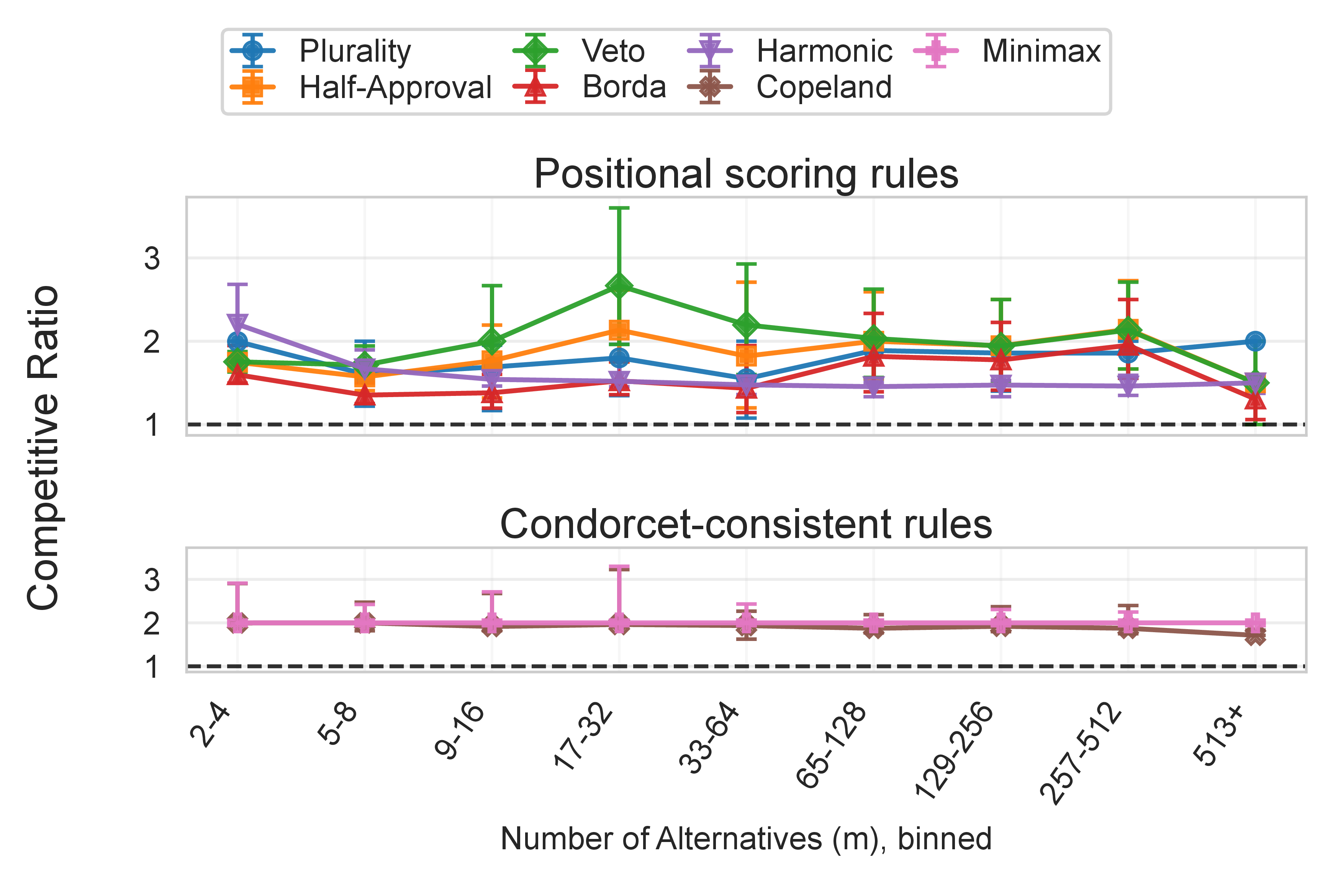}
        \caption{Scalability of the competitive ratio ($\levelpr$ relative to the exact OPT solver) as a function of the number of alternatives ($m$), with $m$ grouped into logarithmic-size bins. Points show the median ratio within each bin and error bars show the interquartile range.}
        \label{fig:plot1_cr}
    \end{minipage}%
\end{figure}

\paragraph{Voting Rules.}
We consider a diverse and representative set of prominent voting rules:
\begin{itemize}
    \item Positional scoring rules (PSRs): \textbf{Borda}, \textbf{Harmonic}, and three instantiations of \textbf{$k$-Approval}, namely \textbf{Plurality} ($k=1$), \textbf{Half-Approval} ($k=\floor{m/2}$), and \textbf{Veto} ($k=m-1$), and
    \item Condorcet-consistent rules: \textbf{Copeland} and \textbf{Minimax}.
\end{itemize}

\paragraph{Baselines for Comparison.}
For these voting rules, we compare the performance of the following three approaches:
\begin{itemize}
    \item \textbf{OPT (Optimal):} The absolute minimum number of queries required to prove the winner. This is computed via an Integer Linear Programming (ILP) formulation provided in \Cref{app:ilp} using the Gurobi solver.
    \item \textbf{$\levelpr$:} Our proposed algorithm (\Cref{alg:level-pruning}) that queries voters level-by-level while pruning those who have revealed all possible-winners.
    \item \textbf{$\level$:} A standard baseline that queries voters level-by-level, without any pruning, until one candidate emerges as the necessary winner. Note that the maximum depth $d^{\max}$ queried under $\level$ and $\levelpr$ is identical, but $\levelpr$ stops querying some voters at depth lower than $d^{\max}$.
\end{itemize}
Since Copeland and Minimax also define their own scores and choose winners that maximize these scores, \levelpr{} remains well-defined for these rules.

To ensure a reasonable timeframe, we impose a $60$-second time limit on the ILP for $\opt$; for each rule, this excluded less than $1\%$ of the instances, leaving at least $7,000$ instances on which $\opt$ (and hence, the competitive ratios of $\level$ and $\levelpr$) could be computed.


\paragraph{Evaluation Metrics.}
We assess the algorithms based on two primary metrics:
\begin{itemize}
    \item \textbf{Competitive Ratio:} The ratio of the queries used by $\levelpr$ or $\level$ to the optimal number of queries $\opt$. This sheds light on the overall performance of the two algorithms.
    \item \textbf{Pearson Correlation:} The Pearson correlation coefficient between the depth to which voter $i$ is queried under $\levelpr$ or $\opt$ and the position of the true winner in voter $i$'s ranking. This sheds light on the probing strategy of an algorithm. Ideally, an algorithm should query each voter approximately up to where she can reveal the true winner, resulting in a high correlation coefficient.
\end{itemize}

\begin{figure}[htb!]
    \centering
    \begin{minipage}[t]{0.48\linewidth}
        \centering
        \includegraphics[width=\linewidth]{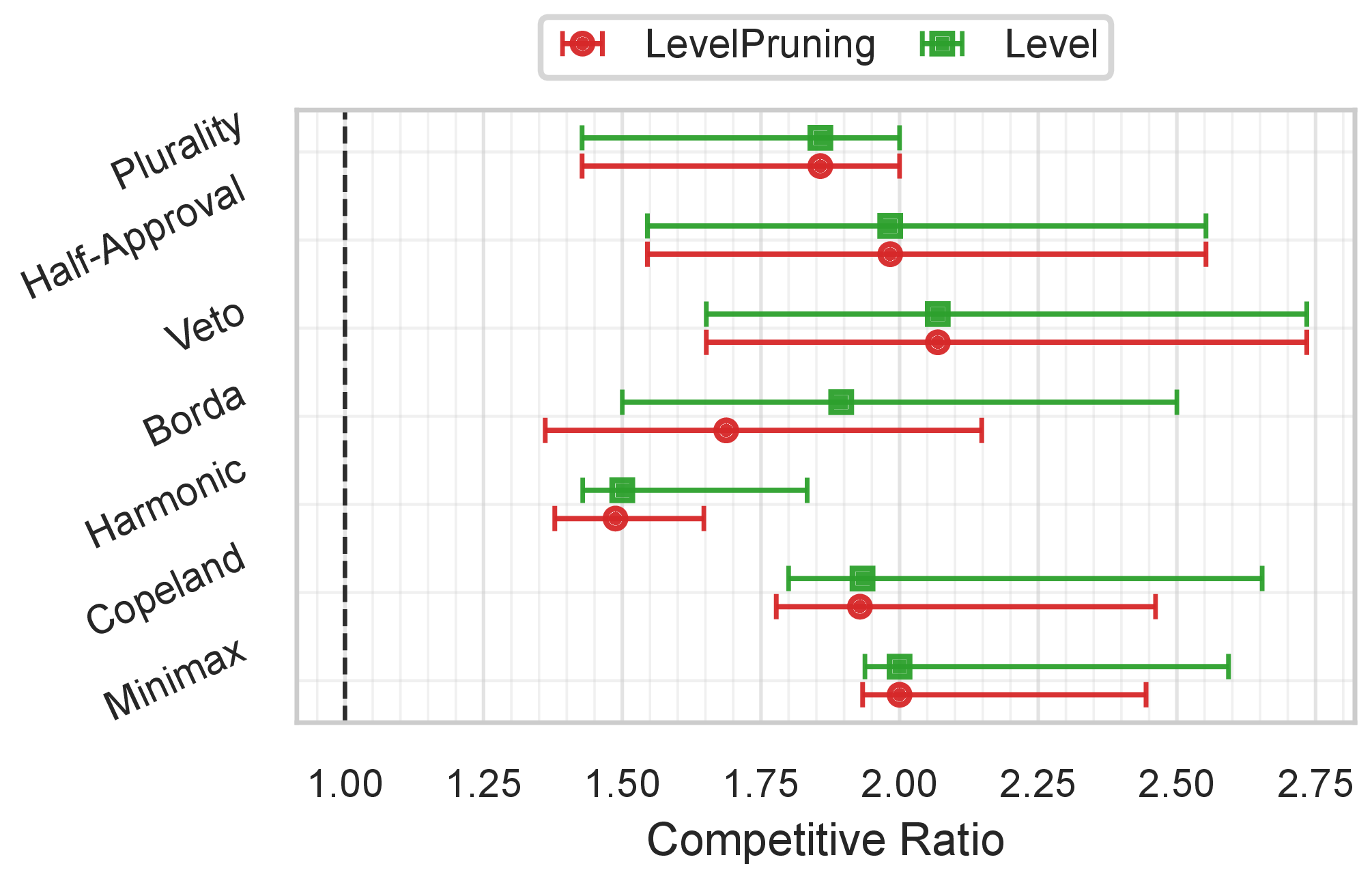}
        \caption{Performance comparison between the proposed $\levelpr$ and the $\level$ baseline. Points show the median competitive ratio across PrefLib instances and horizontal bars show the interquartile range.}
        \label{fig:plot2_bar}
    \end{minipage}\hfill%
    \begin{minipage}[t]{0.48\linewidth}
        \centering
        \includegraphics[width=\linewidth]{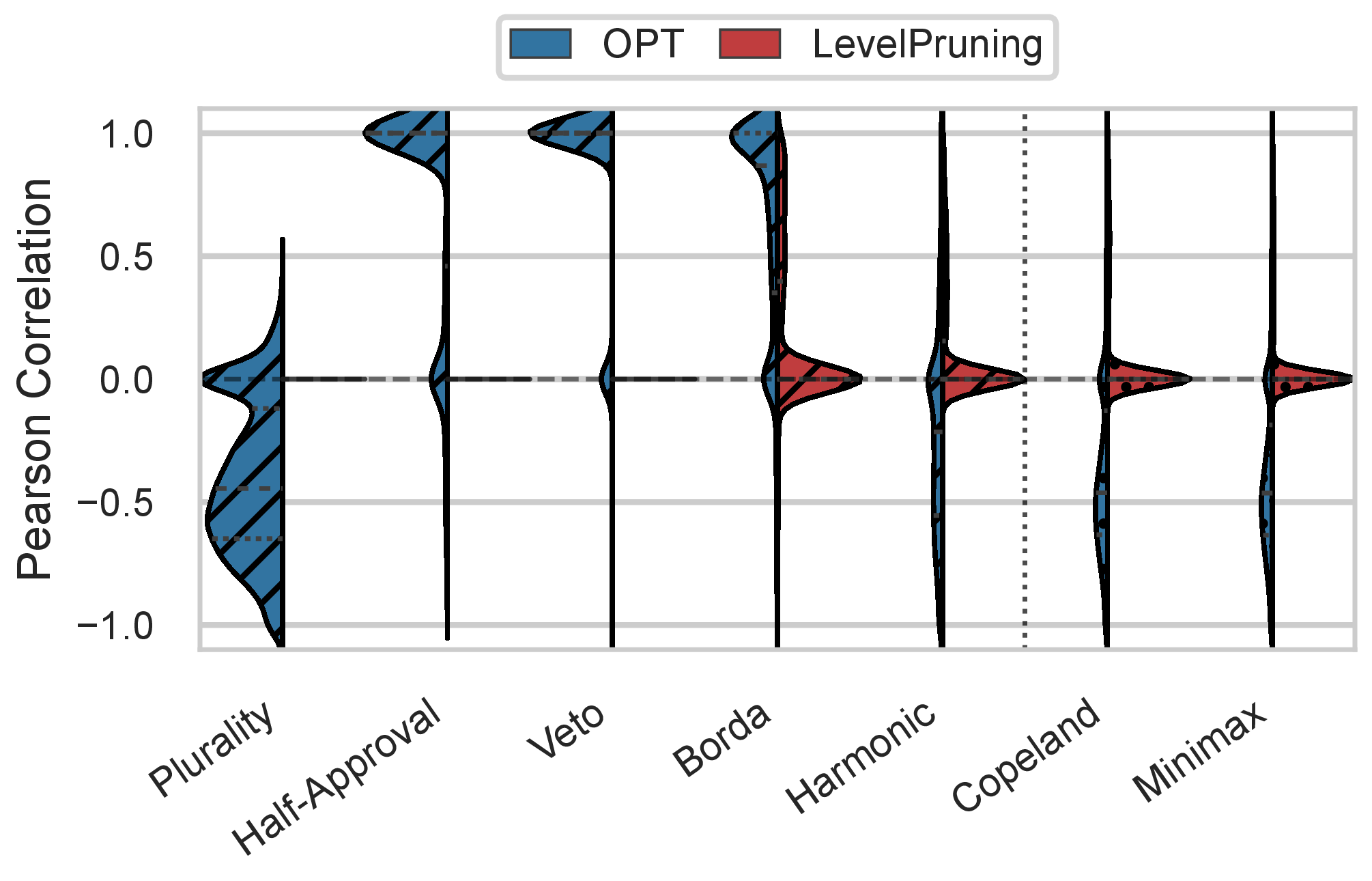}
        \caption{Violin plots illustrating the Pearson correlation between a voter's true rank of the winning candidate and their required query depth under both OPT and $\levelpr$ paradigms. The vertical dashed line separates positional scoring rules on the left from Condorcet-based rules on the right. The distinct distributions reveal the underlying algorithmic strategies for each voting rule.}
        \label{fig:plot3_violin}
    \end{minipage}%
\end{figure}

\subsection{Results and Analysis}
\paragraph{Performance.} \Cref{fig:plot1_cr} shows that the median competitive ratio of $\levelpr$ remains rather small (typically close to $2$) and does not exhibit any clear upward trend as the number of alternatives grows. Its performance is fairly consistent across Condorcet-consistent rules, slightly better on Borda, and slightly worse on Veto. The empirical comparison between Borda and Veto is consistent with their worst-case analyses.

\paragraph{Does Pruning Help?} \Cref{fig:plot2_bar}  compares the competitive ratios of $\levelpr$ and $\level$. For $k$-approval rules (Plurality, Half-Approval, and Veto), they are identical for a simple reason: until all voters are queried for their top $k$ positions (or a unanimous winner with a score of $n$ is observed), no alternative (and, thus, no voter) can be pruned as such an alternative can possibly appear in the $k$-th position of all voters who have not revealed it yet, making it a possible winner. Pruning provides modest savings for Harmonic, Copeland, and Minimax, and noticeably more for Borda.

\paragraph{Probing Strategy.} \Cref{fig:plot3_violin} depicts the correlation between ranks to which voters are queried in $\levelpr$ and $\opt$ and ranks at which they place the true winner. $\levelpr$ maintains a near-zero correlation because it largely queries level-by-level, querying most voters to the same maximum depth $d^{\max}$; indeed, for $k$-approval rules, we observed above that it never prunes any voters, resulting in a Pearson correlation coefficient of exactly $0$ on every instance. The probing strategy of $\opt$, on the other hand, is highly complex, and shows a tension between two competing tendencies:
\begin{itemize}
    \item \textbf{Negative correlation:} For Condorcet-consistent rules such as Copeland and Minimax, the negative correlation comes largely from the fact that $\opt$ chooses to make no queries to many detractors who rank the winner low; instead, it verifies the winner by making a few queries to its supporters, thereby uncovering it at high ranks and deducing that it defeats the large number of alternatives placed below it in those rankings.
    \item \textbf{Positive correlation with peaks at $1$ and $0$:} For $k$-approval rules with a large $k$, $\opt$ has a dichotomous behavior. When a voter ranks the winner high (thus in the top $k$), $\opt$ makes just as many queries as required to uncover the winner, confirming that the winner gets a score of $1$ from that voter in the $k$-approval tally; the number of queries being equal to the rank of the winner creates a peak at $1$. In contrast, when a voter ranks the winner low (but still within the top $k$), $\opt$ queries the voter till the $k$-th position regardless of the rank of the winner as making a few extra queries after uncovering the winner is worth it to also establish that all alternatives who do not appear in the top $k$ do not get a score of $1$ from that voter in the $k$-approval tally; the number of queries being exactly $k$ regardless of the rank of the winner creates a peak around $0$. This pattern shows up for rules such as Half-Approval and Veto.
\end{itemize}
Plurality is special: voters who place the winner at rank $1$ are queried till rank $1$, and other voters who place the winner at rank greater than $1$ are queried till rank either $0$ or $1$, resulting in the largely negative correlation. Borda and Harmonic lie in-between with Borda acting similarly to $k$-approval with a larger $k$ while Harmonic, with its sharply dropping scores, acts more like $k$-approval with a smaller $k$.

\section{Discussion}\label{sec:discussion}

Our work shows that active elicitation of preferences coupled with the use of
competitive ratio to measure performance has remained severely underexplored in
social choice theory---a paradigm whose relevance continues to grow with online platforms deciding between a large number of alternatives by actively eliciting stakeholder preferences.

\paragraph{The optimal competitive ratio for Borda.} An immediate direction for future work is to settle the optimal competitive ratio for Borda count. Our two algorithms exploit different structures and yield complementary results: $\levelpr$ uses level-wise pruning, extends to concave scoring rules, and has an $O(m^{2/3})$ guarantee for Borda, whereas $\multiscale$ uses multi-scale score thresholds and improves the worst-case upper bound for Borda to $O(\sqrt m)$. The $\Omega(\sqrt m)$ construction for $\levelpr$ is algorithm-specific; the best lower bound against arbitrary Borda elicitation algorithms remains the constant $3$. It therefore remains open whether Borda admits a constant-competitive elicitation algorithm, or whether every algorithm must have a competitive ratio that grows with $m$. Extending the theoretical analysis to other scoring rules of interest, such as the harmonic rule, would also be interesting.

\paragraph{Alternative query models.} Our guarantees are specific to the next-best query model, where voters gradually extend a revealed prefix of their preference rankings. Other preference-elicitation interfaces are also common. For example, platforms such as \polis and \remesh often collect approval preferences or pairwise comparisons. Optimizing the elicitation of such preferences via competitive ratio analysis is an interesting future direction.

\paragraph{Extension to other models.}
More broadly, competitive analysis may be useful beyond winner determination under popular rules. For example, there is a rapidly expanding literature on distortion, which studies how limited preference information affects welfare approximation in voting~\cite{AFSV21}. In particular, communication-distortion tradeoffs ask how many bits or queries suffice in the worst case over all instances to achieve a desired approximation guarantee~\cite{MPSW19,MSW20,kempe2020communication} whereas our competitive-ratio perspective asks how many queries are needed in a given instance, relative to the instance-specific optimum, to achieve a given target. In this paper, the target was winner determination under popular voting rules, but future work can replace this by a desired distortion guarantee or a desired level of fairness, e.g., proportional fairness~\cite{ebadian2024optimized} or justified representation~\cite{lackner2023multi}.

\section*{Acknowledgments}
This work was supported by an NSERC Discovery Grant and an NSERC-CSE Research Communities
Grant. Researchers funded through the NSERC-CSE Research Communities Grants do not represent
the Communications Security Establishment Canada or the Government of Canada. Any research,
opinions or positions they produce as part of this initiative do not represent the official views of the
Government of Canada.

\section*{AI Disclosure}
The authors used GPT-5.6-Sol to improve the presentation of several proofs. Further, the $O(\sqrt{m})$ improvement to the competitive ratio for Borda count in \Cref{sec:multiscale-borda} was derived entirely by GPT-5.6-Sol based on the conference version of the paper that contained the $O(m^{2/3})$ bound and its analysis.

\printbibliography

@STRING{proc = {Proceedings of the }}

@STRING{ijcai = { International Joint Conference on Artificial Intelligence (IJCAI)}}

@STRING{aamas = { International Conference on Autonomous Agents and Multi-Agent Systems (AAMAS)}}

@STRING{aaai = { AAAI Conference on Artificial Intelligence (AAAI)}}

@STRING{ec = { ACM Conference on Economics and Computation (EC)}}

@STRING{emnlp = {  Conference on Empirical Methods in Natural Language Processing (EMNLP)}}

@STRING{icml = { International Conference on Machine Learning (ICML)}}

@STRING{nips = { Annual Conference on Neural Information Processing Systems (NeurIPS)}}

@STRING{neurips = { Annual Conference on Neural Information Processing Systems (NeurIPS)}}

@STRING{adt = { International Conference on Algorithmic Decision Theory (ADT)}}

@STRING{springer = {Springer-Verlag}}

@STRING{acm = {ACM Press}}

@STRING{ieee = {IEEE Press}}

@article{ebadian2024optimized,
  title={Optimized distortion and proportional fairness in voting},
  author={Ebadian, Soroush and Kahng, Anson and Peters, Dominik and Shah, Nisarg},
  journal={ACM Transactions on Economics and Computation},
  volume={12},
  number={1},
  pages={1--39},
  year={2024},
  publisher={ACM New York, NY}
}

@inproceedings{borodin2022distortion,
  title={Distortion in voting with top-t preferences},
  author={Borodin, Allan and Halpern, Daniel and Latifian, Mohamad and Shah, Nisarg},
  booktitle= proc # {31st} # ijcai,
  pages={116--122},
  year={2022}
}

@inproceedings{HMSK21,
  title={Necessarily optimal one-sided matchings},
  author={Hosseini, Hadi and Menon, Vijay and Shah, Nisarg and Sikdar, Sujoy},
  booktitle= proc # {35th} # aaai,
  pages={5481--5488},
  year={2021}
}

@inproceedings{AFSV21,
  title={Distortion in Social Choice Problems: The First 15 Years and Beyond},
  author={Anshelevich, Elliot and Filos-Ratsikas, Aris and Shah, Nisarg and Voudouris, Alexandros A},
  booktitle= proc # {30th} # ijcai,
  pages={4294--4301},
  year={2021}
}

@inproceedings{MPSW19,
  author = {D. Mandal and A. D. Procaccia and N. Shah and D. P. Woodruff},
  title = {Efficient and Thrifty Voting by Any Means Necessary},
  booktitle = proc # {33rd} # nips,
  pages = {7178--7189},
  year = 2019
}

@inproceedings{MSW20,
  title={Optimal communication-distortion tradeoff in voting},
  author={Mandal, Debmalya and Shah, Nisarg and Woodruff, David P},
  booktitle= proc # {21st} # ec,
  pages={795--813},
  year={2020}
}

@inproceedings{kempe2020communication,
  title={Communication, distortion, and randomness in metric voting},
  author={Kempe, David},
  booktitle= proc # {34th} # aaai,
  pages={2087--2094},
  year={2020}
}

@inproceedings{ebadian2024metric,
  title={Metric Distortion with Elicited Pairwise Comparisons},
  author={Ebadian, Soroush and Halpern, Daniel and Micha, Evi},
  booktitle= proc # {33rd} # ijcai,
  pages={2791--2798},
  year={2024}
}

@article{halpern2026representation,
  author  = {Halpern, Daniel and Kehne, Gregory and Procaccia, Ariel D. and Tucker-Foltz, Jamie and W\"{u}thrich, Manuel},
  title    = {Representation with incomplete votes},
  journal = {Theory and Decision},
  year    = {2026},
  pages   = {257--296}
}

@inproceedings{oren2014online,
  title={Online (budgeted) social choice},
  author={Oren, Joel and Lucier, Brendan},
  booktitle= proc # {28th} # aaai,
  pages={1456--1462},
  year={2014}
}

@article{chen2018optimal,
  title={Optimal Instance Adaptive Algorithm for the Top-$ K $ Ranking Problem},
  author={Chen, Xi and Gopi, Sivakanth and Mao, Jieming and Schneider, Jon},
  journal={IEEE Transactions on Information Theory},
  volume={64},
  number={9},
  pages={6139--6160},
  year={2018}
}

@inproceedings{peters2022online,
  title={Online elicitation of necessarily optimal matchings},
  author={Peters, Jannik},
  booktitle= proc # {36th} # aaai,
  pages={5164--5172},
  year={2022}
}

@inproceedings{PriceZhao2023,
  title={A competitive algorithm for agnostic active learning},
  author={Price, Eric and Zhou, Yihan},
  booktitle= proc # {37th} # neurips,
  pages={58670--58691},
  year={2023}
}

@inproceedings{zhao2024electoral,
  title={An electoral approach to diversify llm-based multi-agent collective decision-making},
  author={Zhao, Xiutian and Wang, Ke and Peng, Wei},
  booktitle= proc # {2024} # emnlp,
  pages={2712--2727},
  year={2024}
}

@inproceedings{pennock2000normative,
  title={A Normative Examination of Ensemble Learning Algorithms},
  author={Pennock, David M and Maynard-Reid, Pedrito and Giles, C Lee and Horvitz, Eric},
  booktitle= proc # {17th} # icml,
  pages={735--742},
  year={2000}
}

@book{borodin2005online,
  title={Online computation and competitive analysis},
  author={Borodin, Allan and El-Yaniv, Ran},
  year={2005},
  publisher={Cambridge University Press}
}

@article{sleator1985amortized,
  title={Amortized efficiency of list update and paging rules},
  author={Sleator, Daniel D and Tarjan, Robert E},
  journal={Communications of the ACM},
  volume={28},
  number={2},
  pages={202--208},
  year={1985}
}

@inproceedings{conitzer2005communication,
  title={Communication complexity of common voting rules},
  author={Conitzer, Vincent and Sandholm, Tuomas},
  booktitle= proc # {6th} # ec,
  pages={78--87},
  year={2005}
}

@inproceedings{service2012communication,
  title={Communication complexity of approximating voting rules},
  author={Service, Travis C and Adams, Julie A},
  booktitle= proc # {11th} # aamas,
  pages={593--602},
  year={2012}
}

@inproceedings{conitzer2007eliciting,
  title={Eliciting single-peaked preferences using comparison queries},
  author={Conitzer, Vincent},
  booktitle= proc # {6th} # aamas,
  pages={1--8},
  year={2007}
}

@misc{konczak2005voting,
  title        = {Voting Procedures with Incomplete Preferences},
  author       = {Konczak, Kathrin and Lang, J{\'e}r{\^o}me},
  howpublished = {Presented at the IJCAI-05 Multidisciplinary Workshop on Advances in Preference Handling},
  year         = {2005},
  note         = {Unpublished}
}

@article{xia2011determining,
  title={Determining possible and necessary winners given partial orders},
  author={Xia, Lirong and Conitzer, Vincent},
  journal={Journal of Artificial Intelligence Research (JAIR)},
  volume={41},
  pages={25--67},
  year={2011}
}

@article{tessler2024ai,
author = {Michael Henry Tessler  and Michiel A. Bakker  and Daniel Jarrett  and Hannah Sheahan  and Martin J. Chadwick  and Raphael Koster  and Georgina Evans  and Lucy Campbell-Gillingham  and Tantum Collins  and David C. Parkes  and Matthew Botvinick  and Christopher Summerfield },
title = {{AI} can help humans find common ground in democratic deliberation},
journal = {Science},
volume = {386},
number = {6719},
pages = {eadq2852},
year = {2024},
}

@misc{KSIO23,
	author = {A. Konya and L. Schirch and C. Irwin and A. Ovadya},
	title = {Democratic Policy Development Using Collective Dialogues and {AI}},
	howpublished = {arXiv:2311.02242},
	year = 2023
}

@inproceedings{KTAA+25,
  author    = {Konya, Andrew and Thorburn, Luke and Almasri, Wasim and Leshem, Oded Adomi and Procaccia, Ariel D. and Schirch, Lisa and Bakker, Michiel A.},
  title     = {Using collective dialogues and {AI} to find common ground between {Israeli} and {Palestinian} peacebuilders},
  booktitle = {Proceedings of the 2025 ACM Conference on Fairness, Accountability, and Transparency (FAccT)},
  pages     = {312--333},
  year      = {2025}
}

@article{SBES+21,
	author = {C. Small and M. Bjorkegren and T. Erkkil\"a and L. Shaw and C. Megill},
	title = {Polis: Scaling Deliberation by Mapping High Dimensional Opinion Spaces},
	journal = {Revista De Pensament I An\`alisi},
	volume = 26,
	number = 2,
	year = 2021
}

@book{lackner2023multi,
  author = {Lackner, Martin and Skowron, Piotr},
  title = {Multi-Winner Voting with Approval Preferences},
  year = {2023},
  publisher = {Springer}
}

@book{BCEL+16,
	editor = {F. Brandt and V. Conitzer and U. Endriss and J. Lang and A. D. Procaccia},
	title = {Handbook of Computational Social Choice},
	publisher = {Cambridge University Press},
	year = 2016
}

@article{Fish74,
	author = {P. C. Fishburn},
	title = {Paradoxes of Voting},
	journal = {American Political Science Review},
	volume = 68,
	number = 2,
	pages = {537-546},
	year = 1974
}

@inproceedings{MPC13,
	author = {A. Mao and A. D. Procaccia and Y. Chen},
	title = {Better Human Computation Through Principled Voting},
	booktitle = proc # {27th} # aaai,
	pages = {1142--1148},
	year = 2013
}

@inproceedings{MW13,
	author = {N. Mattei and T. Walsh},
	title = {Pref{L}ib: A Library for Preferences},
	booktitle = proc # {3rd} # adt,
	pages = {259-–270},
	year = 2013
}

\newpage
\appendix
\section*{\centering Appendix}
\section{Proofs for Section 3}

\thmkApproval*
\begin{proof}
\myparagraph{Upper bound.} The algorithm makes $nk$ queries. We show that every certificate has cost at least $n$. Because positions below $k$ do not affect $k$-approval, truncate any certificate after depth $k$. Fix a true winner $w$, let $q$ be the certificate cost, let $F$ be the number of voters queried through depth $k$, and let $R_w$ be the number of voters at which the certificate reveals that $w$ is approved. 
A fully queried top-$k$ prefix is said to exclude $a$ if $a$ does not appear among its $k$ revealed alternatives and therefore receives no approval point from that voter.
For every $a\ne w$, let $X_a$ be the number of fully queried top-$k$ prefixes that exclude $a$. The certificate must make $w$ a necessary winner, so $R_w\ge n-X_a$ for every $a\ne w$. Every fully queried prefix excludes exactly $m-k$ alternatives, and therefore
\[
(m-1)(n-R_w)\le \sum_{a\ne w}X_a\le (m-k)F.
\]
At most \(F\) revealed approvals of \(w\) come from the fully queried voters, while every other revealed approval uses at least one of the \(q-kF\) remaining queries. Hence \(R_w\le F+(q-kF)=q-(k-1)F\). Then
\[
(m-1)\bigl(n-q+(k-1)F\bigr) \le (m-1)(n-R_w)\le (m-k)F.
\]
Rearranging yields
\[
(m-1)(n-q)
\le \bigl(m-k-(m-1)(k-1)\bigr)F.
\]
Since $k\ge 2$ and $F\ge 0$, the right is no more than $0$.
If $q<n$, however, 
\[
0<(m-1)(n-q)
\le \bigl(m-k-(m-1)(k-1)\bigr)F
\le 0,
\]
a contradiction. Thus every certificate has cost $q\ge n$, so
$\opt\ge n$. Since the algorithm makes $nk$ queries, its competitive
ratio is at most $nk/n=k$.

\myparagraph{Lower bound.} Fix \(k\) and \(m\), and let \(n\) be sufficiently large. Let \(A=\{a,b,c_1,\ldots,c_{m-2}\}\). Choose the top-\(k\) sets so that \(a\) is excluded from exactly one ballot, \(b\) is excluded from exactly two ballots, and every \(c_j\) is excluded from at least two ballots. Such sets exist for all sufficiently large \(n\): after reserving the required exclusions of \(a\) and \(b\), distribute the remaining exclusions cyclically among \(c_1,\ldots,c_{m-2}\). Rank \(a\) first whenever it is approved, rank \(b\) in position \(k\) whenever it is approved, and fill the remaining positions consistently with these top-\(k\) sets. Thus \(a\) is the unique winner with score \(n-1\), while every other alternative has score at most \(n-2\).

Consider an adversary that answers \(b\) for the first \(n-2\) distinct voters queried in position \(k\), and reveals the two ballots excluding \(b\) only for the final two such voters. Until both exceptional ballots have been queried through depth \(k\), the transcript has a completion in which \(b\) is approved by at least \(n-1\) voters and hence ties or defeats \(a\). Consequently, every online algorithm must query every voter through depth \(k\), at a cost of \(nk\).

In hindsight, query the first position of the \(n-1\) voters who rank \(a\) first. For each alternative other than \(a\), query through depth \(k\) two ballots that exclude it. These queries establish \(\lb(a)=n-1\) and \(\ub(x)\le n-2\) for every \(x\ne a\), and their total cost is at most \(n-1+2k(m-1)\). It follows that
\[
\comp\ge \frac{nk}{n-1+2k(m-1)}\xrightarrow[n\to\infty]{}k.
\]
Together with the upper bound, this proves the theorem.
\end{proof}

\section{Proofs for Section 4}

\thmBordaLower*
\begin{proof}
We prove the lower bound for \(m=3\), with Borda scores \((2,1,0)\). Fix \(t\ge1\) and let \(n=3t\). The base profile \(P^0\) consists of \(2t\) voters with ranking \(a\succ b\succ c\) and \(t\) voters with ranking \(b\succ c\succ a\). Its score vector is \((\score(a),\score(b),\score(c))=(4t,4t,t)\), so its winner set is \(\{a,b\}\). We use two kinds of exceptional profiles. An \(A\)-exception replaces one ballot \(a\succ b\succ c\) by \(a\succ c\succ b\), giving scores \((4t,4t-1,t+1)\). A \(B\)-exception replaces one ballot \(b\succ c\succ a\) by \(b\succ a\succ c\), giving scores \((4t+1,4t,t-1)\). Thus \(a\) is the unique winner in every exceptional profile.

\myparagraph{Adversary.} Fix a deterministic online algorithm \(\calA\). On the first-position query to each new voter, the adversary answers \(b\) for the first \(t\) such voters and \(a\) for the remaining \(2t\). On every second-position query except the last one, it gives the base-profile response: \(c\) after \(b\) and \(b\) after \(a\). As long as some second position remains unqueried, the transcript is consistent both with the base profile \(P^0\) and with an exceptional profile obtained by placing the exception at an unqueried second position. These completions have different winner sets, so [C1] prevents \(\calA\) from terminating. On the final second-position query, the adversary gives the exceptional response, namely \(a\) after \(b\) or \(c\) after \(a\), producing the corresponding exceptional profile \(P^\star\). Therefore \(\calA\) makes all \(3t\) first-position and all \(3t\) second-position queries, for a total cost of \(6t\).

\myparagraph{Certificate for an \(A\)-exception.} Query the first position of all \(2t\) voters who rank \(a\) first, and query the second position of the exceptional voter and of one ordinary \(a\succ b\succ c\) voter. These \(2t+2\) queries give \(\lb(a)=4t\) and \(\ub(b),\ub(c)\le4t-1\), certifying \(a\) as the unique winner.

\myparagraph{Certificate for a \(B\)-exception.} Query the first position of all \(2t\) voters who rank \(a\) first, and query the first two positions of the exceptional \(b\succ a\succ c\) voter. These \(2t+2\) queries give \(\lb(a)=4t+1\), \(\ub(b)\le4t\), and \(\ub(c)\le4t-2\), again certifying \(a\) as the unique winner. Hence \(\opt(P^\star)\le2t+2\) in either case, and
\[
\comp(\calA)\ge \frac{6t}{2t+2}=\frac{3t}{t+1}\xrightarrow[t\to\infty]{}3.
\]
\end{proof}

\subsection{Full Proof for Concave Scoring Rules}
\label{app:concave-levelpr-proof}

The proof below supplies the supporting structural and certificate lemmas and then proves \Cref{thm:concave-levelpr}. It is included from the standalone proof source so that the arXiv version and the independently checkable proof remain synchronized.

\begingroup
\def\ARXIVAPPENDIX{1}
\ifdefined\ARXIVAPPENDIX
\else
\documentclass[11pt]{article}

\usepackage{amsmath, amssymb, amsthm}
\usepackage{mathtools}
\usepackage[hidelinks]{hyperref}
\usepackage[margin=1in]{geometry}

\newtheorem{theorem}{Theorem}
\newtheorem{lemma}[theorem]{Lemma}
\newtheorem{proposition}[theorem]{Proposition}
\newtheorem{corollary}[theorem]{Corollary}
\theoremstyle{remark}
\newtheorem*{remark}{Remark}
\theoremstyle{definition}
\newtheorem{definition}[theorem]{Definition}

\newcommand{\score}{\operatorname{score}}
\newcommand{\lb}{\operatorname{lb}}
\newcommand{\ub}{\operatorname{ub}}
\newcommand{\opt}{\operatorname{OPT}}
\newcommand{\cost}{\operatorname{cost}}
\newcommand{\comp}{\rho}
\newcommand{\levelpr}{\textsc{LevelPruning}}
\newcommand{\myparagraph}[1]{\medskip\noindent\textbf{#1}\enspace}

\begin{document}
\fi

We work in the next-best query model with $n$ voters and a set $A$ of $m \ge 3$ alternatives, and we write $\pi_i(a)$ for the rank of $a$ in the ranking of voter $i$ and $\score(a) = \sum_{i} s_{\pi_i(a)}$ for the score of $a$ under a scoring vector $(s_1,\ldots,s_m)$ with $s_1 \ge \cdots \ge s_m$. An alternative $a$ is \emph{exact} at voter $i$ if $\pi_i(a) \le e_i$, where $e_i$ is the current query depth at $i$, or if $e_i = m-1$, in which case the rank of the unique unrevealed alternative is inferred to be $m$. Given a transcript, $\lb(a)$ and $\ub(a)$ denote the smallest and the largest score of $a$ over all consistent completions, namely
\[
  \lb(a) = \sum_{i : a \text{ exact at } i} s_{\pi_i(a)} + \sum_{i : a \text{ not exact at } i} s_m,
  \qquad
  \ub(a) = \sum_{i : a \text{ exact at } i} s_{\pi_i(a)} + \sum_{i : a \text{ not exact at } i} s_{e_i+1}.
\]
A transcript is a \emph{certificate} if all its completions share the same winner set, $\opt$ is the minimum cost of a certificate, and $\comp$ is the ratio between the cost of the online algorithm and $\opt$. Because the last-ranked alternative is inferred after \(m-1\) queries, we may assume that no certificate queries position \(m\), and hence every certificate depth satisfies \(e_i\le m-1\).

\begin{definition}[Concave scoring rule]
A scoring rule $(s_1,\ldots,s_m)$ with $s_1>s_m$ is \emph{concave} if
\[
  s_j - s_{j+1} \;\le\; s_{j+1} - s_{j+2}
  \qquad\text{for all } j=1,\ldots,m-2.
\]
Its \emph{top gap} is $\gamma := \frac{s_1-s_2}{s_1-s_m}$, and the \emph{deficit} of an alternative $a$ is $\delta(a) := \frac{n s_1 - \score(a)}{n(s_1-s_m)}$.
\end{definition}

\begin{lemma}[Invariance]
\label{concave:lem:invariance}
Let $\sigma>0$, let $\tau\in\mathbb{R}$, and let $s'=\sigma s+\tau$ coordinatewise. Then $s$ and $s'$ induce the same winner set on every profile, the same family of certificates and hence the same value of $\opt$, and the same run of $\levelpr$; moreover $s'$ is concave if and only if $s$ is, and $s$ and $s'$ have the same top gap and the same deficits. Consequently, we may and do assume $s_1=1$ and $s_m=0$ in the sequel, so that $\gamma = s_1-s_2$ and $\score(w)=n(1-\delta)$ for an alternative $w$ of deficit $\delta$. Furthermore, $\delta(w)\in[0,1)$ for every winner $w$.
\end{lemma}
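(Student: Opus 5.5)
The plan is to verify each invariance claim directly from the definitions, using that the map $x\mapsto\sigma x+\tau$ with $\sigma>0$ is strictly increasing and affine.

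\emph{Winner sets.} For any profile, $\score'(a)=\sum_i(\sigma s_{\pi_i(a)}+\tau)=\sigma\score(a)+n\tau$. The additive term $n\tau$ is the same for every alternative, since every voter ranks every alternative. So $\score'$ is a strictly increasing function of $\score$, and the argmax sets coincide. Note also that $s'_1>s'_m$ and the monotonicity $s'_1\ge\cdots\ge s'_m$ persist, so $s'$ is a valid scoring vector.

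\emph{Certificates and $\opt$.} Whether a transcript is a certificate depends only on the winner sets of its consistent completions. The set of consistent completions is independent of the scoring vector. Since each completion has the same winner set under $s$ and $s'$, the family of certificates is identical, and so is $\opt$.

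\emph{Runs of \levelpr.} The bounds transform as $\lb'(a)=\sigma\lb(a)+n\tau$ and $\ub'(a)=\sigma\ub(a)+n\tau$, because each per-voter term ($s_{\pi_i(a)}$, $s_m$, or $s_{e_i+1}$) is transformed by the same affine map. The same holds for the minimizing and maximizing completions used in the algorithm. Therefore the comparison $\ub(a)\ge\max_b\lb(b)$ is preserved, $P$ is the same at every iteration, the pruning decisions agree, and by induction over iterations the query sequences and outputs coincide.

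\emph{Concavity, $\gamma$, deficits.} Consecutive gaps scale as $s'_j-s'_{j+1}=\sigma(s_j-s_{j+1})$, so concavity is equivalent for $s$ and $s'$, and $\gamma'=\sigma(s_1-s_2)/(\sigma(s_1-s_m))=\gamma$. For the deficit, $ns'_1-\score'(a)=\sigma(ns_1-\score(a))$ and $n(s'_1-s'_m)=\sigma n(s_1-s_m)$, so $\delta'(a)=\delta(a)$.

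\emph{Normalization.} Take $\sigma=1/(s_1-s_m)$ and $\tau=-s_m/(s_1-s_m)$, which gives $s_1=1$ and $s_m=0$. Then $\gamma=s_1-s_2$ and $\score(w)=n(1-\delta)$.

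\emph{Range of $\delta$ for a winner.} The bound $\delta\ge0$ holds because $\score(a)\le ns_1$. For $\delta<1$, which is the only mildly nontrivial step, suppose $\delta(w)=1$. Then $\score(w)=0$, so every voter ranks $w$ last ($s_m=0$ and all $s_j\ge0$). Hence another alternative is ranked first by some voter (indeed by all, with $n\ge1$), giving it score at least $s_1=1>0$. This contradicts $w$ being a winner, so $\delta(w)<1$.
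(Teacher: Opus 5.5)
Your argument matches the paper's for every invariance claim: scores and both bounds transform as $x\mapsto\sigma x+n\tau$, so the comparisons defining winners, certificates and the set $P$ in \levelpr{} are preserved, and ratios of gaps give concavity, $\gamma$ and $\delta$. The last step differs but works equally well: the paper gets $\delta(w)<1$ because a winner scores at least the average $\frac nm\sum_j s_j>ns_m$, whereas you compare $w$ with a voter's top choice. One small correction there is that $\score(w)=0$ does not force $w$ to be ranked last unless $s_{m-1}>0$; this holds for concave rules, where the last gap is at least $1/(m-1)$, but fails e.g.\ for plurality with $m\ge3$. You only need that no voter ranks $w$ first, which already follows from $s_1=1>0$.
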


\begin{proof}
Each of $\score(a)$, $\lb(a)$ and $\ub(a)$ is a sum of exactly one entry of the scoring vector per voter, so all three transform as $x\mapsto\sigma x+n\tau$ when $s$ is replaced by $s'$. Every notion above is defined by comparisons among such quantities, which are preserved by an increasing affine map: the winner set by $\score(a)\ge\score(b)$, the certificate condition and $\opt$ by the requirement that the winner set be the same in all completions, and the run of $\levelpr$ by the tests $\ub(a)\ge\max_b\lb(b)$ that determine the set $P$ and hence the pruning. Concavity is preserved because the second differences of $s'$ are $\sigma$ times those of $s$, and $\gamma$ and $\delta(a)$ are preserved because they are ratios of differences of scores, in which the additive term $n\tau$ and the factor $\sigma$ cancel. Finally, $\score(w)\le ns_1$ gives $\delta(w)\ge0$, while a winner has at least the average score, so $\score(w)\ge\frac nm\sum_{j}s_j>ns_m$ and $\delta(w)<1$.
\end{proof}

\begin{remark}
The top gap must be measured relative to $s_1-s_m$ rather than to $s_1$: the quantity $1-s_2/s_1$ is invariant under scaling but not under shifting, so it is not a property of the rule. For instance, $(2,1,0)$ and $(3,2,1)$ are the same rule, yet the former gives $1-s_2/s_1=1/2$ and the latter gives $1/3$, whereas $\gamma=1/2$ for both.
\end{remark}

\begin{lemma}[Properties of concave rules]
\label{concave:lem:concave}
Let $(s_1,\ldots,s_m)$ be a concave scoring rule, normalized as in Lemma~\ref{concave:lem:invariance} so that $s_1=1$ and $s_m=0$. Then (i) $s_j-s_{j+1}\ge\gamma$ for all $j \in \{1,\ldots,m-1\}$; (ii) $s_j\le 1-(j-1)\gamma$ for all $j \in \{1,\ldots,m\}$; (iii) $1-s_{j+1}\le j/(m-1)$ for all $j \in \{0,1,\ldots,m-1\}$; and (iv) $\gamma\le 1/(m-1)$, with equality if and only if the rule is Borda, i.e., $s_j=(m-j)/(m-1)$.
\end{lemma}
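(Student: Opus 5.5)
The plan is to work entirely with the consecutive gaps $\Delta_j := s_j - s_{j+1}$ for $j \in \{1,\ldots,m-1\}$. Under the normalization $s_1=1$, $s_m=0$ of \Cref{concave:lem:invariance}, we have $\gamma = \Delta_1$. Concavity says exactly that the sequence $(\Delta_j)$ is nondecreasing. The monotonicity $s_j \ge s_{j+1}$ gives $\Delta_j \ge 0$. Telescoping gives $\sum_{j=1}^{m-1}\Delta_j = s_1 - s_m = 1$. All four claims will follow from these three facts.

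For (i), note that $\Delta_j \ge \Delta_1 = \gamma$ holds for every $j$, since the gaps are nondecreasing. For (ii), telescope from the top:
\[
s_j = s_1 - \sum_{k=1}^{j-1}\Delta_k \le 1-(j-1)\gamma .
\]
For (iii), write $1 - s_{j+1} = \sum_{k=1}^{j}\Delta_k$, which is the sum of the $j$ smallest gaps. A nondecreasing sequence has the property that the average of its first $j$ terms is at most the average of all $m-1$ terms. One clean way to see this is to compare the quantity
\[
(m-1)\sum_{k\le j}\Delta_k - j\sum_{k\le m-1}\Delta_k ,
\]
pairing each $k \le j$ against each $l > j$ and using $\Delta_k \le \Delta_l$. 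This gives $\sum_{k\le j}\Delta_k \le j/(m-1)$. The case $j=0$ is trivial.

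For (iv), apply (iii) with $j=1$, which gives $\gamma = \Delta_1 \le 1/(m-1)$. Alternatively, sum (i) over all $m-1$ gaps to get $1 \ge (m-1)\gamma$. For the equality case: if $\gamma = 1/(m-1)$, then $\sum_j (\Delta_j - \gamma) = 0$ with every term nonnegative by (i). Hence $\Delta_j = 1/(m-1)$ for all $j$, and therefore $s_j = 1 - (j-1)/(m-1) = (m-j)/(m-1)$, which is Borda. Conversely, normalized Borda has all gaps equal to $1/(m-1)$, so it is concave with $\gamma = 1/(m-1)$.

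I expect no real obstacle here. The only step needing care is the averaging inequality in (iii), which the pairwise-comparison argument above dispatches. I would also remark that (iv) is consistent with veto ($\gamma = 0$) and that the equality characterization uses the normalization, so "Borda" is meant up to the positive affine maps of \Cref{concave:lem:invariance}.
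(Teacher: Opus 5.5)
Your proof is correct and follows essentially the same route as the paper: both work with the nondecreasing gap sequence $s_j-s_{j+1}$ summing to $1$, obtain (i) and (ii) by monotonicity and telescoping, and derive the bound in (iv) from (iii) with $j=1$. The only differences are local. You prove the averaging inequality in (iii) by pairing each of the first $j$ gaps against each later gap, whereas the paper shows that the running averages $A_j$ are nondecreasing. You settle the equality case of (iv) through the vanishing sum of the nonnegative slacks $\Delta_j-\gamma$, whereas the paper uses the chain $A_1=A_{m-1}$; your version is, if anything, slightly more direct.
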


\begin{proof}
Write $x_j := s_j-s_{j+1}$ for $j \in \{1,\ldots,m-1\}$, so that $x_1=\gamma$, the sequence $(x_j)_j$ is non-decreasing by concavity, and $\sum_{j=1}^{m-1}x_j=s_1-s_m=1$. Part (i) holds because $x_j\ge x_1=\gamma$, and part (ii) follows by telescoping, since $1-s_j=\sum_{k<j}x_k\ge(j-1)\gamma$. For part (iii), let $A_j:=\frac1j\sum_{k\le j}x_k$. For every $j \le m-2$ we have $x_{j+1}\ge x_k$ for all $k\le j$ and hence $x_{j+1}\ge A_j$, which gives $A_{j+1}=\frac{jA_j+x_{j+1}}{j+1}\ge A_j$. Thus $(A_j)_j$ is non-decreasing, so $A_j\le A_{m-1}=1/(m-1)$, i.e., $1-s_{j+1}=jA_j\le j/(m-1)$; the case $j=0$ is trivial. Part (iv) is part (iii) with $j=1$. Equality means $A_1=A_{m-1}$, which by the monotonicity of $(A_j)_j$ forces $x_{j+1}=A_j=A_{j+1}$ for every $j\le m-2$, so all differences equal $1/(m-1)$, which is Borda; conversely Borda has $\gamma=1/(m-1)$.
\end{proof}

\begin{remark}
Since the $m-1$ consecutive differences sum to $1$ and each is at least $\gamma$ by concavity, Borda is the borderline linear case and \emph{maximises} $\gamma$ among concave scoring rules, every strictly concave rule having $\gamma<1/(m-1)$. Note also that Borda has $\gamma=1/(m-1)$ irrespective of the representation of its scoring vector. The class of concave rules contains Borda and veto ($\gamma=0$ for $m\ge3$), but neither $k$-approval for $2\le k\le m-2$, whose difference sequence $0,\ldots,0,1,0,\ldots,0$ is not monotone, nor the harmonic rule, whose differences are proportional to $1/(j(j+1))$ and hence decreasing. When $\gamma=0$, the full-elicitation branch of the theorem below applies. For veto, the sharper tight ratio $m-1$ follows from its equivalent representation as $(m-1)$-approval.
\end{remark}

\begin{lemma}[Necessary condition for a certificate]
\label{concave:lem:pairwise}
Let $Q$ be a certificate for a profile $P$ and let $w$ be a winner in $P$. Then $\lb(w)\ge\ub(a)$ for every $a\ne w$.
\end{lemma}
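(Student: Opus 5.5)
We argue by contradiction, building a consistent completion in which $w$ is not a winner. Suppose $Q$ is a certificate for $P$, $w$ is a winner in $P$, and some $a\ne w$ has $\ub(a)>\lb(w)$.

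First, note that the lower and upper bounds are attained \emph{simultaneously} by a single completion. Consider each voter $i$ separately.
\begin{itemize}
\item If both $a$ and $w$ are exact at $i$, their contributions are fixed.
\item If $a$ is not exact at $i$, place $a$ at position $e_i+1$, the highest unrevealed position.
\item If $w$ is not exact at $i$, place $w$ at position $m$, the lowest position.
\end{itemize}
These placements never collide. If both are non-exact, then $e_i\le m-2$ (when $e_i=m-1$ at most one alternative is non-exact), so $e_i+1<m$. Fill the remaining unrevealed positions arbitrarily. The resulting completion $P'$ is consistent with the transcript, and in it $\score_{P'}(a)=\ub(a)$ and $\score_{P'}(w)=\lb(w)$. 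This matches the closed-form expressions for $\lb$ and $\ub$ given earlier.

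Now $\score_{P'}(a)>\score_{P'}(w)$, so $w\notin f(P')$. But $w\in f(P)$, and $P$ and $P'$ are both completions of the same transcript, so their winner sets differ. This contradicts $Q$ being a certificate, and hence $\lb(w)\ge\ub(a)$ for every $a\ne w$.

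The weak inequality is the right conclusion. It allows ties, in which case $a$ and $w$ are co-winners in every completion realising equality, consistent with $w$ being a winner in all of them. The only step that needs care is checking that the two extremal placements can be realised at the same time, which is the case analysis on $e_i=m-1$ above.
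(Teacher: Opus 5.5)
Your proposal is correct and follows the paper's proof essentially verbatim: you build a single consistent completion that places a non-exact $a$ at position $e_i+1$ and a non-exact $w$ at position $m$, so that $\ub(a)$ and $\lb(w)$ are realised simultaneously, and this contradicts the certificate property. One small sharpening: under the paper's convention no alternative is non-exact when $e_i=m-1$, because the omitted alternative is exact by inference, so your ``at most one'' is true but can be replaced by ``none''.
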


\begin{proof}
Suppose $\lb(w)<\ub(a)$ for some $a\ne w$. We exhibit a single completion of $Q$ in which the score of $w$ is $\lb(w)$ and the score of $a$ is $\ub(a)$. Fix a voter $i$ with depth $e_i$. If both $w$ and $a$ are non-exact at $i$, then $e_i\le m-2$, so the two positions $e_i+1$ and $m$ are both open; place $a$ at position $e_i+1$ and $w$ at position $m$, so that the contributions of these two alternatives at voter $i$ are $s_{e_i+1}$ and $s_m=0$, respectively, which are exactly their contributions to $\ub(a)$ and $\lb(w)$. If exactly one of them is non-exact, place it at position $e_i+1$ if it is $a$ and at position $m$ if it is $w$, which again realises those two contributions; if both are exact, their contributions are already realised. The remaining alternatives are placed arbitrarily in the remaining open positions. Summing over voters, this completion has $\score(w)=\lb(w)<\ub(a)=\score(a)$, so $w$ is not a winner in it, contradicting the assumption that $Q$ is a certificate.
\end{proof}

\begin{lemma}[OPT lower bounds for concave scoring rules]
\label{concave:lem:opt}
Let $(s_1,\ldots,s_m)$ be a concave scoring rule, normalized so that $s_1=1$ and $s_m=0$, and let $w$ be a winner of deficit $\delta$, i.e., $\score(w)=n(1-\delta)$. Then every certificate, and in particular the profile-aware optimum, has cost at least $\max\{n(m-1)\delta,\, n/2\}$.
\end{lemma}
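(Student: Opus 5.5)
We prove the two lower bounds separately; both start from Lemma~\ref{concave:lem:pairwise}, which says every certificate must satisfy $\lb(w)\ge\ub(a)$ for every $a\ne w$. Let $Q$ be a certificate with depths $e_i\le m-1$, so its cost is $q=\sum_i e_i$.

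\emph{First bound, $q\ge n(m-1)\delta$.} Since $\lb(w)\le\score(w)=n(1-\delta)$, it suffices to find a challenger $a\ne w$ with $\ub(a)\ge n-\frac{q}{m-1}$. We average $\ub(a)$ over all $a\ne w$. At voter $i$ with depth $e_i$, the alternatives other than $w$ split into two groups:
\begin{itemize}
\item revealed ones, which occupy distinct ranks among $1,\ldots,e_i$;
\item non-exact ones, each contributing $s_{e_i+1}$.
\end{itemize}
Then the total shortfall $\sum_{a\ne w}(1-\ub_i(a))$ is at most
\[
\sum_{j=1}^{e_i}(1-s_j)+(m-1)(1-s_{e_i+1}).
\]
This is slightly generous, since one slot may belong to $w$; it is an upper bound because $w$ contributes nonnegative shortfall. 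Lemma~\ref{concave:lem:concave}(iii) gives $1-s_{j}\le (j-1)/(m-1)$. Hence the shortfall is at most
\[
\frac{e_i(e_i-1)}{2(m-1)}+e_i\le e_i\left(1+\frac{e_i-1}{2(m-1)}\right).
\]
This estimate is too weak by a constant factor, so a different counting is needed. The main obstacle is obtaining exactly the $e_i/(m-1)$ per-voter reduction for a single challenger rather than on average.

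To get it, pick the challenger to be a worst alternative rather than an average one. Restrict to the alternatives $a\ne w$, and note that $\ub_i(a)\ge s_{e_i+1}\ge 1-\frac{e_i}{m-1}$ holds whenever $a$ is not revealed at $i$. A revealed $a$ can lose more, but at most $e_i$ alternatives are revealed at voter $i$. So I would instead use the loss bound
\[
\ub_i(a)\ge 1-\frac{e_i}{m-1}-\mathbf 1[a \text{ revealed at } i],
\]
and average over $a\ne w$. This gives
\[
\frac{1}{m-1}\sum_{a\ne w}\ub(a)\ge n-\sum_i\frac{e_i}{m-1}-\sum_i\frac{e_i}{m-1}.
\]
That loses a factor $2$. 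To remove it, note that for revealed $a$ at rank $j\le e_i$ we actually have
\[
1-s_j\le\frac{j-1}{m-1}\le\frac{e_i}{m-1},
\]
so the indicator term is unnecessary. Therefore every $a\ne w$ has $\ub_i(a)\ge 1-\frac{e_i}{m-1}$ at every voter. This also covers the exact-last case $e_i=m-1$, where the bound is trivially $\ge 0$. Summing over voters gives $\ub(a)\ge n-\frac{q}{m-1}$ for every single $a$. Combined with $\lb(w)\le n(1-\delta)$ and the certificate condition, this yields $q\ge n(m-1)\delta$.

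\emph{Second bound, $q\ge n/2$.} Suppose $q<n/2$. Then the set $Z$ of unqueried voters has $|Z|>n/2$. The queried voters, numbering at most $q$, reveal at most $q$ alternatives in total, so the revealed alternatives other than $w$ number at most $q<n/2$. Pick $a\ne w$ together with the bound
\[
\ub(a)\ge |Z|\cdot 1+\sum_{i\notin Z}\ub_i(a)\ge |Z|,
\]
which uses $s_1=1$ at unqueried voters. Meanwhile
\[
\lb(w)\le\sum_{i\notin Z}\lb_i(w)\le n-|Z|<|Z|,
\]
since $w$ gets $0$ at every voter in $Z$. Hence $\ub(a)>\lb(w)$, contradicting Lemma~\ref{concave:lem:pairwise}. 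The only care needed is that some $a\ne w$ exists, which holds since $m\ge3$. Combining the two parts gives the claimed maximum.
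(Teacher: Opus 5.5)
Your final argument is correct and essentially the paper's own. Both show that every challenger's per-voter upper-bound contribution is at least $s_{e_i+1}\ge 1-e_i/(m-1)$ by concavity property (iii), and sum this against $\lb(w)\le \score(w)=n(1-\delta)$ via the pairwise certificate condition. Your $n/2$ bound is the same unqueried-voter count as the paper's. The two averaging attempts before the final per-voter bound are unnecessary detours and can simply be deleted.
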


\begin{proof}
Let $Q$ be a certificate with depths $(e_i)_{i=1}^{n}$ and cost $|Q|=\sum_i e_i$, and fix any $a\ne w$, which exists since $m\ge3$.

\medskip\noindent\textbf{Bound 1: $\opt\ge n(m-1)\delta$.} We first claim that the contribution of every voter $i$ to $\ub(a)$ is at least $s_{e_i+1}$. Indeed, if $a$ is not exact at $i$, that contribution is $s_{e_i+1}$; if $a$ is exact with $\pi_i(a)\le e_i$, it is $s_{\pi_i(a)}\ge s_{e_i+1}$; and if $a$ is exact by inference, then $e_i=m-1$ and it is $s_m=s_{e_i+1}$. Combining the claim with Lemma~\ref{concave:lem:pairwise} and with $\lb(w)\le\score(w)$ gives
\[
  n(1-\delta) \;=\; \score(w) \;\ge\; \lb(w) \;\ge\; \ub(a) \;\ge\; \sum_{i=1}^{n}s_{e_i+1},
  \qquad\text{that is,}\qquad
  \sum_{i=1}^{n}\bigl(1-s_{e_i+1}\bigr)\;\ge\;n\delta .
\]
The total reduction that the queries must achieve is thus $n\delta$, and by Lemma~\ref{concave:lem:concave}(iii) the $e_i$ queries made to voter $i$ reduce its term by $1-s_{e_i+1}\le e_i/(m-1)$; note that this is a statement about the average of the first $e_i$ differences, and not about any individual difference, which can well exceed $1/(m-1)$ for a concave rule. Summing over voters yields $|Q|=\sum_i e_i\ge n(m-1)\delta$.

\medskip\noindent\textbf{Bound 2: $\opt\ge n/2$.} Suppose $|Q|<n/2$. Then at most $|Q|<n/2$ voters have $e_i\ge1$, and a voter with $e_i=0$ has no exact alternative because $e_i=0<m-1$. Hence $\lb(w)\le|\{i:e_i\ge1\}|\cdot s_1<n/2$, whereas $\ub(a)\ge|\{i:e_i=0\}|\cdot s_1>n/2$, so $\lb(w)<\ub(a)$ and $Q$ is not a certificate by Lemma~\ref{concave:lem:pairwise}.
\end{proof}

We now analyse $\levelpr$, which elicits the profile level by level, maintains the set $P=\{a\in A:\ub(a)\ge\max_{b}\lb(b)\}$ of alternatives not yet ruled out, and stops querying a voter once every alternative of $P$ is exact at that voter. The algorithm uses no knowledge of $\delta$; the depth $d^*$ and the parameter $\lambda$ below are used only in the analysis, which splits the execution into the iterations up to depth $d^*$ (Phase~1) and the subsequent ones (Phase~2). Throughout, $\Lambda:=\max_{b\in A}\lb(b)$, and $P_d$ denotes the set $P$ computed at the end of iteration $d$. Recall that $\lb$ is non-decreasing and $\ub$ is non-increasing as queries accumulate, so $\Lambda$ is non-decreasing and the sets $P_d$ are nested; recall also that a winner $w$ satisfies $\ub(w)\ge\score(w)\ge\score(b)\ge\lb(b)$ for all $b$ and hence belongs to $P_d$ for every $d$.

\ifdefined\ARXIVAPPENDIX
\begin{proof}[Complete proof of \Cref{thm:concave-levelpr}]
\else
\begin{theorem}[Concave scoring rules]
\label{concave:thm:concave}
Let $(s_1,\ldots,s_m)$ be a concave scoring rule with $m\ge3$ and top gap $\gamma\ge0$, and let $w$ be a winner of deficit $\delta$. Then the cost of $\levelpr$ is at most
\[
  \begin{cases}
    n(m-1), & \text{if } \gamma=0 \text{ or } \delta^2>\gamma,\\[2pt]
    \dfrac{7}{4}\,n\,\dfrac{\delta^{2/3}}{\gamma^{4/3}}+3n+m, & \text{if } \gamma>0,\ \gamma^2\le\delta \text{ and } \delta^2\le\gamma,\\[6pt]
    5n+m, & \text{if } \gamma>0 \text{ and } \delta<\gamma^2 .
  \end{cases}
\]
Consequently, when $\gamma>0$, by Lemma~\ref{concave:lem:opt}, on every profile with $n\ge m$,
\[
  \comp \;\le\; \frac{7}{2\,\bigl(2(m-1)\bigr)^{2/3}\gamma^{4/3}}
  \;+\;\frac{1}{\sqrt{\gamma}}\;+\;12
  \;=\;O\!\left(\frac{1}{m^{2/3}\gamma^{4/3}}\right).
\]
In particular, for Borda, where $\gamma=1/(m-1)$, we get $\comp\le\frac{7}{2^{5/3}}(m-1)^{2/3}+\sqrt{m-1}+12=O(m^{2/3})$.
\end{theorem}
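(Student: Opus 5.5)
We follow the two-phase structure of the sketch after the statement. We normalize via Lemma~\ref{concave:lem:invariance} and fix a winner $w$ with $\score(w)=n(1-\delta)$. The case $\gamma=0$ or $\delta^2>\gamma$ is immediate: \levelpr{} never queries beyond depth $m-1$, since exactness is inferred there, so it costs at most $n(m-1)$.

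Otherwise, we set $\lambda=\min\{(\gamma\delta)^{-1/3},\gamma^{-1}\}$, $\beta=\max\{\lambda\delta,\lambda^{-1}\}$ and $d^*=\min\{\lceil\beta/\gamma\rceil,m-1\}$.

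\textbf{Phase~1} covers the iterations up to depth $d^*$ and costs at most $nd^*\le n\beta/\gamma+n$. At the end of Phase~1, a voter at which $w$ is not exact has already lost at least $1-s_{d^*+1}\ge d^*\gamma\ge\beta$ by Lemma~\ref{concave:lem:concave}(ii); if $d^*=m-1$, everything is exact anyway. Let $k$ be the number of such voters. Since $w$'s total loss is $n\delta$, we get $k\beta\le n\delta$, so $k\le n\delta/\beta\le n/\lambda$. Each of the other $n-k$ voters contributes exactly, and at the end of Phase~1 each exact contribution is at least $1-\beta$. I will check that this yields $\lb(w)\ge n(1-1/\lambda)$ up to the $\gamma$-slack, using $\beta\le 1/\lambda$ in the regime where it is binding. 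The analysis may need to be tightened to a baseline $L=1-1/\lambda$ here, and this is the first delicate point.

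\textbf{Phase~2} is charged to the potential $\Phi=\sum_a(\ub(a)-\Lambda+\gamma)^+$. Every query in an iteration is made to an active voter. Either some currently possible winner is still non-exact at that voter after the query, in which case the query answer is a different alternative and the upper bound of that possible winner drops by $s_{e}-s_{e+1}\ge\gamma$. Otherwise the voter is pruned, which happens at most once per voter. Since $\Lambda$ only increases, $\Phi$ only decreases. Hence the Phase~2 cost is at most $\Phi_0/\gamma+n$.

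To bound $\Phi_0$ at the start of Phase~2, note that $\Lambda\ge\lb(w)\ge nL$. Alternatives outside $P$ contribute at most $\gamma$ each, for at most $m\gamma$ in total. For an alternative in $P$, each non-exact voter contributes at most $s_{d^*+1}\le L$, so the surplus above $nL$ comes only from exact pairs. At a fixed voter these pairs occupy distinct ranks, so summing over alternatives gives at most $\sum_j(s_j-L)^+\le\sum_j(1/\lambda-(j-1)\gamma)^+\le \frac{1}{2\lambda^2\gamma}+\frac1{2\lambda}+\frac{\gamma}8$ per voter. This rests on the inequality $s_j-L\le 1/\lambda-(j-1)\gamma$ from Lemma~\ref{concave:lem:concave}(ii).

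Combining the two phases gives a total cost of $n\beta/\gamma+\Phi_0/\gamma+O(n)+m$. Substituting the two choices of $\lambda$ yields the stated cases:
\begin{itemize}
\item If $\gamma^2\le\delta$, then $\lambda=(\gamma\delta)^{-1/3}$ and $\beta=\lambda\delta$, and the terms combine to $\frac74 n\delta^{2/3}\gamma^{-4/3}$.
\item If $\delta<\gamma^2$, then $\lambda=1/\gamma$ and all terms are $O(n)$, giving $5n+m$.
\end{itemize}

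For the ratio, we use Lemma~\ref{concave:lem:opt}, $\opt\ge\max\{n(m-1)\delta,n/2\}$, and treat each branch in turn:
\begin{itemize}
\item The main term divided by $n(m-1)\delta$ gives $\frac{7}{4(m-1)\delta^{1/3}\gamma^{4/3}}$. This alone is not bounded, so we instead divide it by the geometric combination $\opt\ge (n(m-1)\delta)^{2/3}(n/2)^{1/3}$. This gives $\frac{7\cdot 2^{1/3}}{4(m-1)^{2/3}\gamma^{4/3}}$, which matches the stated constant.
\item The additive terms $3n+m$ over $n/2$ contribute at most $8$ when $n\ge m$.
\item In the full-elicitation branch with $\delta^2>\gamma$, we have $n(m-1)/(n(m-1)\delta)<1/\sqrt\gamma$.
\item The last branch gives at most $12$.
\end{itemize}

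I expect the main obstacle to be the Phase~1 baseline claim, namely that non-exact contributions of possible winners are at most $L$ while $\Lambda\ge nL$. This requires choosing $\beta$ so that $s_{d^*+1}\le 1-\beta\le L$ and $\lb(w)\ge n(1-\delta/\beta)\ge nL$ hold simultaneously. The definition $\beta=\max\{\lambda\delta,\lambda^{-1}\}$ is designed exactly for this, but the edge case $d^*=m-1$ and the rounding in $\lceil\cdot\rceil$ must be checked carefully.
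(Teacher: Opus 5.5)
Your outline has the same structure as the paper's proof: the same $\lambda$, $\beta$, $d^*$, the same two phases, the same cushioned potential $\Phi$, and the same per-voter surplus bound $\sum_j(s_j-L)^+$. However, the step you flag as delicate is a genuine gap. You never prove $\Lambda\ge\lb(w)\ge nL$ with $L=1-1/\lambda$, and the route you sketch does not work.

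\textbf{Why your route fails.} The claim that every exact contribution of $w$ is at least $1-\beta$ holds for Borda but fails for general concave rules. If $w$ is revealed at rank $d^*$, it contributes $s_{d^*}$. For a concave rule with $s_1-s_2=\gamma$ and all later gaps equal, $1-s_{d^*}\approx d^*/m$. This exceeds $\beta\le d^*\gamma$ by a factor of about $1/(\gamma m)$ when $\gamma\ll 1/m$. Even if the claim held, it would only give $\lb(w)\ge(n-n/\lambda)(1-\beta)=nL(1-\beta)$, not $nL$. The shortfall also cannot be absorbed as ``slack''. An additive deficit $\epsilon$ in $\Lambda\ge nL$ is paid once for every alternative of $P_{d^*}$, whose size you do not control, so the Phase~2 bound can grow by $m\epsilon/\gamma$. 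Finally, the property of $\beta$ used here is $\beta\ge\lambda\delta$ (so that $\delta/\beta\le 1/\lambda$), not $\beta\le 1/\lambda$. The other half of the definition, $\beta\ge 1/\lambda$, is what gives $\alpha:=s_{d^*+1}\le 1-\beta\le L$.

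\textbf{The missing idea.} Subtract the unrevealed contributions from the true score instead of lower-bounding the revealed ones. Let $U$ be the set of voters at which $w$ is non-exact after Phase~1. Since $w\in P_d$ for every $d$, these voters are never pruned. So they have depth exactly $d^*$ and satisfy $s_{\pi_i(w)}\le\alpha$. Your own counting then gives $|U|(1-\alpha)\le n\delta$, and hence
\[
\lb(w)=\score(w)-\sum_{i\in U}s_{\pi_i(w)}\ge n(1-\delta)-|U|\alpha\ge n\Bigl(1-\frac{\delta}{1-\alpha}\Bigr)\ge n\Bigl(1-\frac{\delta}{\beta}\Bigr)\ge nL.
\]
With this in place the rest of your outline goes through, provided you make three points explicit:
\begin{itemize}
\item The alternative charged in Phase~2 satisfies $\ub(a)\ge\Lambda$ immediately before the query. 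So its term is at least $\gamma$, and the entire drop of $\gamma$ registers inside $(\cdot)^+$.
\item Non-exact voters of any $a\in P_{d^*}$ also sit at depth $d^*$, because a voter pruned at iteration $d\le d^*$ has all of $P_d\supseteq P_{d^*}$ exact.
\item The constants $\frac74$ and $3n$ require the AM--GM step $\frac{n}{2\lambda\gamma}\le\frac{n}{4\lambda^2\gamma^2}+\frac{n}{4}$.
\end{itemize}

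\textbf{The ratio step.} Here your route genuinely differs from the paper's. You bound $\opt\ge(n(m-1)\delta)^{2/3}(n/2)^{1/3}$. The paper instead takes the minimum of the main term's ratios to $n/2$ and to $n(m-1)\delta$, and maximizes that minimum over $\delta$. Your version removes $\delta$ in one step and yields the same constant $7/(2^{5/3}(m-1)^{2/3}\gamma^{4/3})$.
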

\begin{proof}
\fi
The first case of the cost bound holds because $\levelpr$ never queries a voter more than $m-1$ times. Assume therefore that $\gamma>0$ and $\delta^2\le\gamma$, and set
\[
  \lambda:=
  \begin{cases}
    \gamma^{-1},&\delta=0,\\
    \min\bigl\{(\gamma\delta)^{-1/3},\ \gamma^{-1}\bigr\},&\delta>0.
  \end{cases}
\]
Set
\[
  L:=1-\frac1\lambda,
  \qquad
  \beta:=\max\Bigl\{\lambda\delta,\ \frac1\lambda\Bigr\},
  \qquad
  d^*:=\min\Bigl\{\Bigl\lceil\frac{\beta}{\gamma}\Bigr\rceil,\ m-1\Bigr\}.
\]
and write $\alpha:=s_{d^*+1}$, which is well defined since $d^*\le m-1$. Note that $\lambda\ge1$, because $\gamma^{-1}\ge m-1\ge2$ by Lemma~\ref{concave:lem:concave}(iv) and, when \(\delta>0\), because $\gamma\delta\le\gamma\sqrt\gamma<1$. If \(\delta=0\), then \(\lambda\delta=0\); otherwise,
\(\lambda\delta\le(\gamma\delta)^{-1/3}\delta=\delta^{2/3}/\gamma^{1/3}\le1\), where the last inequality uses \(\delta^2\le\gamma\). Thus
\begin{equation}
  \label{concave:eq:lamdelta}
  \lambda\delta\le1.
\end{equation}

\myparagraph{Choice of depth.} We claim that $1-\alpha\ge\beta$, and hence that $\alpha\le 1-1/\lambda=L$. If $d^*=\lceil\beta/\gamma\rceil$, then Lemma~\ref{concave:lem:concave}(i) gives $1-\alpha=1-s_{d^*+1}\ge d^*\gamma\ge\beta$. Otherwise $d^*=m-1$, so $\alpha=s_m=0$ and $1-\alpha=1\ge\beta$, the last inequality holding because $1/\lambda\le1$ and because of \eqref{concave:eq:lamdelta}. This is the point at which the analysis needs $\beta\ge1/\lambda$ and not merely $\beta\ge\lambda\delta$: without the resulting inequality $\alpha\le L$, the $m-d^*$ alternatives that are not exact at a voter, all of which share the upper-bound contribution $\alpha$, could each carry a surplus over the threshold $L$, and the per-voter surplus bound below would fail.

\myparagraph{Phase 1.} The cost of Phase~1 is at most $nd^*\le n\beta/\gamma+n$. Let $U$ be the set of voters at which $w$ is not exact at the end of Phase~1. Since $w\in P_d$ for every $d$, no voter in $U$ is ever pruned during Phase~1, so every $i\in U$ has depth exactly $d^*$ and satisfies $\pi_i(w)\ge d^*+1$, whence $s_{\pi_i(w)}\le\alpha$. Therefore
\[
  n(1-\delta)\;=\;\score(w)\;=\;\sum_{i\notin U}s_{\pi_i(w)}+\sum_{i\in U}s_{\pi_i(w)}
  \;\le\;(n-|U|)\cdot 1+|U|\cdot\alpha\;=\;n-|U|(1-\alpha),
\]
which gives $|U|\le n\delta/(1-\alpha)\le n/\lambda$. Since $\lb(w)=\sum_{i\notin U}s_{\pi_i(w)}=\score(w)-\sum_{i\in U}s_{\pi_i(w)}\ge\score(w)-|U|\alpha$, we obtain
\[
  \lb(w)\;\ge\;n(1-\delta)-\frac{n\delta}{1-\alpha}\,\alpha
  \;=\;n\left(1-\frac{\delta}{1-\alpha}\right)
  \;\ge\;n\left(1-\frac1\lambda\right)\;=\;nL,
\]
using $1-\alpha\ge\beta\ge\lambda\delta$ in the last step.

\myparagraph{Potential.} Define the potential
\[
  \Phi\;:=\;\sum_{a\in A}\bigl(\ub(a)-\Lambda+\gamma\bigr)^+ ,
\]
where $(x)^+:=\max\{x,0\}$. It is non-negative, and it is non-increasing over the execution because $\ub$ is non-increasing and $\Lambda$ is non-decreasing. Taking the potential relative to $\Lambda$ rather than to $\lb(w)$, including $w$ itself in the sum, and adding the cushion $\gamma$ per alternative are what make the charging argument of Phase~2 valid for every query, including queries to a voter whose only remaining non-exact possible winner is $w$ itself, and queries whose target has surplus exactly zero.

We bound $\Phi$ at the end of Phase~1. An alternative $a\notin P_{d^*}$ satisfies $\ub(a)<\Lambda$ and hence contributes at most $\gamma$. Now let $a\in P_{d^*}$ and let $i$ be any voter. If $a$ is not exact at $i$, then $i$ was not pruned before the end of Phase~1, since a voter pruned at the end of an iteration $d\le d^*$ has all alternatives of $P_d\supseteq P_{d^*}$ exact and exactness is permanent; hence $i$ has depth $d^*$ and its contribution to $\ub(a)$ is $\alpha\le L$. If $a$ is exact at $i$, its contribution is $s_{\pi_i(a)}$, and distinct alternatives that are exact at $i$ occupy distinct ranks. Writing $\ub_i(a)$ for the contribution of voter $i$ to $\ub(a)$ and using $\Lambda\ge\lb(w)\ge nL$, we get
\[
  \bigl(\ub(a)-\Lambda+\gamma\bigr)^+
  \;\le\;\Bigl(\sum_{i=1}^{n}\bigl(\ub_i(a)-L\bigr)\Bigr)^{\!+}+\gamma
  \;\le\;\sum_{i=1}^{n}\bigl(\ub_i(a)-L\bigr)^++\gamma,
\]
and summing over $a\in P_{d^*}$ yields $\Phi\le n\Sigma+m\gamma$, where $\Sigma:=\sum_{j=1}^{m}(s_j-L)^+$ and where the additive $m\gamma$ absorbs both the cushions and the alternatives outside $P_{d^*}$. To bound $\Sigma$, set $T:=1/(\lambda\gamma)$ and note that Lemma~\ref{concave:lem:concave}(ii) gives $s_j-L\le 1-(j-1)\gamma-(1-1/\lambda)=\gamma(T-(j-1))$, so $(s_j-L)^+$ is positive only for $j-1<T$ and is then at most $\gamma(T-(j-1))$. With $N:=\lfloor T\rfloor$,
\[
  \Sigma\;\le\;\gamma\sum_{k=0}^{N}(T-k)
  \;=\;\gamma(N+1)\left(T-\frac N2\right)
  \;\le\;\frac{\gamma}{2}\left(T+\frac12\right)^{2}
  \;=\;\frac{1}{2\lambda^2\gamma}+\frac{1}{2\lambda}+\frac{\gamma}{8},
\]
where the second inequality maximises the quadratic $x\mapsto(x+1)(T-x/2)$ over $x\in\mathbb{R}$ at $x=T-\frac12$.

\myparagraph{Phase 2.}
Within each iteration, expose the queries in an arbitrary order and update the bounds after each revealed answer. Consider a query issued in Phase~2 to an active voter $i$ at rank $d+1$, where $d\ge d^*$. Write $\ub^-$, $\Lambda^-$ and $\ub^+$, $\Lambda^+$ for the bounds immediately before and after this query, respectively, and let
\[
  P^{\mathrm{cur}}
  :=\{a\in A:\ub^-(a)\ge\Lambda^-\},
  \qquad
  P_i^{\mathrm{cur}}
  :=\{a\in P^{\mathrm{cur}}:a\text{ is not exact at }i\}.
\]

Suppose first that some $a\in P_i^{\mathrm{cur}}$ is still not exact at $i$ after the query. Before the query, the contribution of $i$ to $\ub(a)$ is $s_{d+1}$; afterward, since $a$ remains non-exact, its contribution is $s_{d+2}$. Hence, by Lemma~\ref{concave:lem:concave}(i),
\[
  \ub^+(a)
  \le
  \ub^-(a)-(s_{d+1}-s_{d+2})
  \le
  \ub^-(a)-\gamma.
\]
Since $a\in P^{\mathrm{cur}}$, we have $\ub^-(a)\ge\Lambda^-$, while $\Lambda^+\ge\Lambda^-$. Therefore, the term of $a$ in $\Phi$ is $\ub^-(a)-\Lambda^-+\gamma$ before the query and at most
\[
  \bigl(\ub^+(a)-\Lambda^++\gamma\bigr)^+
  \le
  \bigl(\ub^-(a)-\gamma-\Lambda^-+\gamma\bigr)^+
  =
  \ub^-(a)-\Lambda^-
\]
after the query. Thus this term, and hence $\Phi$, decreases by at least $\gamma$. No other term increases because upper bounds are non-increasing and $\Lambda$ is non-decreasing.

Otherwise, every alternative of $P_i^{\mathrm{cur}}$ is exact at $i$ after the query. The alternatives of $P^{\mathrm{cur}}\setminus P_i^{\mathrm{cur}}$ were already exact at $i$, so every alternative of $P^{\mathrm{cur}}$ is now exact at $i$. Since upper bounds can only decrease and $\Lambda$ can only increase during the remainder of the iteration, the possible-winner set computed at its end satisfies
\[
  P_{d+1}\subseteq P^{\mathrm{cur}}.
\]
Consequently, every alternative of $P_{d+1}$ is exact at $i$, and voter $i$ is pruned at the end of iteration $d+1$. Each voter is charged at most one query of this kind.

Every Phase~2 query therefore either decreases $\Phi$ by at least $\gamma$ or is charged to a voter that is subsequently pruned. Since $\Phi$ is non-negative and non-increasing, and at most $n$ queries can be charged to voters, the cost of Phase~2 is at most
\[
  \frac{\Phi}{\gamma}+n,
\]
where $\Phi$ is evaluated at the end of Phase~1.


\myparagraph{Total cost.} Combining the three parts and using $\frac{n}{2\lambda\gamma}\le\frac{n}{4\lambda^2\gamma^2}+\frac n4$, which is the inequality of arithmetic and geometric means, the total cost is at most
\[
  \frac{n\beta}{\gamma}+n+\frac{n}{2\lambda^{2}\gamma^{2}}+\frac{n}{2\lambda\gamma}+\frac n8+m+n
  \;\le\;
  \frac{n\beta}{\gamma}+\frac{3n}{4\lambda^{2}\gamma^{2}}+\frac{19n}{8}+m .
\]
If $\delta\ge\gamma^2$, then $\lambda=(\gamma\delta)^{-1/3}$, because $(\gamma\delta)^{-1/3}\le\gamma^{-1}$ is equivalent to $\delta\ge\gamma^{2}$, and $\beta=\lambda\delta$, because $\lambda^{2}\delta=\delta^{1/3}\gamma^{-2/3}\ge1$ is likewise equivalent to $\delta\ge\gamma^{2}$. Substituting, $\frac{n\beta}{\gamma}=n\delta^{2/3}\gamma^{-4/3}$ and $\frac{3n}{4\lambda^{2}\gamma^{2}}=\frac34n(\gamma\delta)^{2/3}\gamma^{-2}=\frac34n\delta^{2/3}\gamma^{-4/3}$, which together with $\frac{19}{8}\le3$ gives the second case of the cost bound. If instead $\delta<\gamma^{2}$, then $\lambda=\gamma^{-1}$ and $\lambda^{2}\delta=\delta\gamma^{-2}<1$, so $\beta=1/\lambda=\gamma$; substituting, $\frac{n\beta}{\gamma}=n$ and $\frac{3n}{4\lambda^{2}\gamma^{2}}=\frac{3n}{4}$, and the total is at most $n+\frac{3n}{4}+\frac{19n}{8}+m\le5n+m$, which is the third case.



\myparagraph{Competitive ratio.}
Assume $n\ge m$ and recall from Lemma~\ref{concave:lem:opt} that
\[
  \opt
  \ge
  n\max\left\{(m-1)\delta,\frac12\right\}.
\]
If $\delta^2>\gamma$, then $\cost(\levelpr)\le n(m-1)$ and
$\opt\ge n(m-1)\delta$, so
\[
  \comp
  \le
  \frac1\delta
  <
  \frac1{\sqrt{\gamma}}.
\]
If $\delta<\gamma^2$, then
\[
  \cost(\levelpr)\le5n+m\le6n,
\]
and $\opt\ge n/2$, so $\comp\le12$.

It remains to consider the regime
\[
  \gamma^2\le\delta
  \qquad\text{and}\qquad
  \delta^2\le\gamma.
\]
Since $n\ge m$, the second cost bound gives
\[
  \cost(\levelpr)
  \le
  C_1+4n,
  \qquad
  C_1:=
  \frac74 n\delta^{2/3}\gamma^{-4/3}.
\]
Using the two lower bounds on $\opt$ separately for $C_1$, and using
$\opt\ge n/2$ for the additive term, we obtain
\[
\begin{aligned}
  \comp
  &\le
  \frac{C_1}{\opt}+\frac{4n}{\opt}\\
  &\le
  \min\left\{
    \frac{C_1}{n/2},
    \frac{C_1}{n(m-1)\delta}
  \right\}+8\\
  &=
  \min\left\{
    \frac72\,\frac{\delta^{2/3}}{\gamma^{4/3}},
    \frac{7}{4(m-1)}
    \frac{1}{\delta^{1/3}\gamma^{4/3}}
  \right\}+8.
\end{aligned}
\]
The first expression inside the minimum is increasing in $\delta$, whereas
the second is decreasing. Over all $\delta>0$, their minimum is maximized
where they coincide, namely at
\[
  \delta=\frac{1}{2(m-1)}.
\]
Enlarging the domain from the present regime to all $\delta>0$ can only
increase this maximum. Therefore,
\[
  \min\left\{
    \frac72\,\frac{\delta^{2/3}}{\gamma^{4/3}},
    \frac{7}{4(m-1)}
    \frac{1}{\delta^{1/3}\gamma^{4/3}}
  \right\}
  \le
  \frac{7}
       {2\bigl(2(m-1)\bigr)^{2/3}\gamma^{4/3}}.
\]
Writing
\[
  F:=
  \frac{7}
       {2\bigl(2(m-1)\bigr)^{2/3}\gamma^{4/3}},
\]
the three regimes give
\[
  \comp
  \le
  \max\left\{
    \gamma^{-1/2},\,
    F+8,\,
    12
  \right\}
  \le
  F+\gamma^{-1/2}+12.
\]
This proves the stated explicit bound.

Finally,
\[
  F
  =
  O\left(\frac{1}{m^{2/3}\gamma^{4/3}}\right),
\]
because $m-1=\Theta(m)$. Moreover, using
$\gamma\le1/(m-1)$ from Lemma~\ref{concave:lem:concave}(iv),
\[
  \frac{\gamma^{-1/2}}
       {m^{-2/3}\gamma^{-4/3}}
  =
  m^{2/3}\gamma^{5/6}
  \le
  \frac{m^{2/3}}{(m-1)^{5/6}}
  =
  O(1).
\]
Also,
\[
  m^{-2/3}\gamma^{-4/3}
  \ge
  m^{-2/3}(m-1)^{4/3}
  =
  \Omega(1),
\]
so the constant term is absorbed as well. Consequently,
\[
  \comp
  =
  O\left(\frac{1}{m^{2/3}\gamma^{4/3}}\right).
\]
For Borda, $\gamma=1/(m-1)$, and substitution gives
\[
  \comp
  \le
  \frac{7}{2^{5/3}}(m-1)^{2/3}
  +\sqrt{m-1}+12
  =
  O(m^{2/3}).
\]

\end{proof}

\begin{remark}
The constant-ratio regime for a concave rule is $\delta<\gamma^{2}$, which for Borda reads $\delta<1/(m-1)^{2}$. In the wider regime $\delta<1/m$ the cost bound above is $\Theta(nm^{4/3}\delta^{2/3})$, which at $\delta$ slightly below $1/m$ is $\Theta(nm^{2/3})$ against $\opt=\Theta(n)$, so no constant bound is available there.
\end{remark}

\ifdefined\ARXIVAPPENDIX
  \let\concaveend 
\else
  \let\concaveend\relax
\fi
\concaveend

\endgroup

\subsection{Proof of the LevelPruning Lower Bound}
\thmLevelprBad*
\begin{proof}
Let $m$ be sufficiently large, set
$r=\lfloor \sqrt{m}/2\rfloor$, and choose $n$ divisible by
$r(4r-1)$. Partition the voters into groups of sizes
\[
n_1=\frac{4r-2}{4r-1}n
\qquad\text{and}\qquad
n_2=\frac{1}{4r-1}n.
\]
There are $r$ challengers $c_1,\ldots,c_r$. Among the $n_1$
voters, these challengers occupy positions $2,\ldots,r+1$,
cyclically and with each challenger occurring equally often at
each position. These voters rank $w$ first. Among the $n_2$
voters, the challengers instead occupy positions $1,\ldots,r$,
again cyclically and with each challenger occurring equally often
at each position, while $w$ is ranked in position $4r$. All
remaining alternatives fill the remaining positions; in
particular, they occur at positions at least $r+2$ among the
$n_1$ voters and at positions at least $r+1$ among the $n_2$
voters.

\myparagraph{True scores.}
The Borda score of $w$ is
\[
\score(w)
=
n_1(m-1)+n_2(m-4r)
=
n(m-2).
\]
Each challenger has average rank $(r+3)/2$ among the $n_1$
voters and average rank $(r+1)/2$ among the $n_2$ voters.
Therefore,
\[
\begin{aligned}
\score(c_i)
&=
n_1\left(m-\frac{r+3}{2}\right)
+
n_2\left(m-\frac{r+1}{2}\right) \\
&=
n\left(m-\frac{r+3}{2}\right)+n_2.
\end{aligned}
\]
Consequently,
\[
\score(w)-\score(c_i)
=
\frac{n(r-1)}{2}-n_2
=
n\left(\frac{r-1}{2}-\frac{1}{4r-1}\right)
>0
\]
for sufficiently large $r$. Every remaining alternative is ranked
at position at least $r+1$ by every voter, and hence has score
strictly below $n(m-2)=\score(w)$. Thus $w$ is the unique winner.

\myparagraph{Cost of $\levelpr$.}
Consider the end of level $d=r+1$. At this point every challenger
is exact, while $w$ remains unrevealed at the $n_2$ voters, and
\[
\begin{aligned}
\ub(c_i)-\lb(w)
&=
n_1\left(m-\frac{r+3}{2}\right)
+
n_2\left(m-\frac{r+1}{2}\right)
-
n_1(m-1) \\
&=
\frac{n(2m-4r^2-3r+1)}{2(4r-1)}
>0.
\end{aligned}
\]
The final inequality follows from $m\ge 4r^2$.

At every earlier level, the challengers have identical bounds by
the cyclic construction, their upper bounds are at least their
true scores, and
\[
\lb(w)=n_1(m-1)
\]
after the first level. Up to level $r$, no remaining alternative
has been revealed, and at level $r+1$ the lower bound of every
remaining alternative is at most
\[
n_2(m-r-1)<\lb(w).
\]
Hence all $r$ challengers remain possible through level $d$.

At the start of level $r+1$, every $n_1$ voter still has a
possible challenger unrevealed, while every $n_2$ voter still has
the possible winner $w$ unrevealed. Thus no voter can have been
pruned before this level. Since pruning is performed only after a
level is completed, $\levelpr$ queries every voter at level
$r+1$. Consequently, by the end of that level it has made
\[
n(r+1)=\Omega(n\sqrt{m})
\]
queries.

\myparagraph{Offline certificate.}
Query the $n_1$ voters through depth $1$ and the $n_2$ voters
through depth $4r$. The cost is
\[
n_1+4rn_2=2n,
\]
and $w$'s score is exact, so
\[
\lb(w)=n(m-2).
\]
For every challenger,
\[
\begin{aligned}
\ub(c_i)
&=
n_1(m-2)
+
n_2\left(m-\frac{r+1}{2}\right) \\
&=
\lb(w)-\frac{n_2(r-3)}{2}
<\lb(w),
\end{aligned}
\]
where the final inequality holds for sufficiently large $r$.

For every remaining alternative, its contribution from each
$n_1$ voter is at most $m-2$. At every $n_2$ voter, its upper-bound
contribution is strictly below $m-2$: if it has been revealed by
depth $4r$, then it occurs at position at least $r+1$; otherwise,
its earliest possible position is $4r+1$. Its total upper bound is
therefore strictly below $\lb(w)$. Thus $\opt\le 2n$, and
\[
\frac{\cost(\levelpr)}{\opt}
\ge
\frac{n(r+1)}{2n}
=
\frac{r+1}{2}
=
\Omega(\sqrt{m}).
\]
\end{proof}

\section{Full Analysis of \texorpdfstring{\multiscale}{MultiScaleScore}}
\label{app:multiscale-proof}

\begingroup
\let\multiscaleoldlabel\label
\let\multiscaleoldref\ref
\renewcommand{\label}[1]{\multiscaleoldlabel{multiscale:#1}}
\renewcommand{\ref}[1]{\multiscaleoldref{multiscale:#1}}
\renewcommand{\N}{V}
\newcommand{\A}{A}
\newcommand{\W}{\mathcal{W}}
\newcommand{\OPT}{\opt}
\newcommand{\Exact}{\mathsf{Exact}}
\newcommand{\MSR}{\multiscale}
\newcommand{\TTest}{\scoretest}
\def\ARXIVAPPENDIX{1}
\ifdefined\ARXIVAPPENDIX
\else
\documentclass[11pt]{article}

\usepackage[margin=1in]{geometry}
\usepackage{amsmath,amssymb,amsthm,mathtools}
\usepackage{listings}
\usepackage{microtype}
\usepackage[hidelinks]{hyperref}

\newtheorem{theorem}{Theorem}
\newtheorem{lemma}[theorem]{Lemma}
\newtheorem{remark}[theorem]{Remark}

\DeclareMathOperator*{\argmax}{arg\,max}
\newcommand{\A}{\mathcal{A}}
\newcommand{\N}{\mathcal{N}}
\newcommand{\W}{\mathcal{W}}
\newcommand{\Comp}{\mathcal{C}}
\newcommand{\OPT}{\operatorname{OPT}}
\newcommand{\cost}{\operatorname{cost}}
\newcommand{\score}{\operatorname{score}}
\newcommand{\lb}{\operatorname{LB}}
\newcommand{\ub}{\operatorname{UB}}
\newcommand{\Exact}{\mathsf{Exact}}
\newcommand{\MSR}{%
    \ifmmode\text{\normalfont\textsc{MultiScaleScore}}%
    \else\textsc{MultiScaleScore}\fi}
\newcommand{\TTest}{%
    \ifmmode\text{\normalfont\textsc{ScoreTest}}%
    \else\textsc{ScoreTest}\fi}

\renewcommand{\lstlistingname}{Algorithm}
\lstdefinestyle{pseudocode}{
    basicstyle=\ttfamily\small,
    columns=fullflexible,
    frame=single,
    numbers=left,
    numberstyle=\scriptsize,
    numbersep=8pt,
    xleftmargin=1.5em,
    framexleftmargin=1.2em,
    mathescape=true,
    showstringspaces=false,
    keepspaces=true,
    morekeywords={procedure,while,do,end,choose,query,update}
}

\title{A Multi-Scale Algorithm for Borda Winner Determination}
\author{}
\date{}

\begin{document}
\maketitle
\fi

\ifdefined\ARXIVAPPENDIX
\subsection{Model and Notation}
\else
\section{Model and notation}
\fi

Let $\N=[n]$ be the set of voters and let $\A$ be a set of $m\ge 2$
candidates.  Each voter $i\in\N$ has a strict ranking of $\A$.  For
$c\in\A$, let
\[
    \pi_i(c)\in[m]
\]
denote the rank of $c$ in voter $i$'s ranking, where rank $1$ is best.
We use the normalized Borda score
\[
    s(r):=\frac{m-r}{m-1},
    \qquad
    \score(c)
    :=\sum_{i\in\N}s(\pi_i(c)).
    \tag{1}
\]
Consequently, the complete Borda winner set is
\[
    \W(P)=\argmax_{c\in\A}\score(c),
    \tag{2}
\]
where $P$ denotes the full preference profile.

\paragraph{Next-best queries.}
At any time, the algorithm has elicited a prefix of depth
$d_i\in\{0,\ldots,m-1\}$ from voter $i$.  A next-best query to voter $i$
reveals the candidate at rank $d_i+1$ and then increments $d_i$ by one.
Once $d_i=m-1$, the unique omitted candidate is inferred to have rank
$m$, so no $m$-th query is needed.

A query transcript $Q$ is a \emph{certificate} for a profile $P$ if every
full profile $P'$ consistent with $Q$ has the same complete winner set:
\[
    \W(P')=\W(P).
\]
Let
\[
    \OPT(P)
    :=
    \min\bigl\{|Q|:Q\text{ is a next-best certificate for }P\bigr\}.
    \tag{3}
\]
The minimization in (3) is profile-aware and need not satisfy the online
constraint.  For an online algorithm $\mathcal{M}$, its competitive ratio
is
\[
    \operatorname{CR}(\mathcal{M})
    :=
    \sup_P
    \frac{\cost_{\mathcal{M}}(P)}{\OPT(P)}.
    \tag{4}
\]

\paragraph{Current score bounds.}
Say that the score contribution of $c$ is \emph{exact at voter $i$}, written
$\Exact_i(c)$, if either $c$ appears in the elicited prefix of voter $i$,
or $d_i=m-1$ and $c$ is the unique omitted candidate whose last-place rank
is inferred.  Define
\[
    \lb_i(c)
    :=
    \begin{cases}
        s(\pi_i(c)), & \Exact_i(c),\\
        0,         & \text{otherwise},
    \end{cases}
    \qquad
    \ub_i(c)
    :=
    \begin{cases}
        s(\pi_i(c)), & \Exact_i(c),\\
        s(d_i+1),  & \text{otherwise},
    \end{cases}
    \tag{5}
\]
and let $\lb(c):=\sum_i\lb_i(c)$ and $\ub(c):=\sum_i\ub_i(c)$.  Then
\[
    \lb(c)\le \score(c)\le \ub(c)
    \qquad\text{for every }c\in\A.
    \tag{6}
\]
Moreover, a next-best query on a voter at which $c$ is not exact has one
of two effects on $\ub(c)$:
\begin{itemize}
    \item if the answer is $c$, then $\ub_i(c)$ is unchanged and becomes
    exact;
    \item otherwise, $\ub_i(c)$ decreases by $1/(m-1)$, possibly becoming
    exact by last-place inference.
\end{itemize}
Thus every $\ub(c)$ is monotone nonincreasing during the execution, and
$\lb(c)=\ub(c)=\score(c)$ once $c$ is exact at every voter.

\ifdefined\ARXIVAPPENDIX
\subsection{The Algorithm}
\else
\section{The algorithm}
\fi

\ifdefined\ARXIVAPPENDIX
\subsubsection{A Score Threshold Test}
\else
\subsection{A score threshold test}
\fi

For a candidate $c$ and a score threshold $T$, the procedure
$\TTest(c,T)$ repeatedly queries a voter at which the score contribution
of $c$ is not yet exact.  It stops once either $\score(c)$ is
exact or $\ub(c)<T$.
All queries update the shared global transcript and hence the score bounds
of every candidate.

\begin{lstlisting}[
    style=pseudocode,
    caption={$\TTest(c,T)$},
    label={alg:threshold-test}
]
procedure ScoreTest($c,T$)
    while $c$ is not exact at every voter and $\ub(c)\ge T$ do
        choose the least-index voter $i$ with $\neg\Exact_i(c)$
        query NextBest($i$)
        update the shared transcript, all depths, and all score bounds
    end while
end procedure
\end{lstlisting}

\begin{lemma}[Score-test guarantee]
\label{lem:test}
Suppose that $\ub(c)=\ub_0(c)\ge T$ when
$\TTest(c,T)$ begins.  Then the test uses at most
\[
    (m-1)(\ub_0(c)-T)+1+n
    \tag{7}
\]
queries.  At termination, either $\score(c)$ is exact or
$\ub(c)<T$.
Furthermore,
\[
    \score(c)\ge T
    \quad\Longrightarrow\quad
    \TTest(c,T)\text{ makes }\score(c)\text{ exact}.
    \tag{8}
\]
\end{lemma}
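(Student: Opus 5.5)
The plan is to classify every query issued by $\TTest(c,T)$ according to its answer. We then charge each class to a separate budget in (7), and read off the termination condition and (8) directly from the loop guard.

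\textbf{Query classes.} Each query goes to a voter $i$ with $\neg\Exact_i(c)$. By the two-case effect recorded after (6), the query either answers $c$ or lowers $\ub_i(c)$, and hence $\ub(c)$, by exactly $1/(m-1)$.

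\textbf{Answer equals $c$.} Such a query makes $c$ exact at $i$. Exactness is permanent, and the test only queries non-exact voters, so each voter receives at most one such query. This class therefore contributes at most $n$ queries.

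\textbf{Answer differs from $c$.} Each such query lowers $\ub(c)$ by $1/(m-1)$, and no query ever raises $\ub(c)$. The loop only issues a query while $\ub(c)\ge T$.

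\begin{itemize}
\item Let $k$ be the number of these queries. Just before the last of them, $\ub(c)\le \ub_0(c)-(k-1)/(m-1)$.
\item That value must still be at least $T$, which gives $k\le (m-1)(\ub_0(c)-T)+1$.
\end{itemize}

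The $+1$ accounts for the final query that may push $\ub(c)$ below $T$. Adding the two classes yields (7).

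\textbf{Termination and (8).} The bound (7) is finite, so the loop stops. It stops only when its guard fails, i.e. when $c$ is exact at every voter (so $\score(c)$ is exact) or $\ub(c)<T$. If $\score(c)\ge T$, then (6) gives $\ub(c)\ge\score(c)\ge T$ throughout the test. Hence the second exit never occurs, and the test must end with $\score(c)$ exact.

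\textbf{Main subtlety.} Only the per-voter argument needs care. Queries issued by other score tests share the transcript, but they can only make $c$ exact or lower $\ub(c)$, so they never hurt the count.

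Last-place inference also needs a check. When $d_i$ reaches $m-1$ with a non-$c$ answer, $c$ becomes exact by inference. That query is still counted among the decreasing ones, and $i$ is never queried again for $c$. The classification therefore remains valid.
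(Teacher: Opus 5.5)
Your proof is correct and follows essentially the same route as the paper. You charge answers equal to $c$ to the once-per-voter budget $n$, charge the other answers to the $1/(m-1)$ decrements of $\operatorname{UB}(c)$ with the $+1$ for the query that crosses below $T$, and get (8) from (6) and the loop guard; your explicit termination argument and the last-place-inference check are small elaborations that the paper leaves implicit.
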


\begin{proof}
Every answer different from $c$ decreases $\ub(c)$ by $1/(m-1)$.  Hence
there can be at most $(m-1)(\ub_0(c)-T)+1$ such answers before $\ub(c)<T$.
An answer equal to $c$ leaves $\ub(c)$ unchanged, but this can happen at
most once per voter, and therefore at most $n$ times.  This proves the
bound in (7).

If $\score(c)\ge T$, then (6) implies
$\ub(c)\ge\score(c)\ge T$ throughout the test.  The test
therefore cannot stop because $\ub(c)<T$ and must instead make the score of
$c$ exact.  This proves (8).
\end{proof}

\ifdefined\ARXIVAPPENDIX
\subsubsection{The Multi-Scale Procedure}
\else
\subsection{The multi-scale procedure}
\fi

For $m\ge5$, define
\[
    G:=2^{\lfloor\log_2\sqrt m\rfloor}.
    \tag{9}
\]
The algorithm considers dyadic scales $g\in\{1,2,4,\ldots,G\}$.  At scale
$g$, it first extends every prefix to depth $h=2g$, sets the score
threshold
\[
    T_g:=\frac{n(m-g)}{m-1},
    \tag{10}
\]
freezes the set $C_g:=\{c\in\A:\ub(c)\ge T_g\}$, and score-tests every
member of this frozen set.  It returns only
after all candidates in $C_g$ have been processed.

\begin{lstlisting}[
    style=pseudocode,
    caption={$\MSR$},
    label={alg:multiscale}
]
procedure MultiScaleScore
    if $m\le 4$ then
        extend every ballot to depth $m-1$
        return $\argmax_{c\in\A}\score(c)$
    end if

    $G \gets 2^{\lfloor\log_2\sqrt m\rfloor}$
    for $g\in\{1,2,4,\ldots,G\}$ do
        extend every ballot whose current depth is below $2g$ to depth $2g$
        $T_g \gets n(m-g)/(m-1)$
        $C_g \gets \{c\in\A:\ub(c)\ge T_g\}$       // freeze this set

        for $c\in C_g$ in a fixed order do
            if $\ub(c)\ge T_g$ then
                ScoreTest($c,T_g$)
            end if
        end for

        $E_g \gets \{c\in C_g:c\text{ is exact and }\score(c)\ge T_g\}$
        if $E_g\ne\varnothing$ then
            return $\argmax_{c\in E_g}\score(c)$
        end if
    end for

    extend every ballot to depth $m-1$
    return $\argmax_{c\in\A}\score(c)$
end procedure
\end{lstlisting}

The algorithm is online and deterministic: the candidate order and every
tie-breaking choice can be fixed in advance, and every query depends only
on the current transcript.  The uniform extension is always legal.  When
$m\ge5$ and $g\le G$, one has $2g\le m-1$: if $G=2$, this follows from
$m\ge5$; if $G\ge4$, then $2G\le G^2-1\le m-1$.

\ifdefined\ARXIVAPPENDIX
\subsection{Analysis}
\else
\section{Analysis}
\fi

\ifdefined\ARXIVAPPENDIX
\subsubsection{The Packing Bound}
\else
\subsection{The packing bound}
\fi

For an integer $h<m$, define the score upper bound obtained by truncating
every ballot after rank $h$:
\[
    \ub^{(h)}(c)
    :=
    \sum_{i\in\N}s\bigl(\min\{\pi_i(c),h+1\}\bigr).
    \tag{11}
\]
If every current prefix has depth at least $h$, then
\[
    \ub(c)\le \ub^{(h)}(c).
    \tag{12}
\]

\begin{lemma}[Few candidates survive at one scale]
\label{lem:packing}
At scale $g$ of Algorithm~\ref{alg:multiscale},
\[
    |C_g|\le2g-1.
    \tag{13}
\]
\end{lemma}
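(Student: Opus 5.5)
The plan is a double-counting argument on the truncated upper bound $\ub^{(h)}$ with $h=2g$. At the moment $C_g$ is frozen, every ballot has depth at least $2g$, because the algorithm has just extended all shallower prefixes. Hence (12) applies and gives $\ub(c)\le\ub^{(2g)}(c)$ for every candidate $c$. Every $c\in C_g$ therefore satisfies $\ub^{(2g)}(c)\ge T_g$. It then suffices to show that few candidates can have truncated upper bound that large.

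\emph{Step 1: the packing identity.} Fix a voter $i$ and subtract the common baseline $s(h+1)$ from each term $s(\min\{\pi_i(c),h+1\})$. The candidates at ranks $r\le h$ contribute $s(r)-s(h+1)=(h+1-r)/(m-1)$, and every other candidate contributes $0$. Summing over $r=1,\ldots,h$ gives $h(h+1)/(2(m-1))$ per voter. Summing over voters then gives
\[
\sum_{c\in A}\bigl(\ub^{(h)}(c)-n\,s(h+1)\bigr)=\frac{nh(h+1)}{2(m-1)},
\]
and every summand on the left is nonnegative.

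\emph{Step 2: per-candidate lower bound.} Take $h=2g$. Then $n\,s(2g+1)=n(m-2g-1)/(m-1)$. Each $c\in C_g$ therefore contributes at least
\[
T_g-n\,s(2g+1)=\frac{n\bigl((m-g)-(m-2g-1)\bigr)}{m-1}=\frac{n(g+1)}{m-1}.
\]
Since all summands are nonnegative, we may drop the candidates outside $C_g$. This yields
\[
|C_g|\cdot\frac{n(g+1)}{m-1}\le\frac{n\cdot 2g(2g+1)}{2(m-1)},
\qquad\text{that is,}\qquad
|C_g|\le\frac{g(2g+1)}{g+1}=2g-1+\frac{1}{g+1}.
\]
Because $|C_g|$ is an integer and $1/(g+1)<1$, we conclude $|C_g|\le 2g-1$.

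\emph{Main point to verify.} The delicate step is confirming that (12) is legitimate at freeze time. This needs $2g\le m-1$, so that $h<m$ and the truncation is well defined; the paper already checks this for $m\ge5$ and $g\le G$. It also needs that, whenever $c$ is non-exact at voter $i$, the contribution $\ub_i(c)=s(d_i+1)$ is at most $s(\min\{\pi_i(c),h+1\})$. This holds because $d_i\ge h$ and $\pi_i(c)>d_i$. When $c$ is exact at $i$, $\ub_i(c)$ equals the true value, which is at most the truncated one because $s$ is decreasing.

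Earlier scales and score tests may have pushed some prefixes deeper than $2g$. This only lowers $\ub$, so it does not affect the argument.
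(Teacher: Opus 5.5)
Your proposal is correct and follows the paper's proof essentially step for step. It uses the same score-surplus identity (14) with $h=2g$, the same per-candidate contribution $T_g - n\,s(2g+1) = n(g+1)/(m-1)$, and the same integrality step applied to $2g-1+1/(g+1)$. Your explicit checks that (12) holds at freeze time and that every summand is nonnegative fill in details the paper leaves implicit.
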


\begin{proof}
For one voter, the ranks form a permutation of $[m]$, so
\[
    \sum_{c\in\A}
    \left(s\bigl(\min\{\pi_i(c),h+1\}\bigr)-s(h+1)\right)
    =
    \frac{1}{m-1}\sum_{r=1}^{h}(h+1-r)
    =
    \frac{h(h+1)}{2(m-1)}.
\]
Summing over all voters gives the score-surplus identity
\[
    \sum_{c\in\A}
    \left(\ub^{(h)}(c)-ns(h+1)\right)
    =
    \frac{nh(h+1)}{2(m-1)}.
    \tag{14}
\]
At scale $g$, take $h=2g$.  If $c\in C_g$, then by (10)--(12),
\[
    \ub^{(h)}(c)\ge \ub(c)\ge T_g.
\]
Thus $c$ contributes at least
\[
    T_g-ns(2g+1)=\frac{n(g+1)}{m-1}
\]
to the left-hand side of (14).  It follows that
\[
    |C_g|
    \le
    \frac{ng(2g+1)/(m-1)}{n(g+1)/(m-1)}
    =
    2g-1+\frac1{g+1}
    <2g.
\]
Since $|C_g|$ is an integer, $|C_g|\le2g-1$.
\end{proof}

\ifdefined\ARXIVAPPENDIX
\subsubsection{Correctness, Including All Ties}
\else
\subsection{Correctness, including all ties}
\fi

\begin{lemma}[A successful scale determines the complete winner set]
\label{lem:correctness}
If $E_g\ne\varnothing$, the set returned at scale $g$ is the complete
Borda winner set in every full profile consistent with the final
transcript at that scale.
\end{lemma}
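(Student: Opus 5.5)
The plan is to show two things about the final transcript at scale $g$. First, every alternative outside $E_g$ has current upper bound strictly below $T_g$. Second, every member of $E_g$ has an exact score of at least $T_g$. The conclusion then follows from the sandwich inequality (6).

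I would first establish the claim about alternatives outside $E_g$ by splitting into cases.
\begin{itemize}
\item Suppose $c\notin C_g$. Then $\ub(c)<T_g$ already held when $C_g$ was frozen. Since $\ub$ is nonincreasing throughout the execution, $\ub(c)<T_g$ still holds at the end of the scale.
\item Suppose $c\in C_g\setminus E_g$ and $c$ was skipped in the loop. It was skipped only because $\ub(c)<T_g$ at that moment, and monotonicity preserves this.
\item Suppose $c\in C_g\setminus E_g$ and $\scoretest(c,T_g)$ was run. By the score-test guarantee (Lemma on the score test), at its termination either $\ub(c)<T_g$, or $\score(c)$ is exact. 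In the exact case, $c\notin E_g$ forces $\score(c)<T_g$, so $\ub(c)=\score(c)<T_g$. Exactness persists, so in either case the bound holds at the end of the scale.
\end{itemize}

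Second, the members of $E_g$ have exact scores. Exactness is permanent: once $c$ is exact at every voter, later queries cannot change $\lb(c)=\ub(c)=\score(c)$. So for each $c\in E_g$ the value $\score(c)$ is the same in every consistent completion $P'$, and it is at least $T_g$. Let $M:=\max_{c\in E_g}\score(c)\ge T_g$, which is determined by the transcript. In any consistent $P'$, each $c\in E_g$ has its fixed score. Every $c\notin E_g$ satisfies $\score_{P'}(c)\le\ub(c)<T_g\le M$ by (6). Hence $\W(P')=\{c\in E_g:\score(c)=M\}=\argmax_{c\in E_g}\score(c)$, which is independent of $P'$ and equals $\W(P)$.

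The only delicate step is the persistence argument. Queries made during score tests of later members of $C_g$ could in principle alter the status of earlier ones. Monotonicity of $\ub$ and permanence of exactness handle this, and that is the point I would state carefully. The freezing of $C_g$ guarantees that every candidate that reached the threshold at the start of the scale is accounted for.
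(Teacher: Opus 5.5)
Your proposal is correct and follows essentially the same route as the paper. The paper proves the same three-case claim that every $c\notin E_g$ ends the scale with $\ub(c)<T_g$, and then compares against $\max_{e\in E_g}\score(e)\ge T_g$ using the sandwich inequality. Your explicit point that monotonicity of $\ub$ and permanence of exactness cover queries made by later score tests is something the paper relies on only implicitly.
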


\begin{proof}
We first claim that, after all tests at scale $g$ finish,
\[
    c\notin E_g
    \quad\Longrightarrow\quad
    \ub(c)<T_g.
    \tag{15}
\]
If $c\notin C_g$, then $\ub(c)<T_g$ when $C_g$ is frozen, and monotonicity
preserves this inequality.  If $c\in C_g$ but its test is skipped, an
earlier test has already made $\ub(c)<T_g$.  If its test runs, it terminates
either with $\ub(c)<T_g$ or with exact $\score(c)<T_g$, in which
case $\ub(c)=\score(c)<T_g$.  This proves (15).

Now let
\[
    s^\star:=\max_{e\in E_g}\score(e)\ge T_g.
\]
Every $e\in E_g$ has an exact score.  On the other hand, in every
completion consistent with the transcript, (15) gives
\[
    \score(c)\le \ub(c)<T_g\le s^\star
    \qquad\text{for every }c\notin E_g.
\]
Hence no candidate outside $E_g$ can maximize score, while the
maximum-score candidates within $E_g$ are known exactly.  The returned set
is therefore precisely the complete set of Borda winners, including all
ties.
\end{proof}

If no scale succeeds, the fallback reveals every ballot to depth $m-1$
and hence determines every score.  Lemma~\ref{lem:correctness} therefore
proves correctness on every profile.

\ifdefined\ARXIVAPPENDIX
\subsubsection{Lower Bounds on the Profile-Aware Optimum}
\else
\subsection{Lower bounds on the profile-aware optimum}
\fi

For a partial transcript $Q$ with prefix depths
$e_i\in\{0,\ldots,m-1\}$, define the score bounds
\[
    \lb_Q(c)
    :=
    \sum_{i\in\N}
    \begin{cases}
        s(\pi_i(c)),&\Exact_i(c),\\
        0,&\text{otherwise},
    \end{cases}
    \tag{16}
\]
and
\[
    \ub_Q(c)
    :=
    \sum_{i\in\N}
    \begin{cases}
        s(\pi_i(c)),&\Exact_i(c),\\
        s(e_i+1),&\text{otherwise}.
    \end{cases}
    \tag{17}
\]
Here $\Exact_i(c)$ is interpreted with respect to $Q$.

\begin{lemma}[Pairwise completion]
\label{lem:pairwise}
A candidate $w$ is a winner in every completion of $Q$ if and only if
\[
    \lb_Q(w)\ge \ub_Q(a)
    \qquad\text{for every }a\ne w.
    \tag{18}
\]
\end{lemma}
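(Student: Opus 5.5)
We prove the two directions separately. The ``if'' direction is immediate from (6). In any completion $P'$ of $Q$, each candidate's true score is sandwiched by its transcript bounds, so for every $a\ne w$ we have
\[
\score_{P'}(w)\ge\lb_Q(w)\ge\ub_Q(a)\ge\score_{P'}(a).
\]
Hence $w$ attains the maximum score in $P'$ and is a winner there.

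For the ``only if'' direction we argue by contrapositive, following the completion construction of Lemma~\ref{concave:lem:pairwise}. Suppose $\lb_Q(w)<\ub_Q(a)$ for some $a\ne w$. We build a single completion $P'$ in which $w$ scores exactly $\lb_Q(w)$ and $a$ scores exactly $\ub_Q(a)$; then $\score_{P'}(w)<\score_{P'}(a)$, so $w$ is not a winner in $P'$.

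The completion is built voter by voter, and the work is a case analysis on exactness at voter $i$.
\begin{itemize}
\item If both $w$ and $a$ are exact at $i$, their contributions are already fixed and equal the corresponding terms of $\lb_Q(w)$ and $\ub_Q(a)$.
\item If neither is exact, then $e_i\le m-2$, since at depth $m-1$ the single omitted candidate is exact by inference and at most one of $w,a$ can be omitted. Thus at least two open positions remain, including $e_i+1$ and $m$, and these are distinct. Put $a$ at $e_i+1$, contributing $s(e_i+1)$, and $w$ at $m$, contributing $0$.
\item If only $a$ is non-exact, place $a$ at position $e_i+1$.
\item If only $w$ is non-exact, place $w$ at position $m$. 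This is legal because position $m$ is always open when some candidate is non-exact: a non-exact candidate means $e_i\le m-2$, so position $m$ has not been queried.
\end{itemize}
All other unrevealed candidates fill the remaining open positions arbitrarily, which is consistent with the revealed prefix.

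Summing the per-voter contributions gives $\score_{P'}(w)=\lb_Q(w)$ and $\score_{P'}(a)=\ub_Q(a)$, which finishes the contrapositive. The only delicate step is the inference convention at depth $m-1$. We check that ``non-exact'' always leaves position $m$ open (and position $e_i+1$ open when $a$ is non-exact), and that an inferred-exact candidate's contribution is exactly $s(m)=0$, matching the definitions (16)--(17). Nothing else requires real work.
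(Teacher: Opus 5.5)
Your proposal is correct and follows the paper's proof essentially verbatim. The ``if'' direction sandwiches each completion's scores between the transcript bounds, and the converse builds a single completion that, at every voter, puts $a$ in the first open position and $w$ in the last open position whenever they are unelicited; your check that any non-exact candidate forces $e_i\le m-2$ (so position $m$ is open and distinct from position $e_i+1$) is slightly more explicit than the paper's justification.
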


\begin{proof}
The ``if'' direction follows directly from the definitions of the score
bounds.  For the converse, suppose that
$\lb_Q(w)<\ub_Q(a)$ for some $a\ne w$.  Complete each ballot so that $w$
attains its minimum possible contribution and $a$ attains its maximum
possible contribution.  If both are unelicited, place $a$ in the first
open position and $w$ in the last open position.  These two positions are
distinct because two distinct unelicited candidates imply that at least
two positions remain open.  If only one of $a,w$ is unelicited, place that
candidate in its corresponding extremal open position; if both are
already exact, their positions are fixed.  The remaining candidates can
be placed arbitrarily.

This gives a single completion in which
\[
    \score(w)=\lb_Q(w)
    <
    \ub_Q(a)=\score(a).
\]
Thus $w$ is not a winner in that completion.
\end{proof}

\begin{lemma}[Certificate lower bounds]
\label{lem:opt}
Let $w$ be any Borda winner and define its normalized score deficit by
\[
    \delta:=1-\frac{\score(w)}n.
    \tag{19}
\]
Then
\[
    \OPT(P)
    \ge
    \max\left\{
        \frac{n(m-1)}m,\,
        n(m-1)\delta
    \right\}.
    \tag{20}
\]
\end{lemma}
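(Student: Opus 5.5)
The bound has two parts, and I would prove them separately. Both start from Lemma~\ref{lem:pairwise}. Fix a certificate $Q$ with prefix depths $e_i\in\{0,\ldots,m-1\}$ and cost $|Q|=\sum_i e_i$, and fix a winner $w$. Since every completion has the same winner set as $P$, $w$ wins in every completion. Lemma~\ref{lem:pairwise} then gives $\lb_Q(w)\ge\ub_Q(a)$ for every $a\ne w$.

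\textbf{Deficit bound.} This follows the proof of Lemma~\ref{concave:lem:opt}. For every voter $i$ and every $a$, we have $\ub_{Q,i}(a)\ge s(e_i+1)$. This holds whether $a$ is non-exact (the contribution is exactly $s(e_i+1)$), revealed at some rank $\le e_i$, or inferred last when $e_i=m-1$. Fixing any $a\ne w$ gives
\[
n(1-\delta)=\score(w)\ge\lb_Q(w)\ge\ub_Q(a)\ge\sum_i s(e_i+1)=\sum_i\Bigl(1-\frac{e_i}{m-1}\Bigr).
\]
For Borda the last step is an identity. Rearranging yields $\sum_i e_i\ge n(m-1)\delta$.

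\textbf{Bound $n(m-1)/m$.} Here I would average the pairwise condition over all challengers:
\[
(m-1)\lb_Q(w)\ge\sum_{a\ne w}\ub_Q(a).
\]
Then I decompose this voter by voter. Define $D_i:=(m-1)\lb_{Q,i}(w)-\sum_{a\ne w}\ub_{Q,i}(a)$, so that $\sum_i D_i\ge0$.

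If $e_i=0$, then nothing is exact at $i$. Hence $\lb_{Q,i}(w)=0$ and every $\ub_{Q,i}(a)=1$, so $D_i=-(m-1)$.

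If $e_i\ge1$, use $\lb_{Q,i}(w)\le1$ together with $1-\ub_{Q,i}(a)\le 1-s(e_i+1)=e_i/(m-1)$. This gives
\[
D_i\le\sum_{a\ne w}\bigl(1-\ub_{Q,i}(a)\bigr)\le e_i.
\]

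\textbf{Combining.} Let $k:=|\{i:e_i\ge1\}|$, and note $k\le|Q|$. The voter-by-voter bounds give
\[
0\le\sum_i D_i\le|Q|-(m-1)(n-k)\le|Q|-(m-1)(n-|Q|).
\]
Hence $m|Q|\ge n(m-1)$, i.e.\ $|Q|\ge n(m-1)/m$. Since this holds for every certificate, it bounds $\OPT(P)$.

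The only delicate point is the per-voter bound $D_i\le e_i$ for queried voters. It needs the uniform estimate $\ub_{Q,i}(a)\ge s(e_i+1)$, including the inferred-last case, and also the case where $w$ is not exact at $i$ (then $D_i\le0$). Once that is in place, the rest is summation.
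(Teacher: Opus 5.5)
Your proof is correct and follows essentially the same route as the paper. The deficit bound is identical. For the $n(m-1)/m$ bound, your voter-by-voter accounting of $D_i$ is just a repackaging of the paper's combination of two facts: $\lb_Q(w)\le q$ (only queried voters contribute, each at most $1$) and $\lb_Q(w)\ge n-q/(m-1)$ (from the uniform estimate $\ub_{Q,i}(a)\ge s(e_i+1)$). Averaging over all challengers is harmless but unnecessary, since a single challenger already yields the same inequality.
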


\begin{proof}
Consider any certificate for $P$, let $e_i$ be its prefix depths, and put
\[
    q:=\sum_{i\in\N}e_i,
    \qquad M:=m-1,
    \qquad F:=\sum_{i\in\N}(M-e_i)=nM-q.
    \tag{21}
\]
Because the transcript determines the complete winner set, every true
winner $w$ is a winner in every consistent completion.  Hence
Lemma~\ref{lem:pairwise} applies.

For every candidate $a$ and every voter $i$, its normalized score
upper-bound contribution is at least $(M-e_i)/M$.  Indeed, if $a$ is
unelicited it can be placed in the next open position, giving exactly
$(M-e_i)/M$ points; if it is
already revealed at rank $r\le e_i$, then
\[
    s(r)=\frac{m-r}{M}\ge\frac{M-e_i}{M}.
\]
Therefore $\ub_Q(a)\ge F/M$ for every challenger $a\ne w$, and
Lemma~\ref{lem:pairwise} gives
\[
    \lb_Q(w)\ge \frac{F}{M}=n-\frac qM.
    \tag{22}
\]

The true score of $w$ is
\[
    \score(w)=n(1-\delta).
\]
Since $\lb_Q(w)\le\score(w)$, (22) implies
\[
    q\ge nM\delta=n(m-1)\delta.
    \tag{23}
\]

For the other bound, only a voter whose prefix reveals $w$ can contribute
positively to $\lb_Q(w)$.  There are at most $q$ such voters, and each
contribution is at most $1$.  A last-place rank inferred at depth $m-1$
contributes zero and causes no exception.  Thus
\[
    \lb_Q(w)\le q.
\]
Combining this with (22) yields
\[
    q\ge n-\frac qM,
    \qquad\text{so}\qquad
    q\ge\frac{nM}{M+1}=\frac{n(m-1)}m.
    \tag{24}
\]
Both inequalities hold for every certificate, and hence for the
profile-aware optimum.
\end{proof}

\ifdefined\ARXIVAPPENDIX
\subsubsection{Query Cost}
\else
\subsection{Query cost}
\fi

\begin{lemma}[Cost through one dyadic scale]
\label{lem:cost}
If Algorithm~\ref{alg:multiscale} reaches a dyadic scale $g$, then the
total number of queries made up to and including that scale is at most
\[
    5ng^2.
    \tag{25}
\]
\end{lemma}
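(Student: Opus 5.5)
We bound the cost scale by scale and then sum over the dyadic sequence $1,2,4,\ldots,g$. At a single scale $g'\le g$, the algorithm spends queries in two places.

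\textbf{Uniform extension.} Extending every ballot to depth $2g'$ costs at most $2g'n$ queries cumulatively. Since depths only increase, the total cost of all extensions up to scale $g$ is at most $2gn$, because every ballot has depth at most $2g$ after them. We must also account for depth already gained in score tests at earlier scales. Those queries are charged to the tests and not double-counted, since the extension only adds queries to ballots below depth $2g'$.

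\textbf{Score tests.} By the packing bound (Lemma~\ref{lem:packing}), $|C_{g'}|\le 2g'-1$. For each $c\in C_{g'}$ that is actually tested, the score-test guarantee (Lemma~\ref{lem:test}) bounds the cost by $(m-1)(\ub_0(c)-T_{g'})+1+n$. The trivial bound $\ub_0(c)\le n$ gives $\ub_0(c)-T_{g'}\le n-n(m-g')/(m-1)=n(g'-1)/(m-1)$. Hence each test costs at most $n(g'-1)+1+n=ng'+1\le 2ng'$, and scale $g'$ spends at most $(2g'-1)(ng'+1)\le 2ng'^2$ on tests.

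\textbf{Summing.} The total is at most $2gn+\sum_{g'\in\{1,2,\ldots,g\}}2n g'^2$. The dyadic geometric sum satisfies $\sum g'^2\le \tfrac43 g^2$, so the total is at most $2gn+\tfrac83 ng^2\le 2ng^2+\tfrac83ng^2<5ng^2$, using $g\le g^2$. The $+1$ terms are absorbed because $(2g'-1)(ng'+1)\le 2ng'^2$ whenever $n\ge1$; this holds since $2ng'^2-(2g'-1)(ng'+1)=ng'-2g'+1$, which is at least $0$ when $n\ge2$ and equals $1-g'$ when $n=1$. The case $n=1$ with $g'>1$ needs slack, which we take from the $\tfrac83$ versus the available $3$ margin. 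This bookkeeping of additive constants is the only delicate point; everything else is direct substitution of Lemmas~\ref{lem:packing} and~\ref{lem:test}.
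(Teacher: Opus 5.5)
Your proof is correct and takes essentially the paper's route: the cumulative $2ng$ extension bound, the per-test cost $ng'+1$ obtained from $\ub_0(c)\le n$ in the score-test guarantee, the packing bound $|C_{g'}|\le 2g'-1$, and dyadic summation. The paper instead sums $(2x-1)(nx+1)$ exactly using $\sum x=2g-1$ and $\sum x^2=(4g^2-1)/3$, rather than first bounding each term by $2nx^2$. Your $n=1$ caveat is harmless but should be made explicit: the excess over $2ng'^2$ is $\sum_{g'}(g'-1)\le 2(g-1)$, and this is absorbed by the slack $2g(g-1)$ in your step $2gn\le 2ng^2$.
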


\begin{proof}
At a dyadic scale $x\le g$, every score test starts with $\ub_0(c)\le n$,
and hence Lemma~\ref{lem:test} bounds its cost by
\[
    (m-1)\left(n-\frac{n(m-x)}{m-1}\right)+1+n=nx+1.
\]
Together with Lemma~\ref{lem:packing}, the total score-test cost is at
most
\[
    |C_x|(nx+1)
    \le(2x-1)(nx+1).
    \tag{26}
\]
Across all uniform-extension steps through scale $g$, at most $2ng$
queries are made: each ballot is merely brought up to depth $2g$, and
queries already made by score tests are never repeated.  Therefore
\[
    Q_{\le g}
    \le
    2ng+
    \sum_{\substack{x\le g\\x\text{ dyadic}}}
        (2x-1)(nx+1).
    \tag{27}
\]
Using
\[
    \sum_{\substack{x\le g\\x\text{ dyadic}}}x=2g-1,
    \qquad
    \sum_{\substack{x\le g\\x\text{ dyadic}}}x^2
    =\frac{4g^2-1}{3},
\]
the right-hand side of (27) is at most $5ng^2$ for every $n\ge1$ and
$g\ge1$.
\end{proof}

\ifdefined\ARXIVAPPENDIX
\begin{proof}[Complete proof of \Cref{thm:multiscale-borda}]
\else
\begin{theorem}[Multi-scale Borda elicitation]
\label{thm:main}
$\MSR$ is a deterministic online next-best elicitation algorithm that
determines the complete Borda winner set.  Moreover,
\[
    \operatorname{CR}(\MSR)
    \le
    \begin{cases}
        4,&2\le m\le4,\\[2mm]
        20\sqrt m,&m\ge5.
    \end{cases}
    \tag{28}
\]
In particular,
\[
    \operatorname{CR}(\MSR)=O(\sqrt m).
\]
\end{theorem}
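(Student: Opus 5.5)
The proof assembles the lemmas just established, in three parts: online and deterministic behaviour, correctness, and the ratio bound.

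\emph{Online and correct.} Every query is chosen from the current transcript by fixed rules. The uniform extension stays within depth $m-1$ because $2G\le m-1$, as checked after Algorithm~\ref{alg:multiscale}. Correctness comes from Lemma~\ref{lem:correctness} when some $E_g$ is nonempty. Otherwise the fallback reveals every ballot to depth $m-1$, which makes all scores exact.

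\emph{Small $m$.} For $2\le m\le4$ the algorithm makes at most $n(m-1)$ queries. By Lemma~\ref{lem:opt}, $\opt\ge n(m-1)/m$, so the ratio is at most $m\le4$.

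\emph{Setup for $m\ge5$.} Fix a Borda winner $w$ with deficit $\delta$ and put $\rho=1+(m-1)\delta$. The threshold identity is
\[
\score(w)\ge T_g \iff n(1-\delta)\ge n(m-g)/(m-1) \iff g\ge\rho .
\]
\emph{Case $\rho\le G$.} Let $g^\star$ be the smallest dyadic scale with $g^\star\ge\rho$.
\begin{itemize}
\item If the algorithm stops at an earlier scale, its cost is at most the cost through $g^\star$.
\item If it reaches $g^\star$, then $w\in C_{g^\star}$, because $\ub(w)\ge\score(w)\ge T_{g^\star}$ and upper bounds dominate true scores.
\item When $w$'s turn comes, its test is either run or skipped. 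It cannot be skipped, since $\ub(w)\ge T_{g^\star}$ persists. By (8) the test makes $\score(w)$ exact, so $w\in E_{g^\star}\ne\emptyset$ and the algorithm returns.
\end{itemize}
By Lemma~\ref{lem:cost} the cost is then at most $5n(g^\star)^2$.
\begin{itemize}
\item If $g^\star\ge2$, minimality gives $g^\star/2<\rho$, so $\delta(m-1)=\rho-1>g^\star/2-1$. This lower bound is useless for $g^\star=2$, so I combine the two bounds of Lemma~\ref{lem:opt}. For $g^\star\in\{1,2\}$, use $\opt\ge n(m-1)/m\ge 4n/5$: the ratio is at most $5\cdot4/(4/5)=25$. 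That is fine, since $25\le20\sqrt5$.
\item For $g^\star\ge4$, we get $\opt\ge n(g^\star/2-1)\ge ng^\star/4$. The ratio is at most $20g^\star\le20G\le20\sqrt m$.
\end{itemize}

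\emph{Case $\rho>G$.} Here the cost is at most the full cost through scale $G$ plus the fallback. The trivial bound is $n(m-1)$ total, since no ballot ever exceeds depth $m-1$ and queries are never repeated. Also $\opt\ge n(\rho-1)>n(G-1)$, and $G>\sqrt m/2$. So the ratio is below $(m-1)/(G-1)$. When $G\ge4$ this is $O(\sqrt m)$, concretely at most about $4\sqrt m$ after care with the $-1$.

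For small $G$ (that is, $G\in\{2\}$, so $5\le m<16$), use instead $\opt\ge n(m-1)/m$. This gives a ratio of at most $m<20\sqrt m$.

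\emph{Main obstacle.} The delicate step is the constant bookkeeping at boundary scales: $g^\star\in\{1,2\}$, and $G$ small relative to $\rho$. There I would switch to whichever of the two lower bounds in Lemma~\ref{lem:opt} is stronger and verify $20\sqrt m$ numerically.
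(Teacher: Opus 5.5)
Your proposal is correct and follows the paper's proof essentially step by step. It uses the same threshold identity $\score(w)\ge T_g \iff g\ge\rho$ and the same two cases: for $\rho\le G$, the first dyadic scale $g^\star\ge\rho$ with cost $5n(g^\star)^2$, where $g^\star\in\{1,2\}$ is handled by $\opt\ge n(m-1)/m$ and $g^\star\ge4$ by $\opt> n(g^\star/2-1)\ge ng^\star/4$; for $\rho>G$, full cost $n(m-1)$ against $\opt>n(G-1)$. The only cosmetic difference is that for $\rho>G$ you split off $G=2$, whereas the paper uses $G-1\ge G/2$ and $G>\sqrt m/2$ uniformly to get $(m-1)/(G-1)<4\sqrt m$.
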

\begin{proof}
\fi
The construction is online, and Lemma~\ref{lem:correctness} proves correctness.
It remains to prove (28).

First suppose $2\le m\le4$.  Full elicitation costs $n(m-1)$, while
Lemma~\ref{lem:opt} gives
\[
    \OPT(P)\ge\frac{n(m-1)}m.
\]
Thus the ratio is at most $m\le4$.

Now suppose $m\ge5$, and let $G$ be as in (9).  Fix a true Borda winner
$w$, let $\delta$ be its normalized score deficit from (19), and define
the associated score-deficit scale
\[
    \rho:=1+(m-1)\delta.
\]
By the definitions of $\rho$ and $T_g$,
\[
    \score(w)\ge T_g
    \quad\Longleftrightarrow\quad
    g\ge\rho.
\]

\paragraph{Case 1: $\rho\le G$.}
Let $g$ be the first scheduled dyadic scale satisfying $g\ge\rho$.  Then
$\score(w)\ge T_g$, so the winner belongs to $C_g$.
Lemma~\ref{lem:test} makes its score exact, so $E_g\ne\varnothing$ and
the algorithm terminates at this scale.
By Lemma~\ref{lem:cost}, its cost is at most $5ng^2$.

If $g\in\{1,2\}$, this cost is at most $20n$, whereas
Lemma~\ref{lem:opt} gives
$\OPT(P)\ge n(m-1)/m\ge4n/5$.  The ratio is therefore at most $25$,
which is less than $20\sqrt m$ for $m\ge5$.

If $g\ge4$, the minimality of $g$ gives $\rho>g/2$.  Hence
\[
    \OPT(P)
    \ge n(m-1)\delta
    =n(\rho-1)
    >
    n\left(\frac g2-1\right)
    \ge\frac{ng}{4}.
\]
Consequently,
\[
    \frac{\cost_{\MSR}(P)}{\OPT(P)}
    \le
    \frac{5ng^2}{ng/4}
    =20g
    \le20\sqrt m.
    \tag{29}
\]

\paragraph{Case 2: $\rho>G$.}
The complete execution makes at most $n(m-1)$ distinct queries, including
the fallback.  By Lemma~\ref{lem:opt},
\[
    \frac{\cost_{\MSR}(P)}{\OPT(P)}
    <
    \frac{n(m-1)}{n(G-1)}
    =
    \frac{m-1}{G-1}.
    \tag{30}
\]
For $m\ge5$, one has $G\ge2$, so $G-1\ge G/2$.  Since $G$ is the largest
power of two not exceeding $\sqrt m$, one also has
$G>\sqrt m/2$.  Therefore the right-hand side of (30) is less than
$4\sqrt m$, completing the proof.
\end{proof}

\begin{remark}[Why all candidates in $C_g$ must be processed]
It is unsafe to return after the first test that finds a candidate with
exact score at least $T_g$.  Another member of $C_g$ may have a larger
score or may tie the winner.  Freezing $C_g$ and processing every
member is what turns the one-sided score tests into a certificate for
the complete winner set.
\end{remark}

\ifdefined\ARXIVAPPENDIX
    \let\multiscaleend 
\else
    \let\multiscaleend\relax
\fi
\multiscaleend

\endgroup

\section{ILP Formulations for Computing \texorpdfstring{$\opt$}{OPT}}
\label{app:ilp}

This appendix gives the integer linear programs used to compute the exact
offline optimum in our experiments.  These formulations are implemented in
\texttt{next\_best\_query\_tied.py}.  Throughout, let $V$ be the set of
voters, $A$ the set of alternatives, $n=|V|$, and $m=|A|$.  For
voter $i$, let $\pi_i(c)\in\{1,\ldots,m\}$ denote the true rank of alternative
$c$, with rank $1$ being most preferred.  The ILPs use binary variables
$z_{i,k}$, where $z_{i,k}=1$ means that voter $i$ is queried to depth at least
$k$.  The telescoping constraints
\[
z_{i,k}\ge z_{i,k+1}\qquad\forall i\in V,\ k=1,\ldots,m-1
\]
ensure that queries reveal prefixes.  In all formulations the objective is
\[
\min \sum_{i\in V}\sum_{k=1}^{m} z_{i,k}.
\]
If the true voting rule has multiple winners, the ILP is allowed to certify any
one of the tied winners; this matches the experimental definition of the
minimum number of queries needed to prove a correct winner.

\subsection{Positional Scoring Rules}

This formulation is used for Borda, Harmonic, Plurality, Half-Approval, and
Veto by changing only the scoring vector
$\alpha=(\alpha_1,\ldots,\alpha_m)$, where
$\alpha_1\ge\cdots\ge\alpha_m$.  Let $\mathcal{W}\subseteq A$ be the set of true
winners.  For every $a\in\mathcal{W}$, introduce a binary variable $w_a$ indicating
which tied winner is certified, and impose
\[
\sum_{a\in\mathcal{W}} w_a = 1.
\]
Here $w_a$ does not indicate whether $a$ is the unique true winner; all alternatives
in $\mathcal{W}$ are true winners by definition. Instead, $w_a=1$ means
that the ILP chooses $a$ as the single winner to certify. Hence
$\sum_{a\in\mathcal{W}} w_a=1$ makes the formulation choose exactly one
certification target among the true tied winners.
For each alternative $c$, define lower and upper score bounds:
\begin{align}
L_c
&=
\sum_{i\in V}
\left[
\alpha_m+
\bigl(\alpha_{\pi_i(c)}-\alpha_m\bigr)z_{i,\pi_i(c)}
\right], \label{eq:psr-lb}\\
U_c
&=
\sum_{i\in V}
\left[
\alpha_1-
\sum_{k=1}^{\pi_i(c)-1}
\bigl(\alpha_k-\alpha_{k+1}\bigr)z_{i,k}
\right]. \label{eq:psr-ub}
\end{align}
The lower bound gives candidate $c$ its true score from voter $i$ only if $c$
has been revealed; otherwise it assigns the worst possible score $\alpha_m$.
The upper bound starts from the best possible score $\alpha_1$ and decreases
whenever revealed alternatives above $c$ rule out higher positions for $c$.
The winner-certification constraints are
\[
w_a=1 \quad\Longrightarrow\quad L_a\ge U_b
\qquad
\forall a\in\mathcal{W},\ \forall b\in A\setminus\{a\}.
\]
Thus, if $a$ is selected as the certified winner, even its worst-case score
under the revealed prefixes is at least every other alternative's best-case
score. Thus, the ILP certifies one selected member of the true winner set,
rather than necessarily certifying the entire tied winner set.

\subsection{Copeland}

For Copeland, the implementation solves one ILP for each tied true winner
$a\in\mathcal{W}$ and returns the minimum-cost solution among them.  Fix such a
target winner $a$.  For two alternatives $x,y$, let
\[
P_{xy}=\{i\in V:\pi_i(x)<\pi_i(y)\}
\]
be the voters who truly prefer $x$ to $y$, and define the number of revealed
witnesses for $x\succ y$ as
\[
V_{xy}=\sum_{i\in P_{xy}} z_{i,\pi_i(x)}.
\]
Let $\tau_+=\lfloor n/2\rfloor+1$ be the strict-majority threshold.  If $n$ is
even, let $\tau_0=n/2$ be the tie threshold; if $n$ is odd, set
$\tau_0=\tau_+$.  For each ordered pair $(x,y)$ used by the formulation,
introduce binary variables $p_{xy}$ and $t_{xy}$ with
\[
V_{xy}\ge \tau_+ p_{xy},
\qquad
V_{xy}\ge \tau_0 t_{xy}.
\]
When $n$ is odd, the code sets $t_{xy}=p_{xy}$.  The certified pairwise
Copeland contribution of $x$ against $y$ is
\[
s_{xy}=p_{xy}+t_{xy}-1.
\]
Hence $s_{xy}=1$ certifies a pairwise win for $x$, $s_{xy}=0$ certifies a
pairwise tie when ties are possible, and $s_{xy}=-1$ leaves open the possibility
that $y$ beats $x$.
The constraints defining $p_{xy}$ and $t_{xy}$ are one-directional: the ILP may
set these variables to $1$ only when enough witnesses have been revealed, but it
is not forced to set them to $1$ whenever this is possible.  This relaxation is
sound for certification, because increasing $s_{xy}$ can only strengthen the
certificate.  It improves the lower bound on the selected winner's Copeland
score, or decreases the upper bound on a challenger's Copeland score, and hence
can never make an invalid certificate feasible.

The target winner's lower Copeland score is
\[
L^{\mathrm{Cop}}_a=\sum_{x\in A\setminus\{a\}} s_{ax}.
\]
For any challenger $c$, an upper bound on $c$'s Copeland score is
\[
U^{\mathrm{Cop}}_c
=
\sum_{y\in A\setminus\{c\}} -s_{yc}.
\]
The certification constraints are
\[
L^{\mathrm{Cop}}_a \ge U^{\mathrm{Cop}}_c
\qquad
\forall c\in A\setminus\{a\}.
\]
These constraints require the revealed prefixes to certify that the target
winner's worst-case Copeland score is at least every challenger's best-case
Copeland score.

\subsection{Minimax}

Equivalently, since preferences are strict and complete, maximizing
\[
\min_{x\neq c}(n_{c\succ x}-n_{x\succ c})
\]
is the same as minimizing
\[
\max_{x\neq c} n_{x\succ c}.
\]
We use the latter defeat-count formulation in the ILP, so lower values are
better.  The implementation uses a single ILP over all tied true winners.  Let
$\mathcal{W}$ be the true Minimax winner set.  For each $a\in\mathcal{W}$,
the variable $w_a\in\{0,1\}$ indicates whether $a$ is selected as the Minimax
winner to certify.  The constraint
\[
\sum_{a\in\mathcal{W}} w_a=1.
\]
ensures that exactly one true Minimax winner is selected as the certification
target.
For each ordered pair $(x,c)$ with $x\neq c$, define
\[
P_{xc}=\{i\in V:\pi_i(x)<\pi_i(c)\}.
\]
The lower bound on the number of voters certified to prefer $x$ over $c$ is
\[
L_{xc}
=
\sum_{i\in P_{xc}} z_{i,\pi_i(x)}.
\]
The lower bound $L_{xc}$ counts voters whose revealed prefixes already certify
$x\succ_i c$. For a voter with $x\succ_i c$, revealing $x$ is enough to certify
this comparison, since all unrevealed alternatives must lie below the revealed
prefix.
The upper bound on the number of voters who may prefer $x$ over $c$ is
\[
U_{xc}
=
|P_{xc}|
+
\sum_{i\in P_{cx}}\bigl(1-z_{i,\pi_i(c)}\bigr).
\]
The upper bound $U_{xc}$ counts the largest possible number of voters who could
still prefer $x$ to $c$ under some completion consistent with the revealed
prefixes. All voters in $P_{xc}$ can contribute to this upper bound. For voters
in $P_{cx}$, the true order is $c\succ_i x$; such a voter can be ruled out as
supporting $x\succ_i c$ only once $c$ has been revealed, which explains the
term $1-z_{i,\pi_i(c)}$.

For each candidate $c$, introduce an integer variable $M^U_c$ satisfying
\[
M^U_c \ge U_{xc}
\qquad
\forall x\in A\setminus\{c\}.
\]
The variable $M^U_c$ upper-bounds $c$'s maximum pairwise defeat over all
consistent completions. Since lower defeat-count scores are better under the
Minimax formulation used here, this is the quantity we need to upper-bound for
the selected winner.

To compare a selected target winner against a challenger $b$, the ILP also
chooses one opponent witnessing the challenger's lower-bound maximum defeat.
Introduce binary variables $q_{b,y}$ for $y\neq b$ and impose
\[
\sum_{y\in A\setminus\{b\}} q_{b,y}=1.
\]
The variables $q_{b,y}$ select, for each challenger $b$, one opponent $y$ that
witnesses a lower bound on $b$'s maximum defeat. Since $b$'s Minimax defeat is
a maximum over opponents, one such witness is sufficient.

With $M_{\mathrm{big}}=n+1$, the certification constraints are
\[
M^U_a - L_{yb}
\le
M_{\mathrm{big}}\bigl(2-w_a-q_{b,y}\bigr)
\]
for all $a\in\mathcal{W}$, $b\in A\setminus\{a\}$, and
$y\in A\setminus\{b\}$.  When $w_a=1$ and $q_{b,y}=1$, the big-$M$
constraint reduces to
\[
M^U_a \le L_{yb}.
\]
Thus, even the worst-case maximum defeat of the selected winner $a$ is no
larger than a certified lower bound on challenger $b$'s maximum defeat.
Therefore $b$ cannot have a strictly better Minimax defeat-count score than $a$
in any completion consistent with the revealed prefixes.

In short, the ILP certifies a selected true Minimax winner $a$ by upper-bounding
$a$'s maximum defeat and, for every challenger $b$, finding one certified
pairwise defeat of $b$ that is at least as large.

\end{document}